\crefname{observation}{Observation}{Observations}
\Crefname{observation}{Observation}{Observations}
\crefname{claim}{Claim}{Claims}
\Crefname{claim}{Claim}{Claims}
\theoremstyle{plain}
\newtheorem{theorem}{Theorem}
\newtheorem{corollary}[theorem]{Corollary}
\newtheorem{lemma}[theorem]{Lemma}
\newtheorem{observation}[theorem]{Observation}
\newtheorem{proposition}[theorem]{Proposition}
\theoremstyle{definition}
\newtheorem{definition}[theorem]{Definition}
\newtheorem{remark}[theorem]{Remark}
\newtheoremstyle{claimstyle}
    {\topsep}
    {\topsep}
    {\itshape}
    {0pt}
    {\bfseries}
    {\textbf{.} }
    {5pt plus 1pt minus 1pt}
    {\textbf{\thmname{#1}\thmnumber{ #2}}\thmnote{ (#3)}}
\theoremstyle{claimstyle}
\newtheorem{claim}[theorem]{Claim}
\newenvironment{claimproof}[1][\proofname{} of Claim]{
    \pushQED{\qed}
    \normalfont %
    \trivlist
    \item[\hskip\labelsep\itshape#1.]\ignorespaces
}{
    
    \popQED\endtrivlist
    
}
\newcommand{\Q}{\ensuremath{\mathbb{Q}}\xspace}
\newcommand{\F}{\ensuremath{\mathbb{F}}\xspace}
\providecommand{\R}{}
\renewcommand{\R}{\ensuremath{\mathbb{R}}\xspace}
\newcommand{\calA}{\mathcal{A}}
\newcommand{\calB}{\mathcal{B}}
\newcommand{\calL}{\mathcal{L}}
\newcommand{\ER}{\texorpdfstring{\ensuremath{\exists\R}}{ER}\xspace}
\newcommand{\problemname}[1]{\textnormal{\textsc{#1}}\xspace}
\newcommand{\ETR}{\problemname{ETR}}
\newcommand{\trainNN}{\problemname{Train-F2NN}}
\newcommand{\ETRINV}{\problemname{ETR-Inv}}
\newcommand{\ERM}{\problemname{EmpiricalRiskMinimization}}
\renewcommand{\SAT}{\problemname{SAT}}
\renewcommand{\MIP}{\problemname{MIP}}
\newcommand{\wordRAM}{\textnormal{word RAM}\xspace}
\newcommand{\realRAM}{\textnormal{real RAM}\xspace}
\newcommand{\fullyconnected}{fully connected\xspace}
\newcommand{\twolayer}{two-layer\xspace}
\newcommand{\ReLU}{\ensuremath{\mathrm{ReLU}}\xspace}
\newcommand{\CPWL}{continuous piecewise linear\xspace}
\DeclarePairedDelimiter\abs{\lvert}{\rvert}
\newcommand{\dequiv}{:\equiv}
\begin{document}

\title{Training Fully Connected Neural Networks is \texorpdfstring{\ER}{ER}-Complete\footnote{
    An extended abstract of this paper appeared in \emph{Advances in Neural Information Processing Systems 36 (NeurIPS 2023)}~\cite{Bertschinger2023_TrainNN}.
}}

\author{Daniel Bertschinger}
\affil{ETH Zürich, Switzerland\\ \texttt{daniel.bertschinger@inf.ethz.ch}}
\author{Christoph~Hertrich\thanks{Part of the work was done while Christoph Hertrich was affiliated with London School of Economics, UK, and Goethe-Universität Frankfurt, Germany. Christoph Hertrich was supported by the European Research Council (ERC) under the European Union's Horizon 2020 research and innovation programme (grant agreements ScaleOpt--757481 and ForEFront--615640).}}
\affil{Université libre de Bruxelles, Belgium, \texttt{christoph.hertrich@ulb.be}}
\author{Paul~Jungeblut}
\affil{Karlsruhe Institute of Technology, Germany\\ \texttt{paul.jungeblut@kit.edu}}
\author{Tillmann~Miltzow\thanks{Tillmann Miltzow is supported by the Netherlands Organisation for Scientific Research (NWO) under project numbers~016.Veni.192.250 and VI.Vidi.213.150.}}
\affil{Utrecht University, The Netherlands\\ \texttt{t.miltzow@uu.nl}}
\author{Simon~Weber\thanks{Simon Weber is supported by the Swiss National Science Foundation under project no.~204320.}}
\affil{ETH Zürich, Switzerland\\ \texttt{simon.weber@inf.ethz.ch}}

\date{}

\maketitle

\begin{abstract}
    We consider the problem of finding weights and biases for a \twolayer \fullyconnected neural network to fit a given set of data points as well as possible, also known as \ERM.
    Our main result is that the associated decision problem is \ER-complete, that is, polynomial-time equivalent to determining whether a multivariate polynomial with integer coefficients has any real roots.
    Furthermore, we prove that algebraic numbers of arbitrarily large degree are required as weights to be able to train some instances to optimality, even if all data points are rational.
    Our result already applies to \fullyconnected
    instances with two inputs, two outputs, and  one hidden layer of \ReLU neurons. 
    Thereby, we strengthen a result by \citeauthor{Abrahamsen2021_NeuralNetworks} [NeurIPS 2021].
    A consequence of this is that a combinatorial search algorithm like the one by \citeauthor{Arora2018_Understanding} [ICLR 2018] is impossible for networks with more than one output dimension, unless $\NP = \ER$.
\end{abstract}

\section{Introduction}
The usage of neural networks in modern computer science is ubiquitous.
They are arguably the most powerful tool at our hands in machine learning~\cite{Goodfellow2016_DeepLearning}.
One of the most fundamental algorithmic problems associated with neural networks is to train a neural network given some training data. 
For arbitrary network architectures, \citeauthor{Abrahamsen2021_NeuralNetworks}~\cite{Abrahamsen2021_NeuralNetworks} showed that the problem is \ER-complete already for \twolayer neural networks and linear activation functions.
The complexity class \ER is defined as the family of algorithmic problems that are polynomial-time equivalent to finding real roots of multivariate polynomials with integer coefficients.
Under the commonly believed assumption that~\ER is a strict superset of \NP, this implies that training a neural network is harder than \NP-complete problems.

The result of \citeauthor{Abrahamsen2021_NeuralNetworks}~\cite{Abrahamsen2021_NeuralNetworks} has one major downside, namely that the network architecture is \emph{adversarial}:
The hardness inherently relies on choosing a network architecture that is particularly difficult to train.
Their instances could be easily trained if the networks were \fullyconnected.
This stems from the fact that they use the identity function as the activation function, which reduces the problem to matrix factorization.
While intricate network architectures, e.g., convolutional and residual neural networks, pooling, autoencoders and generative adversarial neural networks are common in practice, they are usually designed in a way that facilitates training rather than making it difficult~\cite{Goodfellow2016_DeepLearning}.
We strengthen the result in~\cite{Abrahamsen2021_NeuralNetworks} by showing \ER-completeness for \emph{\fullyconnected} \twolayer neural networks.
This shows that \ER-hardness does not stem from one specifically chosen worst-case architecture, but is inherent in the neural network training problem itself.
Although a host of different architectures are used in practice, \fullyconnected \twolayer neural networks are arguably the most basic ones, and they are often part of more complicated network architectures~\cite{Goodfellow2016_DeepLearning}.
We show hardness even for the case of \fullyconnected \twolayer \ReLU neural networks with exactly two input and output dimensions.

Remarkably, with only one instead of two output dimensions, the problem is in \NP.
This follows from a combinatorial search algorithm by \citeauthor{Arora2018_Understanding}~\cite{Arora2018_Understanding}.
Our result explains why their algorithm was never successfully generalized to more complex network architectures: adding only a second output neuron significantly increases the computational complexity of the problem, from being contained in \NP{} to being \ER-complete.

To achieve this result, our reduction follows a completely novel approach compared to the reduction in~\cite{Abrahamsen2021_NeuralNetworks}.
Instead of encoding polynomial inequalities into an adversarial network architecture, we make use of the underlying geometry of the functions computed by \twolayer neural networks and utilize the fact that their different output dimensions have nonlinear dependencies.

\paragraph{Outline}
We start by formally introducing neural networks, the training problem, and the existential theory of the reals in \cref{sec:preliminaries}.
Thereafter, we present our main results in \cref{sec:results}.
In \cref{sec:discussion} we provide different perspectives on how to interpret our findings, including an in-depth discussion of the strengths and limitations.
We cover further related work in \cref{sec:related_work}.
We present the key ideas we use to prove \ER-hardness in \cref{sec:key_ideas}.
Finally, the complete proof details for \ER-containment, \ER-hardness, and algebraic universality are contained in \Cref{sec:membership,sec:hardness,sec:universality}, respectively.

\section{Preliminaries}
\label{sec:preliminaries}

\paragraph{Neural Networks}
All neural networks considered in this paper have a very simple architecture.
For ease of presentation, we do not define neural networks and their training problem in full generality here, but restrict ourselves to the simple architectures we need.

\begin{definition}
    A \emph{\fullyconnected \twolayer neural network}~$N = (S \,\dot{\cup}\, H \,\dot{\cup}\, T, E)$ is a directed acyclic graph (the \emph{architecture}) with real-valued edge weights.
    The vertices, called \emph{neurons}, are partitioned into the \emph{inputs}~$S$, the \emph{hidden neurons}~$H$ and the \emph{outputs}~$T$.
    All possible edges from~$S$ to~$H$, as well as all possible edges from~$H$ to~$T$ are present.
    Additionally, each hidden neuron has a real-valued \emph{bias} and an 
    \emph{activation function} $(\R \to \R)$.
\end{definition}

The probably most commonly used activation function~\cite{Arora2018_Understanding,Glorot2011_DeepSparse,Goodfellow2016_DeepLearning} is the \emph{rectified linear unit (ReLU)} defined as
\begin{align*}
    \ReLU \colon \R &\to \R \\
    x &\mapsto \max\{0,x\}
    \text{.}
\end{align*}
See \cref{fig:neural_networks_definition} for a small \fullyconnected \twolayer \ReLU neural network.

\begin{figure}[htb]
    \centering
    \includegraphics{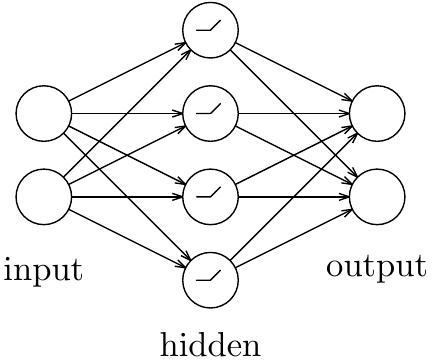}
    \caption{
        A \fullyconnected \twolayer neural network as studied in this paper.
        The symbol inside the hidden neurons expresses the \ReLU activation function.
    }
    \label{fig:neural_networks_definition}
\end{figure}

\medskip
\noindent
Given a neural network architecture~$N$ as defined above, let us fix an arbitrary ordering on~$S$ and~$T$.
Then~$N$ realizes a function~$f(\cdot, \Theta) \colon \R^{\abs{S}} \to \R^{\abs{T}}$, where~$\Theta$ denotes the weights and biases that parameterize the function.
For $x \in \R^{\abs{S}}$, we define~$f(\cdot, \Theta)$ inductively:
The $i$-th input neuron forwards the $i$-th component of~$x$ to all its outgoing neighbors.
Each hidden neuron forms the weighted sum over all incoming values, adds its bias, applies its activation function to this sum, and forwards it to all outgoing neighbors.
An output neuron also forms the weighted sum over all incoming values, but it neither adds a bias nor applies any activation function.

For our purposes, we define the following special case of \ERM, i.e., a restriction of the general neural network training problem.

\begin{definition}
    \label{def:trainNN}
    Training a \fullyconnected \twolayer neural network is the following decision problem, denoted by \trainNN:
    \begin{description}[nosep]
        \item[Input:]
        A~$5$-tuple $(N, \varphi, D, \gamma, c)$:
        \begin{itemize}[nosep]
            \item $N = (S \,\dot{\cup}\, H \,\dot{\cup}\, T, E)$ is the \fullyconnected \twolayer network architecture.
            \item $\varphi \colon \R \to \R$ is the activation function for all hidden neurons, computable in polynomial time on a \realRAM.
            \item $D \subseteq \Q^{\abs{S}} \times \Q^{\abs{T}}$ is the \emph{training data}, i.e., a set of~$n$ \emph{data points} of the form~$(x;y)$.
            Here,~$y$ is called the \emph{label} of the data point.
            \item $\gamma \in \Q_{\geq 0}$ is the \emph{target error}.
            \item $c \colon \R^{\abs{T}} \times \R^{\abs{T}} \to \R_{\geq 0}$ is an honest\footnote{A loss function is called \emph{honest} if it returns zero if and only if the data is fit exactly.} \emph{loss function}, computable in polynomial time on a \realRAM.
        \end{itemize}
        
        \item[Question:]
        Are there weights and biases~$\Theta$ such that
        \(
            \sum\limits_{\mathclap{(x;y) \in D}} c\bigl(f(x, \Theta), y\bigr) \leq \gamma
            \text{?}
        \)
    \end{description}
\end{definition}

Our \ER-hardness proof works for any fixed honest loss function, since our reduction only constructs instances with $\gamma=0$.
On the other hand, our \ER-membership proof only works for activation and loss functions that are polynomial-time computable on a \realRAM (see \cref{sec:membership} for the necessary details).

\paragraph{Existential Theory of the Reals}
The complexity class \ER has gained a lot of interest in recent years.
It is defined via its canonical complete problem \ETR (short for \emph{Existential Theory of the Reals}) and contains all problems that polynomial-time many-one reduce to it.
An \ETR instance consists of an integer~$n$ and a sentence of the form
\[
    \exists X_1, \ldots, X_n \in \R :
    \varphi(X_1, \ldots, X_n),
\]
where~$\varphi$ is a well-formed and quantifier-free formula consisting of polynomial equations and inequalities in the variables with integer coefficients encoded in binary, and the logical connectives $\{\land, \lor, \lnot\}$.
The goal is to decide whether this sentence is true.
As an example, consider the formula $\varphi(X,Y) :\equiv X^2 + Y^2 \leq 1 \land Y^2 \geq 2X^2 - 1$;
among (infinitely many) other solutions, $\varphi(0,0)$ evaluates to true, witnessing that this is a yes-instance of \ETR.
It is known that
\[
    \NP \subseteq \ER \subseteq \PSPACE
    \text{,}
\]
and it is widely believed that both inclusions are strict.
Here, the first inclusion follows because a \problemname{SAT} instance can easily be expressed as an equivalent \ETR instance~\cite{Shor1991_Stretchability}.
The second inclusion was proved by \citeauthor{Canny1988_PSPACE}~\cite{Canny1988_PSPACE}.

Note that the complexity of problems involving real numbers was studied in various contexts.
To avoid confusion, let us emphasize that the underlying machine model for \ER (over which sentences need to be decided and where reductions are performed in) is the standard binary \wordRAM (or equivalently, a Turing machine), just like for \P, \NP, and \PSPACE.
It is \emph{not} the \realRAM~\cite{Erickson2022_SmoothingTheGap} or the Blum-Shub-Smale model~\cite{Blum1989_ComputationOverTheReals}.

\section{Results}
\label{sec:results}

Our main result is the following \lcnamecref{thm:er_complete} establishing \ER-completeness of \trainNN even in very restricted cases:

\begin{theorem}
    \label{thm:er_complete}
    \trainNN is \ER-complete, even if
    \begin{itemize}[nosep]
        \item there are only two input neurons,
        \item there are only two output neurons,
        \item the number of data points is linear in the number of hidden neurons,
        \item the data has only~$13$ different labels,
        \item the target error is $\gamma=0$ and
        \item the \ReLU activation function is used.
    \end{itemize}
\end{theorem}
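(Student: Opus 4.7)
The proof has two directions, and I would handle them with rather different machinery. \ER-membership is the conceptually easier direction. I would write an \ETR sentence whose existentially quantified variables are the entries of the parameter vector $\Theta$ together with polynomially many auxiliary variables. Because the activation function $\varphi$ and the loss~$c$ are polynomial-time computable on a \realRAM, their evaluations on each of the $n$ training data points can be unrolled into polynomial-size straight-line computations; each elementary \realRAM step (arithmetic, comparison, branching) is simulated by a constant number of \ETR atoms on fresh auxiliary variables. The total loss inequality $\sum c(\cdot,\cdot)\le\gamma$ becomes one final polynomial inequality, so the constructed \ETR instance is polynomial in the size of the \trainNN input.

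For \ER-hardness I would reduce from \ETR, most likely through a convenient bounded-variable variant such as \ETRINV whose atoms are restricted to addition and multiplicative-inverse constraints; this is a standard starting point for geometric reductions. Given such a formula on $n$ real variables, I would build an instance of \trainNN with exactly two input and two output neurons, a number of hidden neurons and data points linear in $n$, and labels drawn from a bounded palette. The training instance should be tailored so that the only way to meet target error $\gamma=0$ is to choose parameters $\Theta$ whose geometric configuration encodes a satisfying assignment, making this a correctness-preserving many-one reduction.

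The geometric core of the argument is the following observation. A \fullyconnected \twolayer \ReLU network with two inputs realizes, in each output coordinate, a \CPWL function $\R^2\to\R$ whose breaklines are the lines where a hidden neuron switches between its linear and zero regime; these are determined entirely by the weights and biases feeding into the hidden layer. Because both output coordinates read from the \emph{same} hidden neurons, their breakline arrangements coincide line-for-line, even though the affine pieces on each region may differ between the two coordinates. Treating the two outputs simultaneously therefore couples two \CPWL functions through a shared, rigid combinatorial arrangement, and this coupling is inherently nonlinear: moving one breakline constrains admissible slopes of \emph{both} outputs at once. I would exploit this to design gadgets that realize specified points as intersections of chosen breaklines, that pin breakline positions and slopes via particular data points, and that translate \ETRINV atoms into equalities between the two output values at distinguished data points. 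Bounding the number of distinct labels by a universal constant such as~$13$ should fall out of the gadget catalogue once the set of roles a data point can play is enumerated.

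The main obstacle I would expect is the \emph{rigidity} step of the hardness reduction. The weights and biases live in a continuous, high-dimensional space, and small perturbations can rearrange the breakline combinatorics, introduce spurious linear regions, collapse two breaklines, or let a distinguished data point \enquote{escape} its intended region. To conclude \ER-hardness the reduction must force \emph{every} zero-error parameterization into the intended combinatorial arrangement, not just exhibit one. Doing this while keeping the network \fullyconnected, so that every hidden neuron contributes to both output coordinates in \emph{all} gadgets simultaneously, is what distinguishes the present setting from the adversarial-architecture reduction of \citeauthor{Abrahamsen2021_NeuralNetworks} and is the step I would devote the most care to; the linear data-point count and the constant label budget would then be by-products of a sufficiently modular gadget design.
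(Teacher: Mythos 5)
Your overall strategy matches the paper's: \ER-membership via simulating the \realRAM evaluation (the paper phrases this as a polynomial-time real verification algorithm, which amounts to the same unrolling into an \ETR sentence that you describe), and \ER-hardness by reducing from \ETRINV using the geometry of two coupled \CPWL functions $f^1, f^2 \colon \R^2 \to \R$ that share a single breakline arrangement determined by the first layer. You also correctly identify rigidity --- forcing \emph{every} zero-error parameterization into the intended combinatorics --- as the crux.

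However, the hardness direction as written is a plan, not a proof: the entire content of the theorem lives in the gadget constructions and their rigidity, and you supply neither. Three ideas are missing. First, the encoding of a variable: the paper does \emph{not} encode variables in individual weights (it explicitly avoids this because many parameterizations realize the same function) but as the \emph{slope} of a trapezoidal bump forced by about a dozen parallel data lines forming a stripe in $\R^2$; non-collinear triples of consecutive data points each force a breakline of a prescribed convex/concave type inside a prescribed interval, and a global budget of exactly $m$ hidden neurons --- with each gadget provably consuming its quota of breaklines --- is what delivers rigidity. Without this counting argument nothing prevents spurious breaklines from destroying the gadgets. Second, the mechanism for nonlinearity: the inversion gadget admits only five breaklines for what is essentially a superposition of two variable gadgets, so $f^1$ and $f^2$ must share a middle breakline whose position simultaneously satisfies $3 = 3/s_X + 3/s_Y$, i.e., $s_X s_Y = s_X + s_Y$, which under the encoding $X = s_X - 1$, $Y = s_Y - 1$ is exactly $XY = 1$; your proposal never explains how \enquote{equalities between the two output values at distinguished data points} would yield a product constraint rather than only linear ones. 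Third, linear constraints are enforced via measuring lines at intersections of stripes, and the conceptual data lines and weak data points must themselves be realized by finitely many ordinary data points (three per line, plus auxiliary lower bound gadgets), which is where the $13$-label count and the linear bound on the number of data points actually come from. None of these steps is routine, so the hardness direction cannot be considered established by the proposal.
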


Let us note that the combination of all restrictions in \cref{thm:er_complete} describes a special case of \trainNN.
Proving this special case \ER-hard also implies \ER-hardness of the general version of \trainNN, and even its general case \ERM.
For example, \ER-hardness also holds for more input/output neurons, more data points, more labels, arbitrary~$\gamma \geq 0$, and other activation functions.

Additionally, our \ER-hardness reduction implies \emph{algebraic universality}:

\begin{theorem}
    \label{thm:universality}
    Let~$\alpha \in \R$ be an algebraic number.
    Then there exists an instance of \trainNN fulfilling all the restrictions in \Cref{thm:er_complete}, which has a solution with weights and biases~$\Theta$ from~$\Q[\alpha]$, but no solution when the weights and biases~$\Theta$ are restricted to a field~$\F$ not containing~$\alpha$.
\end{theorem}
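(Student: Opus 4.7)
The plan is to obtain \Cref{thm:universality} as a direct by-product of the reduction used to prove \Cref{thm:er_complete}, by feeding it a carefully chosen source instance for each algebraic $\alpha$. Concretely, I would fix an algebraic number $\alpha \in \R$ with minimal polynomial $p(X) \in \Z[X]$ and rational numbers $q_1 < q_2$ isolating $\alpha$ from the other real roots of $p$. The \ETR instance $\exists X \in \R : p(X) = 0 \land q_1 \leq X \land X \leq q_2$ then has $\{\alpha\}$ as its full real solution set, and any field containing a solution must contain $\alpha$. Known algebraically universal variants such as \ETRINV could also be used as the starting point, but this simple instance suffices because the subsequent reduction does the heavy lifting.

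Next, I would apply the reduction from \cref{sec:hardness} to this \ETR instance and verify that it is \emph{field-preserving} in both directions. In the forward direction, the completeness of the reduction should show that a solution $X = \alpha$ lifts to weights and biases $\Theta$ whose entries are rational functions (with rational coefficients) of $\alpha$, hence lie in $\Q[\alpha]$. In the backward direction, I would trace through the soundness analysis and check that each \ETR variable can be recovered from any valid $\Theta$ by rational arithmetic on the entries of $\Theta$, e.g., by reading it off a designated weight or taking a rational combination of a bounded number of weights. If $\Theta$ has entries in a field $\F$, the extracted \ETR solution then also lies in $\F$, so $\F$ must contain $\alpha$.

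The main obstacle will be establishing this \emph{rational extractability} of \ETR variables from \trainNN solutions. Soundness arguments for \ER-hardness reductions often identify a geometric or combinatorial configuration of breakpoints and slopes that the trained network must realize; one has to make sure that the map from this configuration back to the \ETR variables involves only rational operations and never, for instance, implicitly requires solving a polynomial equation over $\F$, which could introduce spurious algebraic extensions. Provided the reduction is designed so that each variable corresponds transparently to some explicit weight or bias (a feature that can typically be arranged, if necessary by inserting auxiliary gadgets that copy variables directly into weights), the extractability property follows, and then \Cref{thm:universality} is immediate: the resulting \trainNN instance admits a solution in $\Q[\alpha]$ but none in any field $\F$ with $\alpha \notin \F$.
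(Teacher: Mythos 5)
Your proposal is essentially the paper's own argument: start from an algebraically universal source instance and verify that the reduction preserves the field of definition in both directions (this is the paper's \cref{lem:algebraic_translation}, whose backward direction is exactly your ``rational extractability''---each variable is read off as the rational expression $a_{2,i}\cdot c_{i,1}-1$ in the weights of its canonical variable gadget). The one correction is that your primary source instance $p(X)=0 \land q_1 \leq X \land X \leq q_2$ is not an \ETRINV instance (and its solution need not lie in the promised range $\bigl[\frac{1}{2},2\bigr]$), so it cannot be fed to the reduction of \cref{sec:hardness} directly; you must take your own stated fallback and start from the known algebraic universality of \ETRINV (\cref{thm:ETRINV_universality}), which is precisely what the paper does.
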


Here, $\Q[\alpha]$ is the smallest field extension of $\Q$ containing $\alpha$. 
This means that there are training instances for which all global optima require irrational weights or biases, even if all data points are integral.

Let us note that algebraic universality and \ER-completeness are independent concepts:
While algebraic universality is known to hold for various \ER-complete problems~\cite{Abrahamsen2019_Toolbox}, this is not an automatism:
It cannot occur in problems where the solution space is open.
On the other hand, reductions to prove algebraic universality do not need to be in polynomial time, i.e., algebraic universality can be established without proving \ER-hardness at the same time.

\section{Discussion}
\label{sec:discussion}

There is already a wide body of literature about the computational hardness of \ERM, see \cref{sec:related_work} below.
In order to clarify our contribution, we use this \lcnamecref{sec:discussion} to discuss our results from various perspectives, pointing out strengths and limitations.

\paragraph{Implications of \ER-Completeness}
Our definition of \ER relies on \ETR as a canonical complete problem.
While \ETR is known to be \NP-hard and in \PSPACE~\cite{Canny1988_PSPACE, Shor1991_Stretchability}, its precise computational complexity is still unknown (and it is considered likely that it is neither \NP- nor \PSPACE-complete).
Therefore, proving a problem to be \ER-complete only tells us its difficulty relative to \ETR.
Nevertheless, proving \ER-completeness is a valuable contribution, even for problems that are known to be \NP-hard, because of the following implications.

As pointed out by \citeauthor{Schaefer2010_GeometryTopology}, \ER-completeness of a problem shifts the focus away from the problem itself to its underlying algebraic nature:
\enquote{Knowing that a problem is \ER-complete does not tell us more than that it is \NP-hard and in \PSPACE{} in terms of classical complexity, but it does tell us where to start the attack:~[...]
A solution will likely not come out of graph drawing or graph theory but out of a better understanding of real algebraic geometry and logic.}~\cite{Schaefer2010_GeometryTopology}

Knowing that a problem is \ER-complete also hints towards the algorithmic challenges that need to be overcome.
While many \NP-complete problems can be solved well in practice by extremely optimized off-the-shelf \SAT- or \MIP-solvers, no such general purpose tools are available for \ER-complete problems.
In fact, to the best of our knowledge, finding the optimal solution for any \ER-complete problem requires algorithms from real algebraic geometry, for example a decision procedure for existentially quantified first-order sentences (after reducing the problem to \ETR).
However, these algorithms are very inefficient (at least exponential in the number of variables) and therefore infeasible for large instances~\cite{Passmore2009_Experiments}.

However, let us stress that \ER-completeness does not rule out hope for good heuristics:
The \ER-complete art gallery problem can be solved well in practice using custom heuristics, some of which are also used in combination with integer programming solvers~\cite{Rezende2016_ArtGalleries}.
Under additional assumptions, performance guarantees can be proven~\cite{Hengeveld2021_ArtGallery}.
However, these heuristics are specifically tailored towards the art gallery problem.
Identifying reasonable assumptions for other \ER-complete problems in order to obtain good heuristics (possibly even with performance guarantees) is an important open question.

One meta-heuristic that can be used often to get \enquote{good} solutions for \ER-complete problems is gradient descent.
A prominent example in our context is training neural networks, where a bunch of different gradient descent variants powered by backpropagation are nowadays capable of training neural networks containing millions of neurons.
In general, we do not get any approximation or runtime guarantees when using gradient descent, but under the right additional assumptions proving such guarantees are sometimes possible~\cite{Brutzkus2017_OptimalGD}.
Gradient descent has also been applied to \ER-complete problems from other areas, for example \emph{graph drawing}~\cite{Ahmed2022_GraphDrawingGD}.
Still, these gradient descent approaches are tailored towards the specific problem at hand.
It would be very desirable to have general purpose solvers, similar to \SAT- or \MIP-solvers for problems in \NP.

\paragraph{Relation to Learning Theory}
In this paper, we purely focus on the computational complexity of \ERM, that is, minimizing the \emph{training error}.
In the machine learning practice, one usually desires to achieve low \emph{generalization error}, which means to use the training data to achieve good predictions on unseen test samples.

To formalize the concept of the generalization error, one needs to combine the computational aspect with a statistical one.
There are various models to do so in the literature, the most famous one being \emph{probably approximately correct} (PAC) learnability~\cite{Shalev2014_UnderstandingML,Valiant1984_Theory}.
While \ERM and learnability are two different questions, they are strongly intertwined; see \cref{sec:related_work} for related work in this context.
Despite the close connections between \ERM and learnability, to the best of our knowledge, the \ER-hardness of the former has no direct implications on the complexity of the latter.
Still, since \ERM is the most common learning paradigm in practice, our work is arguably also interesting in the context of learning theory.

\paragraph{Required Precision of Computation}
An implication of \ER-hardness of \trainNN is that for some instances every set of weights and biases exactly fitting the data needs large precision, actually a superpolynomial number of bits to be written down.
The algebraic universality of \trainNN (\cref{thm:universality}) strengthens this by showing that exact solutions require algebraic numbers.
This restricts the techniques one could use to obtain optimal weights and biases even further, as it rules out numerical approaches (even using arbitrary-precision arithmetic), and shows that symbolic computation is required.
In practice, we are often willing to accept small additive errors when computing~$f(\cdot, \Theta)$, and therefore do not require~$\Theta$ being of such high precision.
Rounding the weights and biases~$\Theta$ to the first \enquote{few} digits after the comma may be sufficient.
This might allow placing the problem of \emph{approximate} neural network training in \NP.
Yet, we are not aware of such a proof, and we consider it an interesting open question to establish this fact thoroughly.
Let us note that a similar phenomenon appears with many other \ER-complete problems~\cite{Erickson2022_SmoothingTheGap}:
While an exact solution~$x$ requires high precision, there is an approximate solution~$\tilde{x}$ close to~$x$ that needs only polynomial precision.
However, guessing the digits of the solution in binary is not a practical algorithm.
Moreover, historically, \ER-completeness seems to be a strong predictor that finding these approximate solutions is difficult in practice~\cite{Deligkas2022_ApproximatingETR, Erickson2022_SmoothingTheGap}.

\Citeauthor{Bienstock2023_LinearProgramming}~\cite{Bienstock2023_LinearProgramming} use the above idea to discretize the weights and biases to show that, in principle, arbitrary neural network architectures can be trained to approximate optimality via linear programs whose is size linear in the size of the data set, but exponential in the architecture size.
Let us emphasize that this does not imply \NP-membership of an approximate version of neural network training.

\paragraph{Number of Input Neurons}
In practice, neural networks are often trained on high dimensional data, thus having only two input neurons is even more restrictive than the practical setting.
Note that we easily obtain hardness for higher input dimensions by simply placing all data points of our reduction into a two-dimensional subspace.
The precise complexity of training \fullyconnected \twolayer neural networks with only one-dimensional input and multi-dimensional output remains unknown.
While this setting does not have practical relevance, we are still curious about this open question from a purely mathematical perspective.

\paragraph{Number of Output Neurons}
If there is only one output neuron instead of two, then the problem is known to be \NP-complete~\cite{Arora2018_Understanding, Froese2023_FixedDimension}.
Our reduction can easily be extended to the case with more than two output neurons by padding all output vectors with zeros.
Thus, the complexity classification is complete with respect to the number of output neurons.

\paragraph{Number of Hidden Neurons}
Consider a situation where the number~$m$ of hidden neurons is larger
than the number~$n$ of data points.
If there are no two contradicting data points~$(x_1; y_1)$ and~$(x_2; y_2)$ with $x_1 = x_2$ but $y_1 \neq y_2$, then we can always fit all data points exactly~\cite{Zhang2021_Understanding}.
Thus, we need at least a linear number of data points in terms of~$m$ for \ER-hardness.
Our result is (asymptotically) tight in this aspect.
Note that by adding additional data points, the ratio between~$n$ and~$m$ can be made arbitrarily large.
Thus, our reduction holds also for all settings in which~$m$ is (asymptotically) much smaller than~$n$.

\paragraph{Number of Output Labels}
The number of labels used in our reduction is just~$13$.
Requiring only a small constant number of different labels shows the relevance of our result to practice, where the number of data points often largely exceeds the number of labels, for instance, in classification tasks.

If all labels are contained in a one-dimensional affine subspace, then the problem is in \NP, as they can be projected down to one-dimensional labels and the problem can be solved with the algorithm by \citeauthor{Arora2018_Understanding}~\cite{Arora2018_Understanding}.
As any two labels span a one-dimensional affine subspace, the problem can only be \ER-hard for at least three affinely independent output labels.

We think it is not particularly interesting to close the gap between~$3$ and~$13$ output labels, but it would be interesting to investigate the complexity of the problem when output labels have more structure.
For example, in classification tasks one often uses \emph{one-hot encodings}, where the output dimension equals the number of classes and all labels have the form $(0, \ldots, 0, 1, 0, \ldots, 0)$.
Note that at least three output dimensions are needed in this case to obtain three different labels.

\paragraph{Target Error}
For simplicity, we only prove hardness for the case with target error~$\gamma = 0$.
However, it is generally not required to fit the data exactly in real-world applications.
It is not too difficult to see that we can modify the value of~$\gamma$ by adding inconsistent data points that can only be fit best in exactly one way.
The precise choice of these inconsistent data points heavily depends on the loss function.
In conclusion, for different values of~$\gamma$, the decision problem does not get easier.

\paragraph{Activation Function}
The \ReLU~activation function is currently the most commonly used activation function~\cite{Arora2018_Understanding, Glorot2011_DeepSparse, Goodfellow2016_DeepLearning}.
Our methods are adaptable to other piecewise linear activation functions, such as \emph{leaky \ReLU{}s}.
Having said that, our methods are \emph{not} applicable to other types of activation functions, such as \emph{Sigmoid}, \emph{soft \ReLU} or step functions.
We want to point out that \trainNN (and even \ERM) is in \NP{} if a step function is used as the activation function~\cite{Khalife2023_Threshold}.
Concerning the Sigmoid and soft \ReLU function, it is not even clear whether the \ERM is decidable, as trigonometric functions and exponential functions are not computable on the real RAM~\cite{Erickson2022_SmoothingTheGap, Richardson1969_Undecidable}.

\paragraph{Other Architectures}
We consider \fullyconnected \twolayer networks as the most important case, but we are also interested in \ER-hardness results for other network architectures.
Specifically, \fullyconnected three-layer neural networks and convolutional neural networks are interesting.
While it is hard to imagine that more complicated architectures are easier to train, a formal proof of this intuition would strengthen our result and show that \ER-completeness is a robust phenomenon, in other words, independent of a choice of a specific network type.

\paragraph{Lipschitz Continuity}
The set of data points created in the proof of \cref{thm:er_complete} is intuitively very tame.
Formally, this is captured by proving that for all yes-instances constructed by our reduction there exists a solution~$\Theta$ such that~$f(\cdot, \Theta)$ is Lipschitz continuous for a small Lipschitz constant~$L$.
Lipschitz continuity is also related to \emph{overfitting} and \emph{regularization}~\cite{Gouk2021_Regularisation}, the purpose of the latter being to prefer simpler functions over more complicated ones.
Being Lipschitz continuous with a small Lipschitz constant
essentially means that the function is relatively flat.
It is particularly remarkable that we can show hardness even for small Lipschitz constants, since Lipschitz continuity has been a crucial assumption in several recent results about training and learning \ReLU networks, for example in~\cite{Bienstock2023_LinearProgramming, Chen2022_LearningFTP, Goel2017_ReliablyLearning}.

\section{Related Work}
\label{sec:related_work}

\paragraph{Complexity of Neural Network Training}
It is well-known that minimizing the training error of a neural network is a computationally difficult problem for a large variety of activation functions and architectures~\cite{Shalev2014_UnderstandingML}.

Closest to our work is the recent \ER-completeness result by \citeauthor{Abrahamsen2021_NeuralNetworks} for \twolayer neural networks~\cite{Abrahamsen2021_NeuralNetworks}.
In contrast to our work, they use the identity activation function and rely on particularly difficult to train architectures, both qualities being uncommon in practice.
\citeauthor{Zhang1992_BSS}~\cite{Zhang1992_BSS} sketched a similar result already in \citeyear{Zhang1992_BSS}:
Training neural networks with \emph{real-valued} data points is $\NP_\R$-complete, again with the identity activation function and with an adversarial architecture.
$\NP_\R$ is a complexity class in the BSS-model of computation~\cite{Blum1989_ComputationOverTheReals}, but a suitable discretization (considering the so-called \emph{constant-free Boolean part}) yields \ER-completeness in today's language.

For \ReLU networks, \NP-hardness, parameterized hardness and inapproximability results have been established even for the simplest possible architecture consisting of only a single \ReLU neuron~\cite{Boob2022_Complexity, Dey2020_Approximation, Froese2022_Parameterized, Goel2021_Hardness}.
While all these results require non-constant input-dimension, \citeauthor{Froese2023_FixedDimension} show that it is still \NP-hard to train a \twolayer \ReLU network with two input neurons and one output neuron~\cite{Froese2023_FixedDimension}.
On the positive side, the seminal algorithm by \citeauthor{Arora2018_Understanding}~\cite{Arora2018_Understanding} solves \ERM for \twolayer \ReLU networks with one output neuron to global optimality, placing the problem in \NP.
It was later extended to a more general class of loss functions by \citeauthor{Froese2022_Parameterized}~\cite{Froese2022_Parameterized}.
The running time is exponential in the number of neurons in the hidden layer and in the input dimension, but polynomial in the number of data points if the former two parameters are considered to be constant.
This \NP-containment of \ERM with one-dimensional output is in sharp contrast to our \ER-completeness result for \trainNN with two-dimensional outputs.

While minimizing training and generalization errors are different problems, the hardness of the former also imposes challenges on the latter.
Strategies to circumvent hardness from the perspective of learning theory include allowing improper learners, restricting the type of weights allowed in a neural network, or imposing assumptions on the underlying distribution.
For example, \citeauthor{Chen2022_LearningFTP}~\cite{Chen2022_LearningFTP} show fixed-parameter tractability of learning a \ReLU network under several assumptions, including Gaussian data and Lipschitz continuity of the network.
We refer to~\cite{Bakshi2019_Learning, Chen2022_Hardness, Diakonikolas2020_Approximation, Goel2017_ReliablyLearning, Goel2019_Learning, Goel2018_Learning} as a non-exhaustive list of other results about (non-)learnability of \ReLU networks in different settings.

\paragraph{Expressivity of $\bm{{\ReLU}}$ Networks}
It is essential for our reduction to understand the classes of functions representable by certain \ReLU network architectures.
So-called \emph{universal approximation theorems} state that a single hidden layer (with arbitrary width) is already sufficient to approximate every continuous function on a bounded domain with arbitrary precision~\cite{Cybenko1989_Approximation, Hornik1991_Approximation}.
However, deeper networks require much fewer neurons to reach the same expressive power, yielding a potential theoretical explanation of the dominance of deep networks in practice~\cite{Arora2018_Understanding, Eldan2016_Power, Hanin2019_Universal, Hanin2018_Approximating, Liang2017_Deep, Nguyen2018_Neural, Raghu2017_Expressive, Safran2017_Depth, Telgarsky2016_Benefits, Yarotsky2017_Error}.
Other related work includes counting and bounding the number of linear regions~\cite{Hanin2019_Complexity, Montufar2022_Sharp, Montufar2014_Regions, Pascanu2014_Number, Raghu2017_Expressive, Serra2018_Bounding}, classifying the set of functions \emph{exactly} representable by different architectures~\cite{Arora2018_Understanding, Dereich2022_Minimal, Haase2023_LowerBound, Hertrich2021_Towards, Hertrich2021_ReLU, Mukherjee2017_Lower, Zhang2018_Tropical}, or analyzing the memorization capacity of \ReLU networks~\cite{Vardi2022_Optimal, Yun2019_Small, Zhang2021_Understanding}.
\citeauthor{Huchette2023_Survey}~\cite{Huchette2023_Survey} provide a survey on the interactions of neural networks and polyhedral geometry, including implications on training, verification, and expressivity.

\paragraph{Existential Theory of the Reals}
The complexity class \ER gains its importance from numerous important algorithmic problems that have been shown to be complete for this class in recent years.
The name \ER was introduced by \citeauthor{Schaefer2010_GeometryTopology}~\cite{Schaefer2010_GeometryTopology} who also pointed out that several \NP-hardness reductions from the literature actually implied \ER-hardness.
For this reason, several important \ER-completeness results were obtained before the need for a dedicated complexity class became apparent.

Common features of \ER-complete problems are their continuous solution space and the nonlinear relations between their variables.
Important \ER-completeness results include the realizability of abstract order types~\cite{Mnev1988_UniversalityTheorem, Shor1991_Stretchability} and geometric linkages~\cite{Schaefer2013_Realizability}, as well as the recognition of many types of geometric intersection graphs~\cite{Cardinal2018_Intersection,Kang2012_Sphere,Kratochvil1994_IntersectionGraphs, Matousek2014_IntersectionGraphsER,McDiarmid2013_DiskSegmentGraphs}.
More results appeared in the graph drawing community~\cite{Dobbins2023_AreaUniversality, Jungeblut2024_Lombardi, Lubiw2022_DrawingInPolygonialRegion, Schaefer2021_FixedK, Schaefer2023_RAC}, regarding polytopes~\cite{Richter1995_Polytopes}, the study of Nash-equilibria~\cite{Berthelsen2022_MultiPlayerNash, Bilo2021_Nash, Garg2018_MultiPlayer, Schaefer2017_FixedPointsNash}, matrix factorization~\cite{Chistikov2016_Matrix, Schaefer2018_TensorRank, Shitov2017_PSMatrixFactorization}, or continuous constraint satisfaction problems~\cite{Miltzow2022_ContinuousCSP}.
In computational geometry, we would like to mention the art gallery problem~\cite{Abrahamsen2022_ArtGallery, Stade2023_ArtGallery} and covering polygons with convex polygons~\cite{Abrahamsen2022_Covering}.

Recently, the community started to pay more attention to higher levels of the \emph{real polynomial hierarchy}, which also capture several interesting algorithmic problems~\cite{Blanc2021_ESS, Burgisser2009_ExoticQuantifiers, DCosta2021_EscapeProblem, Dobbins2023_AreaUniversality, Jungeblut2024_Hausdorff, Schaefer2023_RealHierarchy}.

\section{Proof Ideas}
\label{sec:key_ideas}

To prove \ER-hardness, we reduce from \ETRINV, a restricted version of \ETR.
An instance of \ETRINV consists of real variables and a conjunction of constraints in these variables.
Each constraint is either an addition constraint $X + Y = Z$, or an inversion constraint $X \cdot Y = 1$.
Given such an instance, we construct a \trainNN instance that models the variables, and in which exact fitting of all the data points corresponds to simultaneously satisfying all addition and inversion constraints.

\paragraph{Variables}
A natural candidate for encoding variables are the weights and biases of the neural network.
However, those did not prove to be suitable for our purposes.
The main problem with using the parameters of the neural network as variables is that the same function can be computed by many neural networks with different combinations of these parameters.
We are not aware of an easy way to normalize the parameters.

To circumvent this issue, we work with the functions representable by \fullyconnected \twolayer neural networks directly.
We frequently make use of the geometry of the plots of these functions.
For now, it is only important to understand that each hidden \ReLU neuron encodes a \CPWL function with exactly two pieces (both of constant gradient).
These two pieces are separated by a so-called \emph{breakline}.
Now, if we have~$m$ hidden neurons, their individually encoded functions add up such that the function computed by the whole neural network is a \CPWL function with at most~$m$ breaklines.
All cells of the function between the breaklines have constant gradient.

To keep things simple for now, let us first consider a neural network with only one input and one output neuron.
We place a series of data points~$(x_i; y_i) \in \R\times\R$ as seen in \cref{fig:key_ideas_variable_gadget}.
All \CPWL functions~$f(\cdot,\Theta)$ computed by a neural network with only four hidden neurons (i.e., only four breaklines and therefore at most five pieces) that fit these data points exactly must be very similar.
In fact, they can only differ in one degree of freedom, namely the slope of the piece going through the data point~$p$.
In our construction, this slope represents the value of a variable.
The whole set of data points enforcing this configuration is called a \emph{variable gadget}.

\begin{figure}[htb]
    \centering
    \includegraphics[page=1]{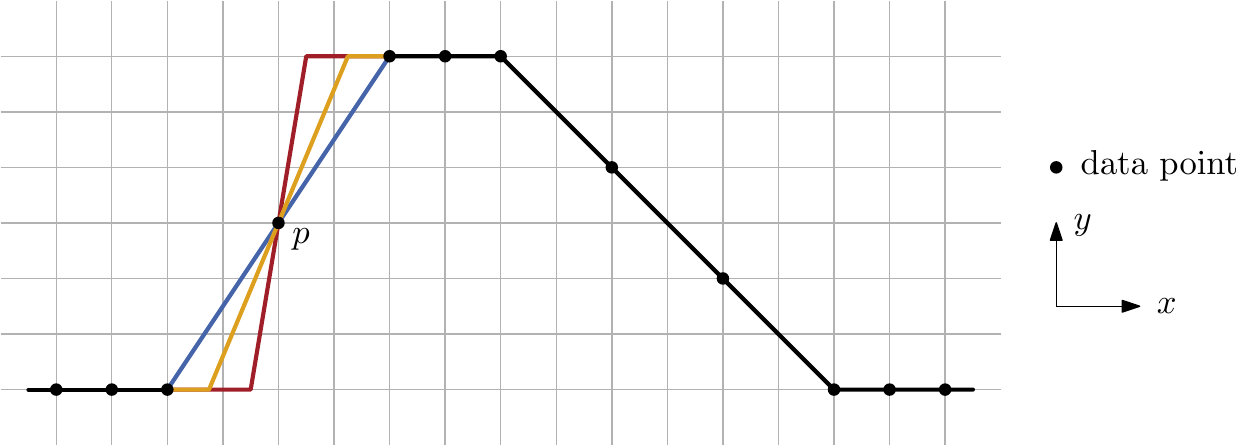}
    \caption{
        The value of~$f(\cdot, \Theta)$ is fixed (black part), except for the segment through data point~$p$.
        The red, orange and blue segments are just three out of uncountably many possibilities.
        Its slope can be used to encode a real-valued variable.
    }
    \label{fig:key_ideas_variable_gadget}
\end{figure}

\paragraph{Linear Dependencies}
The key insight for encoding constraints between variables is that we can relate the values of several variable gadgets by a data point:
By placing a data point~$p$ at a location where several variable gadgets overlap, each of the variable gadgets contributes its part towards fitting~$p$.
The exact contribution of each variable gadget depends on its slope.
Consequently, if one variable gadget contributes more, the others have to lower their contribution by the same amount.
This enforces linear dependencies between different variable gadgets and can be used to design \emph{addition} and \emph{copy gadgets}.

We need a second input dimension in order to intersect multiple variable gadgets.
This extends each variable gadget into a \emph{stripe} in~$\R^2$, with \cref{fig:key_ideas_variable_gadget} showing only an orthogonal cross-section of this stripe.
See \cref{fig:key_ideas_intersecting_variable_gadgets} for two intersecting variable gadgets.
Much of the technical difficulties lie in the subtleties to enforce the presence of multiple (possibly intersecting) gadgets using a finite number of data points.

\begin{figure}[htb]
    \centering
    \includegraphics[page=2]{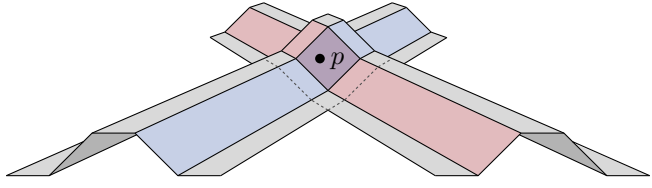}
    \caption{
        Two intersecting variable gadgets.
        The slopes of the blue and the red region encode the values.
        Point~$p$ lies in the intersection of both and can encode a linear relationship between them.
    }
    \label{fig:key_ideas_intersecting_variable_gadgets}
\end{figure}

An addition gadget encodes a linear relation between three variables.
In~$\R^2$, three lines (or stripes) usually do not intersect in a single point.
Thus, we have to carefully place the gadgets to guarantee such a single intersection point.
To this end, we create copies of the involved variable gadgets (copying is a linear relation between just two variables, thus \enquote{easy}).
These copies can then be positioned (almost) freely.

\paragraph{Inversion}
We are not able to encode nonlinear constraints within only a single output dimension~\cite{Arora2018_Understanding}.
By adding a second output dimension, the neural network now represents two functions~$f^1(\cdot, \Theta)$ and~$f^2(\cdot, \Theta)$.
Consequently, we are allowed to use data points with two different output labels, one for each output dimension.

One important observation is that the locations of the breaklines of $f = f(\cdot, \Theta) = (f^1(\cdot, \Theta), f^2(\cdot, \Theta))$ are independent of the weights of the edges in the second layer of the neural network.
Thus, both functions~$f^1$ and~$f^2$ have the same breaklines.
Still, setting some weights to zero may \emph{erase} a breakline in one of the functions.

An \emph{inversion gadget} (realizing the constraint $X \cdot Y = 1)$ also corresponds to a stripe in~$\R^2$.
For simplicity, we only show a cross-section here, see \cref{fig:key_ideas_inversion}.
In each output dimension individually, the inversion gadget acts exactly like a variable gadget.
The inversion gadget can therefore be understood as a variable gadget that carries two values.

We prove that by allowing only five breaklines in total, a function~$f$ can only fit all data points exactly if~$f^1$ and~$f^2$ share three of their four breaklines (while both having one \enquote{exclusive} breakline each, which is erased in the other dimension).
This enforces a nonlinear dependency between the slopes of~$f^1$ and~$f^2$.
By choosing the right parameters, this nonlinear relation models an inversion constraint.

\begin{figure}[hbt]
    \centering
    \includegraphics[page=3]{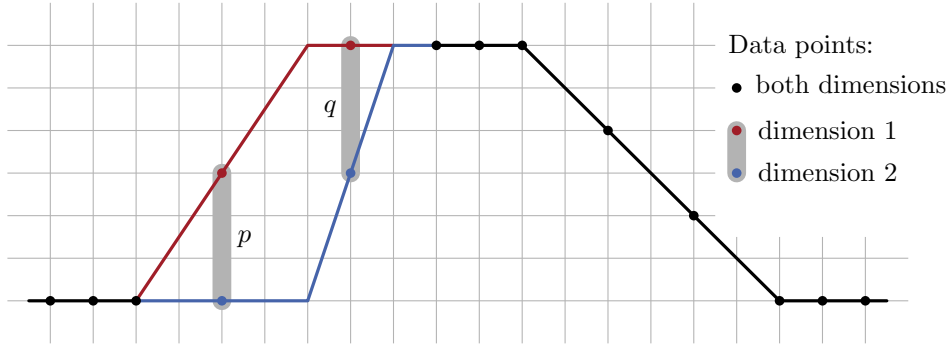}
    \caption{
        Data points~$p$ and~$q$ have different labels in the two output dimensions, enforcing that the slopes of the red and the blue pieces are related via a nonlinear dependency.
    }
    \label{fig:key_ideas_inversion}
\end{figure}

\paragraph{Reduction}
Let us illustrate the reduction by giving a simple example.
Note that this is not yet the complete picture.
We start with an \ETRINV instance, for example, deciding whether the following sentence
\begin{align*}
    \exists X_1, X_2, X_3, X_4 \in \R \colon
    &(X_1 + X_2 = X_3) \land
    (X_1 + X_3 = X_4)~\land \\
    &(X_1 \cdot X_4 = 1) \land
    (X_4 \cdot X_3 = 1)
\end{align*}
is true.
This instance has four variables~$X_1, X_2, X_3, X_4$ and four constraints: two additions and two inversions.
Recall that every gadget corresponds to a stripe in the input space~$\R^2$.
See \cref{fig:global_arrangement_intro} for the following construction (the stripes are drawn as lines for better readability).

\begin{itemize}
    \item We add a variable gadget for each of the variable.
    All of these are placed such that their corresponding stripes are parallel and do not overlap, see the horizontal lines in \cref{fig:global_arrangement_intro}.
    
    \item We introduce three more variable gadgets for each addition constraint, one per involved variable.
    These are placed such that they have a common intersection point while also intersecting their corresponding variable gadget.
    In \cref{fig:global_arrangement_intro}, see the two bundles to the left.
    A data point at the triple intersection enforces the addition constraint, while data points labelled~$\bullet_=$ encode that the values of the two intersecting variable gadgets are equal.
    
    \item Lastly, we add an inversion gadget for each inversion constraint and place it such that it intersects the variable gadgets of the two involved variables.
    See the two dashed lines in \cref{fig:global_arrangement_intro}.
    Data points labelled~$\bullet^{=_1}$ ($\bullet^{=_2}$) enforce equality only in the first (second) output dimension.
\end{itemize}

\begin{figure}[htb]
    \centering
    \includegraphics[page=1]{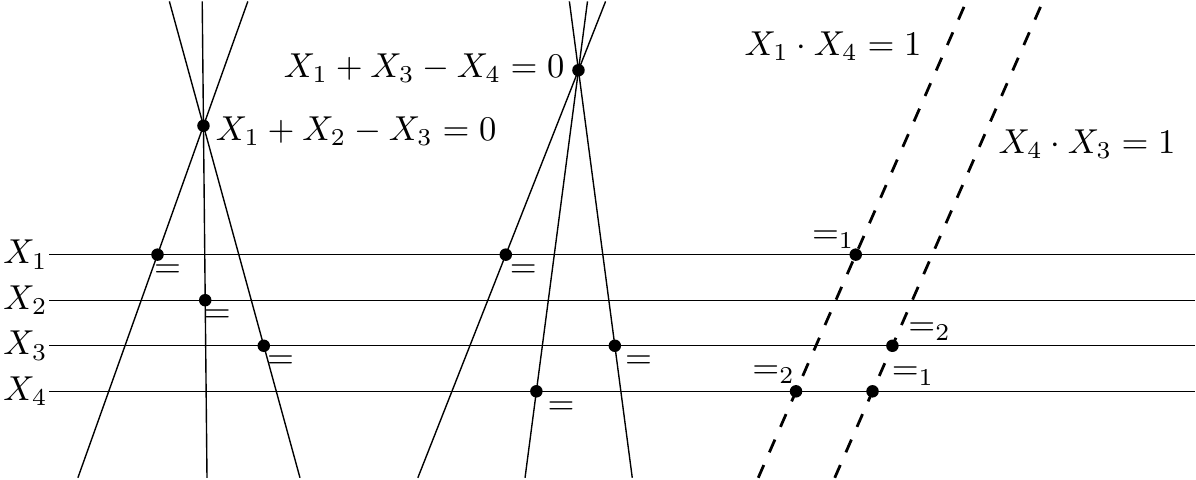}
    \caption{Overview of the global arrangement of the gadgets.}
    \label{fig:global_arrangement_intro}
\end{figure}

To see that the above reduction is correct, assume first that the \ETRINV instance is true.
Then there exist real values for the four variables and these values can be used as the slopes of their corresponding variable gadgets.
By the correctness of the individual gadgets (that we prove below), it follows that each data point is fit exactly.

Conversely, if all data points are fit exactly, then the correctness of the gadgets implies that the slopes of the variable gadgets give a solution to the \ETRINV instance.

\section{\texorpdfstring{$\bm{{\ER}}$}{ER}-Membership}
\label{sec:membership}

\ER-membership is already proven by \citeauthor{Abrahamsen2021_NeuralNetworks}:

\begin{proposition}[{\cite[Section~$2$]{Abrahamsen2021_NeuralNetworks}}]
    \label{prop:membership}
    $\trainNN \in \ER$.
\end{proposition}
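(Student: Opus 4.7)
The plan is to give a polynomial-time many-one reduction from \trainNN to \ETR. Given an instance $(N, \varphi, D, \gamma, c)$, I want to write down an existentially quantified first-order sentence over the reals that is true iff there exist weights and biases $\Theta$ with total loss at most $\gamma$.

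First I would introduce one real variable for each weight and each bias of $N$; collectively these form the variable vector $\Theta$. Their number is polynomial in $\abs{N}$. Next, for every data point $(x;y)\in D$, I want to encode the forward pass $f(x,\Theta)$ and the subsequent loss $c(f(x,\Theta),y)$ by polynomial-size ETR subformulas. The weighted sums feeding each hidden neuron and each output neuron are already polynomial expressions in the $\Theta$-variables and the (rational) inputs, so they translate directly. The only nontrivial ingredient is the evaluation of $\varphi$ at each hidden neuron and the evaluation of $c$ at each data point.

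The key lemma I would invoke (standard in the \ER literature) is that any function $g\colon \R^k \to \R^\ell$ computable in polynomial time on a \realRAM admits a polynomial-size ETR encoding: one introduces an auxiliary real variable for the value stored in each register at each time step of the computation, each arithmetic instruction ($+,-,\times,/$) becomes a polynomial equation (with a side condition $b\neq 0$ for division), and each sign test $s\gtreqless 0$ that branches the execution becomes a disjunction
\[
    \bigl(s \ge 0 \land \Phi_{\text{then}}\bigr) \lor \bigl(s < 0 \land \Phi_{\text{else}}\bigr),
\]
where $\Phi_{\text{then}}$ and $\Phi_{\text{else}}$ are the encodings of the two branches. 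Applying this to $\varphi$ at each hidden neuron and to $c$ at each data point gives a polynomial-size block of ETR constraints per data point, whose auxiliary variables witness the correct values of $f(x,\Theta)$ and $c(f(x,\Theta),y)$.

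Finally I would introduce one further auxiliary variable $L$ constrained to equal the sum of the per-point loss variables, and append the single inequality $L \le \gamma$. The resulting ETR sentence is of polynomial size and is satisfiable iff the original \trainNN instance is a yes-instance, which yields the membership. The main obstacle is the aforementioned lemma: one has to be careful that the \realRAM model used in \cref{def:trainNN} is exactly the one for which this simulation works, in particular that the only primitive real operations are the arithmetic ones and sign comparisons (transcendental primitives such as $\exp$ would break the encoding, which is why the statement restricts to polynomial-time \realRAM computability). Once that is pinned down, the rest of the reduction is a routine bookkeeping exercise.
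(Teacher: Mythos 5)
Your reduction is correct, but it takes a different (more self-contained) route than the paper. The paper does not build an \ETR sentence at all: it invokes the characterization of \ER via \emph{polynomial-time real verification algorithms} (due to Erickson, van der Hoog and Miltzow), so all it has to exhibit is a \realRAM verifier that takes the witness~$\Theta$, evaluates the network on each data point, sums the losses and compares with~$\gamma$; the translation of that verifier into an \ETR formula is delegated to the cited framework. You instead carry out that translation by hand, which means you are effectively re-proving the relevant direction of the characterization theorem --- your ``key lemma'' about encoding polynomial-time \realRAM computations as polynomial-size \ETR formulas is exactly the content of that framework. Both are valid; the paper's version is shorter and cleaner, yours makes the mechanism explicit. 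One caution on your sketch: the displayed disjunction $(s \ge 0 \land \Phi_{\text{then}}) \lor (s < 0 \land \Phi_{\text{else}})$ must not be applied by recursively \emph{inlining} the encodings of the two continuations, since with a branch at every step this blows up exponentially in the number of time steps. The correct bookkeeping is the one your surrounding text already suggests: introduce variables for the full register state at each of the polynomially many time steps and write one constant-size transition formula per step referring to the \emph{named} variables of the next step. With that reading, and with the observation (which you make) that the hypotheses on $\varphi$ and $c$ restrict the primitives to arithmetic and sign tests, your argument goes through.
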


For the sake of completeness, while not being too repetitive, we shortly summarize their argument:
\ER-membership is shown by describing a so-called \emph{polynomial-time real verification algorithm} (see~\cite{Erickson2022_SmoothingTheGap} for the formal details).
The input of such an algorithm is a \trainNN instance~$I$, as well as a witness~$\Theta$ consisting of real-valued weights and biases.
Instance~$I$ consists of a network architecture, data points~$D$ and a target error~$\gamma$.
The algorithm has to verify that the neural network parameterized by~$\Theta$ fits all data points in~$D$ with a total error at most~$\gamma$.
The underlying model of computation is the \emph{real RAM}, that is, an extension of the classical word RAM by registers that can store arbitrary real numbers.
Arithmetic operations (\enquote{$+$}, \enquote{$-$}, \enquote{$\cdot$}, \enquote{$\div$}) on real registers take constant time.

To achieve this, the real verification algorithm loops over all data points in~$D$ and evaluates the function realized by the neural network for each of them individually.
As each hidden neuron uses a polynomial-time computable activation function, each such evaluation takes only polynomial time in the size of the network.
\Cref{prop:membership} follows, since we are also guaranteed that the loss function can be computed in polynomial time on a real RAM.

\section{\texorpdfstring{$\bm{{\ER}}$}{ER}-Hardness}
\label{sec:hardness}

This \lcnamecref{sec:hardness} is devoted to proving \ER-hardness of \trainNN.
Our reduction is mostly geometric, so we start by reviewing the underlying geometry of \twolayer neural networks in \cref{sec:geometry_of_NN}.
This is followed by a high-level overview of the reduction in \cref{sec:reduction_overview}, before we describe the gadgets in detail in \cref{sec:gadgets}.
Finally, in \cref{sec:global_layout}, we combine the gadgets into the proof of \cref{thm:er_complete}.

\subsection{Geometry of Two-Layer Neural Networks}
\label{sec:geometry_of_NN}

Our reduction constructs a neural network that has just two input neurons, two output neurons, and~$m$ hidden neurons.
Thus, for given weights and biases~$\Theta$, it realizes a function $f(\cdot,\Theta) \colon \R^{2} \to \R^{2}$.
In this \lcnamecref{sec:geometry_of_NN}, we build a geometric understanding of~$f(\cdot, \Theta)$, in particular, we study the geometry of the plot of~$f(\cdot, \Theta)$.
For further results in this direction, we point the interested reader to~\cite{Arora2018_Understanding, Dereich2022_Minimal, Hertrich2021_Towards, Mukherjee2017_Lower, Zhang2018_Tropical} that investigate the set of functions exactly represented by different architectures of \ReLU networks.

The $i$-th hidden \ReLU neuron~$v_i$ realizes a function
\begin{align*}
    f_i \colon \R^2 &\to \R \\
    (x_1, x_2) &\mapsto \ReLU(a_{1,i}x_1 + a_{2,i}x_2 + b_i)
    \text{,}
\end{align*}
where~$a_{1,i}$ and~$a_{2,i}$ are the edge weights from the first and second input neuron to~$v_i$ and~$b_i$ is its bias.
Note that~$f_i$ is a \CPWL function:
If~$a_{1,i} = a_{2,i} = 0$, then~$f_i$ is constant, $f_i = \ReLU(b_i) = \max\{b_i, 0\}$.
Otherwise, the domain~$\R^2$ is partitioned into two half-planes, touching along a so-called \emph{breakline} given by the equation $a_{1,i}x_1 + a_{2,i}x_2 + b_i = 0$.
The two half-planes are (see \cref{fig:activeregion})
\begin{itemize}
    \item the \emph{inactive region} \(
        \bigl\{
            (x_1,x_2) \subseteq \R^2 \mid
            a_{1,i}x_1 + a_{2,i}x_2 + b_i \leq 0
        \bigr\}
    \), in which~$f_i$ is constantly~$0$, and
    
    \item the \emph{active region} \(
        \bigl\{
            (x_1,x_2) \subseteq \R^2 \mid
            a_{1,i}x_1 + a_{2,i}x_2 + b_i > 0
        \bigr\}
    \), in which~$f_i$ is positive and has a constant gradient.
\end{itemize}

\begin{figure}[htb]
    \centering
    \includegraphics{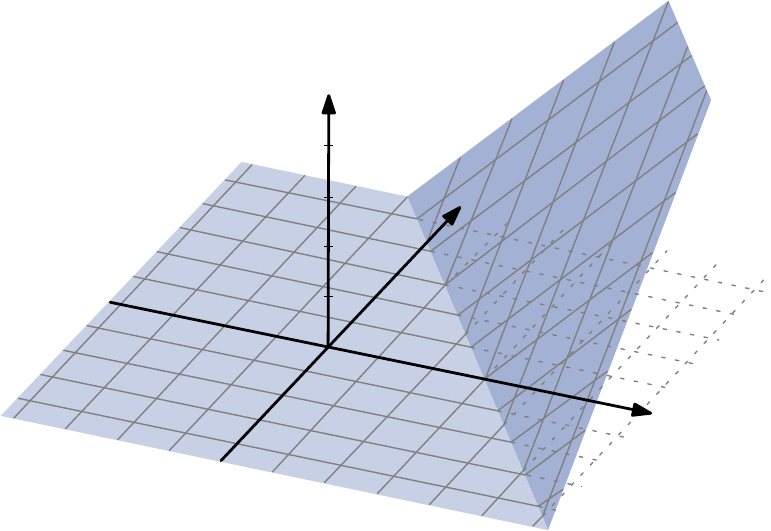}
    \caption{
        A \CPWL function computed by a hidden \ReLU neuron.
        It has exactly one breakline;
        the flat part is the inactive region, whereas the sloped part is the active region.
    }
    \label{fig:activeregion}
\end{figure}

Now let~$c_{i,1}$ and~$c_{i,2}$ be the weights of the edges connecting~$v_i$ with the first and second output neuron, and let~$f(\cdot, \Theta) = (f^1(\cdot, \Theta), f^2(\cdot, \Theta))$.
For~$j \in \{1,2\}$, the function~$f^j(\cdot, \Theta) = \sum_{i = 1}^m c_{i,j} \cdot f_i(\cdot, \Theta)$ is a weighted linear combination of the functions computed at the hidden neurons.
We make three observations:

\begin{itemize}
    \item Each function computed by a hidden \ReLU neuron has at most one breakline.
    Thus, the domain of~$f^j(\cdot, \Theta)$ is partitioned into the cells of a line arrangement containing at most~$m$ breaklines.
    Apart from that,~$f^j(\cdot, \Theta)$ has a constant gradient inside each cell.

    \item Let~$b_i$ be the breakline produced by a hidden neuron~$v_i$ in~$f^j(\cdot, \Theta)$.
    Its position is solely determined by~$a_{\cdot,i}$ and~$b_i$.
    In particular, it is independent of~$c_{i,j}$.
    Thus, the sets of breaklines of~$f^1(\cdot, \Theta)$ and~$f^2(\cdot, \Theta)$ are both subsets of the same set of (at most~$m$) breaklines determined by the hidden neurons.

    \item Even if all~$f_i(\cdot, \Theta)$ have a breakline, their sum $f^j(\cdot, \Theta)$ at each output neuron might have fewer breaklines:
    It is possible for a breakline to be \emph{erased} by setting $c_{i,j} = 0$.
    Other possibilities are that several breaklines contributed by different hidden neurons cancel each other (producing no breakline) or lie on top of each other (combining multiple breaklines into one).
    In our reduction we deliberately erase some breaklines in some output dimensions, i.e., we make use of the $c_{i,j} = 0$ trick.
    However, we avoid the other two cases of breaklines combining/canceling.
\end{itemize}

Combining above observations yields a stronger statement:
For each output neuron and breakline~$\ell$, the change of gradient of~$f^j(\cdot, \Theta)$ along~$\ell$ is constant (see also~\cite{Dereich2022_Minimal}).
Based on this, we distinguish two types of breaklines:

\begin{definition}
    A breakline~$\ell$ is \emph{concave} (\emph{convex}) in~$f^j(\cdot,\Theta)$ if the restriction of~$f^j(\cdot,\Theta)$ to any two neighboring cells separated by~$\ell$ is concave (convex).

    The \emph{type} of a breakline is a tuple $(t_1,t_2) \in \{\wedge, 0, \vee\}^2$ describing whether the breakline is concave~$(\wedge)$, erased~$(0)$, or convex~$(\vee)$ in~$f^1(\cdot,\Theta)$ and~$f^2(\cdot,\Theta)$, respectively.
\end{definition}

By now, we gained a geometric understanding of~$f(\cdot, \Theta)$, the \CPWL function computed by a \ReLU neural network with two input and two output neurons.
However, not every \CPWL function can be computed by such a neural network.
For the correctness of our reduction, we need a sufficient condition for this:

\begin{lemma}
    \label{lem:fittableCPWL}
    A \CPWL function~$f \colon \R^2 \to \R^2$ whose breaklines form a line arrangement with~$m$ lines can be realized by a \fullyconnected \twolayer neural network with~$m$ hidden neurons if the following two conditions hold:
    \begin{itemize}
        \item In at least one cell of~$\calL$ the value of~$f$ is constantly~$(0,0)$.
        \item For each breakline~$\ell \in \calL$, the change of the gradient of~$f$ along~$\ell$ is constant in both output dimensions.
    \end{itemize}
\end{lemma}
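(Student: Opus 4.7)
The plan is to prove the lemma constructively: assign one hidden ReLU neuron to each of the $m$ breaklines of $f$, with the neuron's input weights encoding the line equation and the output weights encoding the gradient jumps of $f$ across the line. Let $C_0$ denote a cell on which $f \equiv (0,0)$, guaranteed by the first hypothesis; we will use $C_0$ as an anchor relative to which the rest of $f$ is reconstructed.

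\textbf{Construction.} For each breakline $\ell_i$ of $\calL$, pick a linear form $a_{1,i}x_1 + a_{2,i}x_2 + b_i$ whose zero set is $\ell_i$, and choose the sign of this form so that $C_0$ lies in the nonpositive half-plane. Assigning this form as the incoming weights and bias of hidden neuron $v_i$ ensures that $C_0$ is contained in the inactive region of every hidden neuron, so the network output on $C_0$ is $(0,0)$, matching $f$. For the output weights $c_{i,j}$, observe that across $\ell_i$ the function $f^j$ is continuous, so the gradient change of $f^j$ across $\ell_i$ must be parallel to the normal $(a_{1,i}, a_{2,i})$ of $\ell_i$; write this change as $\lambda_{i,j} \cdot (a_{1,i}, a_{2,i})$ for some scalar $\lambda_{i,j}$. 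By the second hypothesis $\lambda_{i,j}$ is well-defined (constant along $\ell_i$). Set $c_{i,j} \dequiv \lambda_{i,j}$. This matches the gradient jump that $v_i$ contributes, namely $c_{i,j}(a_{1,i}, a_{2,i})$ when crossing from inactive to active side.

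\textbf{Verification.} Let $g(\cdot)$ be the function realized by the constructed network and consider the difference $h \dequiv f - g$. Then $h$ is continuous and \CPWL, its breaklines are contained in $\calL$, it vanishes on $C_0$, and by construction the gradient jump of each component $h^j$ across every $\ell_i$ is zero. A short separate argument then shows $h$ is affine on all of $\R^2$: on each cell of $\calL$ the function $h$ is affine, and two cells sharing a breakline have identical gradients (zero jump), so by walking between any two cells through adjacent cells all gradients agree; continuity then upgrades this shared gradient to an affine identity on $\R^2$. Since an affine function that vanishes on the two-dimensional region $C_0$ is identically zero, we conclude $f = g$ on $\R^2$.

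\textbf{Expected obstacle.} The only step requiring care is the passage from ``no gradient jump across any breakline'' to ``$h$ is globally affine''. One has to use that the line arrangement $\calL$ partitions $\R^2$ into finitely many cells whose adjacency graph is connected, and then combine continuity with piecewise affinity to rule out hidden discontinuities of the gradient along lower-dimensional intersections of breaklines. The other delicate detail is ensuring that the scalar $\lambda_{i,j}$ is indeed well-defined independently of where one crosses $\ell_i$; this is precisely what the ``constant change of gradient'' hypothesis supplies, and without it no single ReLU neuron could possibly produce the breakline $\ell_i$ in $f^j$.
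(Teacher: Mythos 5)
Your construction is exactly the one used in the paper: one hidden neuron per breakline with its inactive half-plane containing the $(0,0)$-cell, and second-layer weights chosen to reproduce the gradient jump in each output dimension. The paper dismisses the verification with \enquote{it is easy to see}, whereas you spell it out via the difference function $h = f - g$ and the connectivity of the cell adjacency graph; this is a correct and welcome elaboration of the same argument, not a different route.
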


\begin{proof}
    We can use the following construction:
    Add one hidden neuron per breakline, oriented such that its inactive region is the half-plane containing the $(0,0)$-cell.
    The position solely depends on the weights of the first layer and the bias.
    The weights of the second layer are then chosen to produce the right change of gradient in each output dimension.
    It is easy to see that the sum of all these neurons computes~$f$.
\end{proof}

We refer to~\cite{Dereich2022_Minimal} for a precise characterization of the functions representable by \twolayer neural networks with~$m$ hidden neurons.

\subsection{Preparing the Reduction}
\label{sec:reduction_overview}

We show \ER-hardness of \trainNN by giving a polynomial-time reduction from \ETRINV to \trainNN.
\ETRINV is a variant of \ETR that is frequently used as a starting point for \ER-hardness proofs in the literature~\cite{Abrahamsen2022_ArtGallery, Abrahamsen2021_NeuralNetworks, Dobbins2023_AreaUniversality, Lubiw2022_DrawingInPolygonialRegion}.

Formally, \ETRINV is a special case of \ETR in which the quantifier-free part~$\varphi$ of the input sentence \(
    \Phi \dequiv
    \exists X_1, \ldots, X_n \in \R \colon
    \varphi(X_1, \ldots, X_n)
\) is a conjunction (only~$\land$ is allowed) of constraints, each of which is either of the form~$X + Y = Z$ or~$X \cdot Y = 1$.

A promise version of \ETRINV allows us to assume that there is either no solution or a solution with all variables in~$\bigl[\frac{1}{2},2\bigr]$.
Our reduction starts from this promise version.

\begin{theorem}[{\cite[Theorem 3.2]{Abrahamsen2022_ArtGallery}}]
    \label{thm:ETRINV}
    \ETRINV is \ER-complete.
\end{theorem}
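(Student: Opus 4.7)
Since \ETRINV is a syntactic restriction of \ETR, membership is immediate. The substance of the theorem is the \ER-hardness, which I would establish by a polynomial-time many-one reduction from \ETR in which every variable of the output instance is forced into $[\tfrac12, 2]$.

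The first step is to put the \ETR input $\exists X_1, \ldots, X_k \colon \varphi$ into a convenient arithmetic normal form: a conjunction of elementary relations, each of the form $X + Y = Z$ or $X \cdot Y = Z$, obtained by introducing one auxiliary variable per internal node of the arithmetic circuits computing the polynomials $p_i$ appearing in $\varphi$. Inequalities are eliminated via slack variables, and disjunctions are eliminated using standard reductions showing that \ETR is already hard for conjunctive formulas (see e.g.\ \cite{Schaefer2010_GeometryTopology}). After this step the only remaining obstacle is the presence of unrestricted multiplication constraints.

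To replace each constraint $X \cdot Y = Z$ by a sub-gadget built only from addition and inversion, I would use the algebraic identity
\[
    \frac{1}{X-1} - \frac{1}{X+1} = \frac{2}{X^2-1},
\]
which yields a constant-size gadget computing $X^2$ from $X$ (with auxiliary variables for $X-1$, $X+1$, their inverses, the difference, and its inverse). Combined with the polarization identity $4XY = (X+Y)^2 - (X-Y)^2$ and the fact that the constant $4$ is buildable from repeated additions, this gives a gadget for an arbitrary bilinear product. Integer constants are bootstrapped from an anchor variable pinned to~$1$ via $U \cdot U = 1$, using the promise range to exclude the spurious root $U=-1$, and then built up by addition.

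The hard part will be enforcing the promise $X_i \in [\tfrac12, 2]$. My plan is to first appeal to classical effective bounds from real algebraic geometry to obtain an integer $M = 2^{\mathrm{poly}(|\varphi|)}$ such that any satisfiable \ETR instance admits a solution with $|X_i| \le M$, and then substitute each original variable by an affine image $X_i = \alpha Y_i + \beta$ mapping $[\tfrac12, 2]$ onto $[-M, M]$. Every auxiliary variable produced during the circuit decomposition and each of the intermediate quantities appearing in the squaring gadget (like $X-1$ and $X+1$, which must avoid the singularities $0$ of the inversion) then needs its own shift and scaling, inserted via additional normalization constraints. I expect this range-tracking bookkeeping, rather than any single algebraic identity, to be the main technical obstacle of the proof.
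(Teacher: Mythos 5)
The paper does not prove this statement at all: it is imported verbatim from \cite[Theorem~3.2]{Abrahamsen2022_ArtGallery}, so your proposal can only be measured against the proof given there (and in \cite{Abrahamsen2019_Toolbox}). Your overall strategy---normalize \ETR into a conjunction of additions and multiplications, simulate multiplication by a squaring gadget built from inversions plus the polarization identity, and finally rescale all variables into $\bigl[\frac{1}{2},2\bigr]$---is indeed the strategy of that proof. However, two of your concrete steps fail. First, the bound $M = 2^{\mathrm{poly}(|\varphi|)}$ on the magnitude of some solution is false: the correct bound from real algebraic geometry is doubly exponential, and this is unavoidable, since the system $x_1 = 2$, $x_{i+1} = x_i \cdot x_i$ forces $x_n = 2^{2^{n-1}}$. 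Consequently the scaling factor $\alpha$ of your affine substitution cannot be written down in binary; it must itself be manufactured inside the instance by a polynomial-length chain of squaring/inversion constraints, and threading that chain through the range promise is precisely the delicate part of the published proof, not a routine preprocessing step. Second, your squaring identity $\frac{1}{X-1} - \frac{1}{X+1} = \frac{2}{X^2-1}$ is singular at $X = 1$, which lies squarely inside the target range $\bigl[\frac{1}{2},2\bigr]$ (and is exactly the value of your anchor variable $U$), so the auxiliary variable for $1/(X-1)$ does not exist for admissible inputs. The fix used in the literature is an identity whose poles avoid the promised range, e.g.\ $\frac{1}{x} - \frac{1}{x+1} = \frac{1}{x^2+x}$, which is well defined for all $x \in \bigl[\frac{1}{2},2\bigr]$.

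Two softer but still substantive issues remain. Excluding the spurious root $U = -1$ of $U \cdot U = 1$ \enquote{using the promise range} is circular: the promise is a property of the instance you must construct and verify, not a hypothesis you may invoke while constructing it, so you need constraints that genuinely rule out $U = -1$ (or a construction of the constant $1$ that avoids the issue). And the closing remark that every intermediate quantity \enquote{needs its own shift and scaling} understates the difficulty: an affine change of variables does not preserve the constraint formats of \ETRINV (for instance, $(\alpha X')(\alpha Y') = 1$ is no longer an inversion constraint $X'Y'=1$), so the rescaling cannot be applied as independent per-variable postprocessing but must be designed jointly with the gadgets. You correctly identify this bookkeeping as the main obstacle, but since it is the actual content of the theorem, the proposal as written is a plausible outline rather than a proof.
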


Furthermore, \ETRINV exhibits the same algebraic universality that we seek for \trainNN:

\begin{theorem}[\cite{Abrahamsen2019_Toolbox}]
    \label{thm:ETRINV_universality}
    Let~$\alpha$ be an algebraic number.
    Then there exists an instance of \ETRINV, which has a solution in~$\Q[\alpha]$, but no solution when the variables are restricted to a field~$\F$ that does not contain~$\alpha$.
\end{theorem}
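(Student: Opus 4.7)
The overall plan is to encode the polynomial equation $p(X)=0$, where $p\in\Z[X]$ is (divisible by) the minimal polynomial of $\alpha$, as an \ETRINV instance. The first step is to build a \emph{multiplication gadget}, since \ETRINV natively supports only the additive constraint $X+Y=Z$ and the inversion constraint $X\cdot Y=1$. Squaring can be encoded by introducing auxiliary variables $U=X+1$, $V=1/X$, and $W=1/U$: since $V-W = 1/X - 1/(X+1) = 1/(X^{2}+X)$, one more inversion and a subtraction recover $X^{2}$. General multiplication then follows from the identity $2XY=(X+Y)^{2}-X^{2}-Y^{2}$.

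Second, I would use this gadget together with standard encodings of rational constants (the constraint $X+X=X$ forces $X=0$; $Y\cdot Y=1$ combined with invertibility of $Y+1$ forces $Y=1$; further rationals are built by repeated addition and inversion) to express $p(X)=0$ as a conjunction of \ETRINV constraints. Setting $X=\alpha$ satisfies the instance, and every auxiliary variable in this solution lies in $\Q[\alpha]$, which immediately gives one direction of the statement.

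The main obstacle is the converse direction: if $p$ has another real root $\beta$ with $\Q[\beta]\neq\Q[\alpha]$ as subfields of $\R$, then $X=\beta$ is also a satisfying assignment, placing a solution into a field that need not contain $\alpha$. To rule this out I would add \emph{separation gadgets} that single out $\alpha$ among the real roots of $p$. Concretely, for a rational $r$ separating $\alpha$ from the other real roots, the constraints $X-r=Y^{2}$ together with $Y\cdot V=1$ enforce $X>r$; symmetric constraints enforce $X<s$ from above. The delicate point -- which I expect to be the technical heart of the proof -- is that $\sqrt{\alpha-r}$ may not itself lie in $\Q[\alpha]$, which would push the required witness out of the desired field. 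One must therefore arrange the separation gadgets so that all their auxiliary witnesses still lie in $\Q[\alpha]$; this can be done by combining several gadgets tailored to the algebraic structure of $\Q[\alpha]$, or by working with a carefully chosen polynomial $p$ for which $\alpha$ is the unique real root in $[r,s]$ \emph{and} the gaps $\alpha-r,\,s-\alpha$ happen to be squares in $\Q[\alpha]$. Combining the polynomial constraint with the separation gadgets finally yields an \ETRINV instance that is satisfiable in a field $\F$ if and only if $\F\supseteq\Q[\alpha]$, i.e., $\alpha\in\F$, which establishes \cref{thm:ETRINV_universality}.
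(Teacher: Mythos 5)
First, a remark on scope: the paper does not prove \cref{thm:ETRINV_universality} itself but imports it from \cite{Abrahamsen2019_Toolbox}, so your attempt is being measured against that source's argument. The first half of your proposal is sound and matches it: recovering $X^2$ from $1/X-1/(X+1)=1/(X^2+X)$, general products via $2XY=(X+Y)^2-X^2-Y^2$, rational constants from $X+X=X$ and $Y\cdot Y=1$ with $Y+1$ invertible, and hence $p(X)=0$ for the minimal polynomial $p$ of~$\alpha$. All auxiliary values in these gadgets are field expressions in $\alpha$ over $\Q$, so they stay in $\Q[\alpha]$.

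The genuine gap is root isolation, and the repair you sketch provably cannot work in exactly the cases where it is needed. Suppose $\alpha$ has a real conjugate $\beta<\alpha$ with $\alpha\notin\Q[\beta]$ (e.g.\ the roots of an irreducible cubic with three real roots and Galois group $S_3$), so that $X=\beta$ would otherwise yield a solution inside a field avoiding $\alpha$. Your lower-bound gadget needs a rational $r\in(\beta,\alpha)$ and a witness $Y\in\Q[\alpha]$ with $Y^2=\alpha-r$. But the $\Q$-embedding $\sigma\colon\Q[\alpha]\to\R$ with $\sigma(\alpha)=\beta$ fixes $r$ and gives $\beta-r=\sigma(Y)^2\geq 0$, contradicting $r>\beta$. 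Hence $\alpha-r$ is a square (indeed, even a sum of squares) in $\Q[\alpha]$ for \emph{no} admissible $r$, so no ``carefully chosen $p$'' or interval can make the square-root witnesses land in $\Q[\alpha]$; the remaining fallback (``gadgets tailored to the algebraic structure of $\Q[\alpha]$'') is not an argument. The mechanism actually used in \cite{Abrahamsen2019_Toolbox} is different: \ETRINV comes with the range restriction that every variable lies in $\bigl[\tfrac{1}{2},2\bigr]$ (the same restriction that the variable gadgets of the present paper enforce on the \trainNN side). One applies a rational affine substitution so that $\alpha$ becomes the unique root of its transformed (still integer) minimal polynomial inside $\bigl[\tfrac{1}{2},2\bigr]$, and rescales all intermediate quantities into that range; the interval constraint then performs the isolation without any additional witnesses, every auxiliary variable remains a rational expression in $\alpha$, and the instance is solvable over a field $\F$ if and only if $\alpha\in\F$.
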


The reduction starts with an \ETRINV instance~$\Phi$ and outputs an integer~$m$ and a set of~$n$ data points such that there is a \fullyconnected \twolayer \ReLU neural network~$N$ with~$m$ hidden neurons exactly fitting all data points ($\gamma = 0$) if and only if~$\Phi$ is true.
Recall that the neural network~$N$ defines a \CPWL function~$f(\cdot, \Theta) \colon \R^2 \to \R^2$.

We define several \emph{gadgets} representing the variables as well as the linear and inversion constraints of the \ETRINV instance~$\Phi$.
Strictly speaking, a gadget is defined by a set of data points that need to be fit exactly.
These data points serve two tasks:
Firstly, most of the data points are used to enforce that~$f(\cdot, \Theta)$ has~$m$ breaklines with predefined orientations and at almost predefined positions.
Secondly, the remaining data points enforce relationships between the exact positions of different breaklines.

Globally, our construction yields $f(x, \Theta) = (0,0)$ for \enquote{most}~$x \in \R^2$.
Each gadget consists of a constant number of parallel breaklines (enforced by data points) that lie in a \emph{stripe} of constant width in~$\R^2$.
The value of~$f(\cdot, \Theta)$ may be non-zero only within these stripes.
The \enquote{semantics} of a gadget\footnote{
    For a variable gadget, its \enquote{semantics} is the real number represented by it.
} is fully determined by the distances between its parallel breaklines.
Thus, each gadget can be translated and rotated arbitrarily without affecting its semantics.

\paragraph{Abstractions}
Describing all gadgets purely by their data points is tedious and obscures the relatively simple geometry enforced by these data points.
We therefore introduce two additional constructs, namely \emph{data lines} and \emph{weak data points}, that simplify the presentation.
In particular, data lines impose breaklines, which in turn are needed to define gadgets.
Weak data points allow us to have features that are only active in one output dimension.
How these constructs can be realized with carefully placed data points is deferred to \cref{sec:lower_bound_gadget,sec:data_lines_to_data_points}, after we have introduced all other gadgets.

\begin{itemize}
    \item A \emph{data line}~$(\ell; y)$ consists of a line~$\ell \subseteq \R^2$ and a label~$y \in \R^2$.
    We say that a data line is fit if~$f(\ell, \Theta) = \{y\}$, i.e., the neural network maps every point on it to~$y$.
    
    As soon as we consider several gadgets, their corresponding stripes in~$\R^2$ might intersect.
    We do not require that the data lines are fit correctly inside these intersections.
    This is justified because, as we are going to see below, each data line is realized by finitely many data points on it.
    We make sure that their coordinates do not lie in any of the intersections.

    \item A \emph{weak data point} relaxes the notion of a regular data point and prescribes only a lower bound on the label.
    For example, we denote by~$(x; y_1, \geq y_2)$ that $f^1(x, \Theta) = y_1$ and $f^2(x, \Theta) \geq y_2$.
    Weak data points can have such an inequality label in the first, the second, or both output dimensions.
\end{itemize}

\subsection{Gadgets and Constraints}
\label{sec:gadgets}

We describe all gadgets in isolation first.
The interaction of two or more gadgets is considered only where it is necessary.
In particular, we assume that~$f(x, \Theta)$ is constantly zero for~$x \in \R^2$ outside the outermost breaklines enforced by each gadget.
After all gadgets have been introduced, we describe the global arrangement of the gadgets in \cref{sec:global_layout}.
Recall that, since each gadget can be freely translated and rotated, we can describe the positions of all its data lines and (weak) data points relative to each other.
Since all data lines of a gadget are parallel, we can describe their relative positions solely by the distance between them.

Not all gadgets make use of the two output dimensions.
Some gadgets have the same labels in both output dimensions for all of their data lines, and thus look the same in both output dimensions.
For these gadgets, we simplify the usual notation of $(y_1,y_2) \in \R^2$ to single-valued labels $y \in \R$.
In our figures, data points and functions looking the same in both output dimensions are drawn in black, while features only occurring in one dimension are drawn in different colors to distinguish them from each other.

To clarify our exposition, let us define some terms we will use.
Let $(\ell_1; y_i), \ldots, (\ell_k; y_k)$ be parallel data lines.
Further, let~$\ell \subseteq \R^2$ be an oriented line intersecting all~$\ell_i$.
Without loss of generality, we assume that~$\ell$ intersects~$\ell_i$ before~$\ell_j$ if and only if~$i < j$.
A \emph{cross-section} through $(\ell_1; y_i), \ldots, (\ell_k; y_k)$ is defined as follows:
For each data line~$(\ell_i; y_i)$, the cross-section contains a data point~$p_i = (x_i; y_i) \in \R \times \R^2$, where~$x_i$ is the oriented distance between the intersections of~$\ell_1$ and~$\ell_i$ with~$\ell$.
Two data points~$p_i$ and~$p_j$ in the cross-section are \emph{consecutive} if $\abs{i - j} = 1$.
If~$\ell$ is perpendicular to all~$\ell_i$, then the cross-section is \emph{orthogonal}.
The intersection of a breakline with~$\ell$ is a \emph{breakpoint}.

We draw cross-sections by projecting a data point~$(x_i; y_i) \in \R \times \R^2$ into a two-dimensional coordinate system, marking~$x_i$ along the abscissa and~$y_i$ along the ordinate.
If a~$y_i$ behaves differently in the two output dimensions, then we draw it twice and distinguish the two dimensions by color.

\begin{observation}
    \label{obs:breakpoints}
    Let~$f$ be a \CPWL function exactly fitting three consecutive data points~$p_i$, $p_{i+1}$ and~$p_{i+2}$ in a cross-section of a gadget.
    The following holds for each output dimension:
    \begin{enumerate}[label=(\roman*)]
        \item\label{obs:collinear_breakpoints}
        If the three points are collinear, then~$f$ has either no breakpoint strictly between~$p_i$ and~$p_{i+2}$, or at least two.

        \item\label{obs:non_collinear_breakpoints}
        If the three points are not collinear, then~$f$ has some breakpoint~$b$ strictly between~$p_i$ and~$p_{i+2}$.
        Furthermore, if~$p_{i+2}$ lies above (below) the ray from~$p_i$ through~$p_{i+1}$, then~$b$ is convex (concave).
    \end{enumerate}
\end{observation}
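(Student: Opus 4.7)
My approach is to reduce the claim to a one-dimensional setting. Fix an output dimension $j \in \{1,2\}$ and let $g \colon \R \to \R$ be the restriction of $f^j(\cdot,\Theta)$ to the cross-section line, parameterized by its oriented distance coordinate so that $g(x_i) = y_i$, $g(x_{i+1}) = y_{i+1}$, and $g(x_{i+2}) = y_{i+2}$. Since $f^j(\cdot,\Theta)$ is \CPWL, so is $g$, and the breakpoints of $g$ in $(x_i,x_{i+2})$ correspond precisely to the breakpoints of $f^j(\cdot,\Theta)$ falling into that interval; moreover, a breakline that is convex (concave) for $f^j(\cdot,\Theta)$ restricts to a convex (concave) breakpoint of $g$. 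Abbreviate the two secant slopes by $s_1 = (y_{i+1}-y_i)/(x_{i+1}-x_i)$ and $s_2 = (y_{i+2}-y_{i+1})/(x_{i+2}-x_{i+1})$, so that collinearity of $p_i,p_{i+1},p_{i+2}$ is exactly the condition $s_1 = s_2$.

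For part~(i), I would assume towards a contradiction that $g$ has exactly one breakpoint $b$ in the open interval $(x_i,x_{i+2})$ and do a case distinction on the position of $b$ relative to $x_{i+1}$. In each subcase ($b<x_{i+1}$, $b=x_{i+1}$, $b>x_{i+1}$), $g$ is affine on each of $[x_i,b]$ and $[b,x_{i+2}]$, and the three subcases differ only in which of the two pieces contains two of the three prescribed points. Using collinearity $s_1 = s_2$ together with continuity of $g$ at $b$, a short direct computation forces the slopes of $g$ on both sides of $b$ to equal $s_1$. Hence $b$ is only a ``fake'' breakpoint where the slope does not actually change, contradicting the assumption that $b$ is a genuine breakpoint. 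So the number of breakpoints in the open interval is either $0$ or at least $2$.

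For part~(ii), non-collinearity gives $s_1 \neq s_2$. If $g$ were free of breakpoints in $(x_i,x_{i+2})$, then $g$ would be affine there and hence yield $s_1 = s_2$; so at least one breakpoint exists. To pin down its type, assume without loss of generality that $p_{i+2}$ lies above the ray from $p_i$ through $p_{i+1}$, which is equivalent to $s_2 > s_1$. The length-weighted average of the slopes of the affine pieces of $g$ on $[x_i,x_{i+1}]$ equals $s_1$, so at least one such piece has slope at most $s_1$; symmetrically, at least one affine piece of $g$ on $[x_{i+1},x_{i+2}]$ has slope at least $s_2>s_1$. Traversing $g$ from left to right, the slope must therefore strictly increase at some breakpoint located between these two pieces, and any such breakpoint is convex by definition. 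The case of $p_{i+2}$ below the ray is fully symmetric and yields a concave breakpoint.

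The main technical subtlety lies in part~(ii): with possibly many breakpoints in the open interval, individual slope jumps may partially cancel, so a naive comparison of slopes of adjacent pieces does not suffice. The weighted-average argument above, combined with a pigeonhole-style comparison of a ``small'' slope on the left with a ``large'' slope on the right, is what makes the existence of a convex (respectively concave) breakpoint transparent. In part~(i), the only delicate subcase is $b = x_{i+1}$, where one must explicitly check that the slopes on either side coincide, so that $b$ fails to be a genuine breakpoint.
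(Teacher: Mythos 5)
Your proof is correct. Note that the paper does not actually prove this statement: it is labelled an \emph{Observation} and is justified only by the illustration in the corresponding figure, so there is no paper proof to compare against; your write-up supplies the missing elementary argument. Both halves check out. In part~(i), the reduction to a one-variable \CPWL function $g$ on the cross-section line is legitimate (a breakline met transversally by the cross-section line induces a genuine slope change of $g$, and conversely), and the three subcases each force the slopes on both sides of the lone breakpoint to equal the common secant slope, so the breakpoint is not genuine. In part~(ii), you correctly identify the one place where a naive ``compare adjacent slopes'' argument breaks down when several breakpoints are present, and the fix via the length-weighted average of slopes on $[x_i,x_{i+1}]$ and on $[x_{i+1},x_{i+2}]$ is sound: it produces a piece of slope at most $s_1$ to the left of a piece of slope at least $s_2>s_1$, so the telescoping sum of slope jumps between them is positive and at least one jump is a strict increase, i.e.\ a convex breakpoint, which moreover lies in the open interval $(x_i,x_{i+2})$ because both witnessing pieces have positive length inside their respective subintervals. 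This matches the intended reading of the observation (existence of a breakpoint of the stated convexity type strictly between $p_i$ and $p_{i+2}$), which is exactly how it is used later in the gadget lemmas.
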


\Cref{obs:breakpoints} is the key to prove that data lines enforce breaklines of a certain type, with a prescribed orientation and (almost) fixed position.
It is illustrated in \cref{fig:breakpoints}.

\begin{figure}[htb]
    \begin{subfigure}[t]{.45\linewidth}
	\centering
	\includegraphics[page=1]{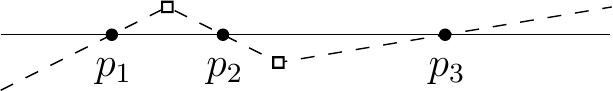}
	  \caption{If the points are collinear, then there is either no breakpoint or there are at least two.}
	  \label{fig:collinear_breakpoints}
    \end{subfigure}
    \hfill
    \begin{subfigure}[t]{.45\linewidth}
        \centering
	\includegraphics[page=2]{figs/breaklines.pdf}
	\caption{If the points are not collinear, then we need a breakpoint of a certain type, here convex.}
	\label{fig:non_collinear_breakpoints}
    \end{subfigure}
    \caption{Three consecutive points~$p_1$, $p_2$ and~$p_3$ in a cross-section and possible \CPWL functions fitting them (solid and dashed).}
    \label{fig:breakpoints}
\end{figure}

\subsubsection{Variable Gadget}

A \emph{variable gadget} consists of twelve parallel data lines~$\ell_1, \ldots, \ell_{12}$, numbered from one side to the other, in the figures from left to right.
Further, there is a weak data point~$q$ between~$\ell_3$ and~$\ell_4$.
For all of these, the following table lists their relative distance to~$\ell_1$ and their label (note that both output dimensions have the same label):
\begin{center}
    \begin{tabular}{lccccccccccccc}
        \toprule
        & $\ell_1$ & $\ell_2$ & $\ell_3$ & q & $\ell_4$ & $\ell_5$ & $\ell_6$ & $\ell_7$ & $\ell_8$ & $\ell_9$ & $\ell_{10}$ & $\ell_{11}$ & $\ell_{12}$ \\
        \midrule
        distance to $\ell_1$ & $0$ & $1$ & $2$ & $3 + \frac{2}{3}$ & $4$ & $6$ & $7$ & $8$ & $10$ & $12$ & $14$ & $15$ & $16$ \\
        label & $0$ & $0$ & $0$ & $\geq 2$ & $3$ & $6$ & $6$ & $6$ & $4$ & $2$ & $0$ & $0$ & $0$ \\
        \bottomrule
    \end{tabular}
\end{center}
See \cref{fig:variable_gadget} for an orthogonal cross-section through a variable gadget.

\begin{figure}[htb]
    \centering
    \includegraphics{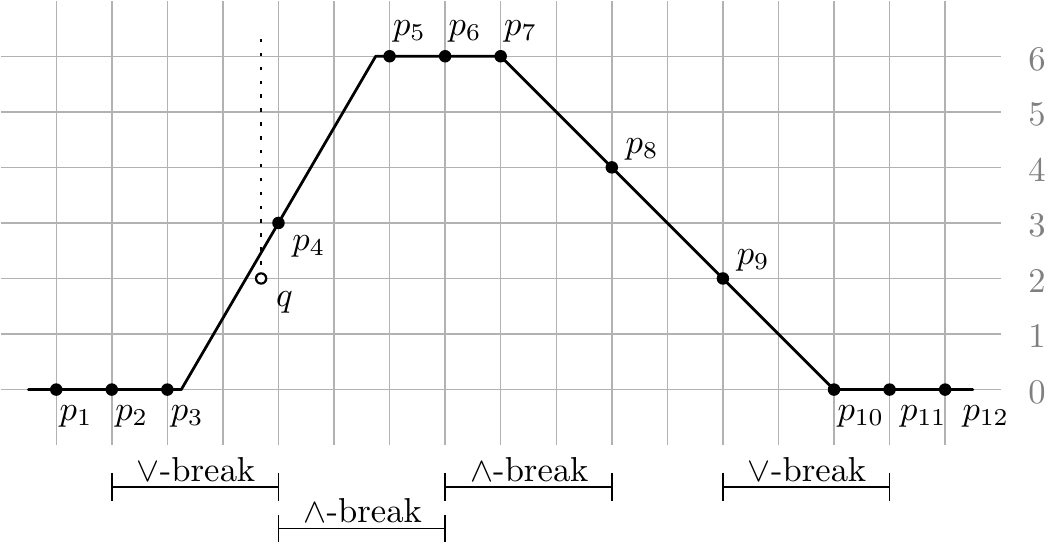}
    \caption{
        Orthogonal cross-section of a variable gadget.
        The bars below the cross-section indicate non-collinear triples used in the proof of \cref{lem:variable_gadget}.
        For example, there needs to be a convex breakpoint strictly between~$p_2$ and~$p_4$.
    }
    \label{fig:variable_gadget}
\end{figure}

\begin{lemma}
    \label{lem:variable_gadget}
    A \CPWL function~$f$ that fits~$\ell_1, \ldots, \ell_{12}$ and~$q$ exactly must have at least four breaklines.
    If it has exactly four breaklines, then they must all be parallel to the data lines.
    In this case, let~$b_1$, $b_2$, $b_3$ and~$b_4$ be the breaklines, numbered from left to right.
    It holds in both output dimensions that:
    \begin{itemize}
        \item $f$ is constantly~$0$ to the left of~$b_1$ and to the right of~$b_4$.
        \item $f$ is constantly~$6$ between~$b_2$ and~$b_3$.
        \item $b_3$ lies on~$\ell_7$ and~$b_4$ lies on~$\ell_{10}$.
        \item The \emph{slope} of the variable gadget, i.e., the norm of the gradient between~$b_1$ and~$b_2$, is at least~$\frac{3}{2}$ and at most~$3$.
    \end{itemize}
\end{lemma}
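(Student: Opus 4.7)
The plan is to do the analysis in an orthogonal cross-section through the gadget using \cref{obs:breakpoints}, and then lift to the planar picture. First, I would identify the non-collinear consecutive triples; a slope computation shows these are exactly $(\ell_2,\ell_3,\ell_4)$ (convex, forcing a breakpoint strictly in $(1,4)$), $(\ell_4,\ell_5,\ell_6)$ (concave, in $(4,7)$), $(\ell_6,\ell_7,\ell_8)$ (concave, in $(7,10)$), and $(\ell_9,\ell_{10},\ell_{11})$ (convex, in $(12,15)$); all other consecutive triples are collinear, either at a constant value or along the slope-$(-1)$ right flank. Since the four intervals above are pairwise disjoint, \cref{obs:breakpoints}\ref{obs:non_collinear_breakpoints} forces at least four distinct breakpoints in every orthogonal cross-section, and hence at least four breaklines. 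For the parallelism claim, I would observe that every orthogonal cross-section produces the same data point configuration, so each breakpoint lies in its bounded interval universally. Thus each breakline is an infinite line confined to a strip of bounded width parallel to the data lines, and must itself be parallel to them.

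Assuming exactly four parallel breaklines $b_1<b_2<b_3<b_4$, I would pin down the shape of $f$ by continuity. Left of $b_1$ and right of $b_4$, linearity together with $f(\ell_1)=f(\ell_2)=0$ and $f(\ell_{11})=f(\ell_{12})=0$ forces $f\equiv 0$. Between $b_2$ and $b_3$, the three data lines $\ell_5,\ell_6,\ell_7$ all carry label $6$ and (using $b_2<\ell_6<b_3$) lie on the single linear piece $[b_2,b_3]$, which must therefore be constantly $6$; in particular $f(b_2)=f(b_3)=6$. The segment $[b_3,b_4]$ is linear and, by the collinearity of $(\ell_7,\ell_8,\ell_9)$ and $(\ell_8,\ell_9,\ell_{10})$, has slope $-1$ passing through $(10,4)$, i.e.\ $f(x)=14-x$ on $[b_3,b_4]$. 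Matching $f(b_3)=6$ gives $b_3=8=\ell_7$, and matching $f(b_4)=0$ gives $b_4=14=\ell_{10}$.

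To locate $b_1$ and bound the slope, I would note that on $[b_1,b_2]$ the function is linear with $f(b_1)=0$ and $f(\ell_4)=3$, so the slope is $s=3/(4-b_1)$. The constraint $f(\ell_3)=0$ together with $s>0$ rules out $b_1<\ell_3=2$, giving $b_1\geq 2$. Plugging the weak data point $q$ at distance $11/3$ into $f(q)=s(11/3-b_1)\geq 2$ and simplifying yields $b_1\leq 3$. Therefore $b_1\in[2,3]$ and $s\in[3/2,3]$. The main obstacle I expect is excluding the superficially possible case $b_2\in(\ell_5,\ell_6)=(6,7)$: here $\ell_5$ would lie on $[b_1,b_2]$ with local slope $3/2$, forcing $f(b_2)=6+\tfrac{3}{2}(b_2-6)>6$, which conflicts with $f(b_2)=6$ coming from the constant-$6$ segment on the right. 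Resolving this cleanly (together with the collinear triple $(\ell_3,\ell_4,\ell_5)$ and \cref{obs:breakpoints}\ref{obs:collinear_breakpoints}, which restricts the number of breakpoints in $(\ell_3,\ell_5)$ to $0$ or at least $2$) is the most technical part of the argument.
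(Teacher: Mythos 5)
Your plan is correct and follows the paper's proof essentially step for step: the same four non-collinear triples $(\ell_2,\ell_3,\ell_4)$, $(\ell_4,\ell_5,\ell_6)$, $(\ell_6,\ell_7,\ell_8)$, $(\ell_9,\ell_{10},\ell_{11})$ with disjoint forced-breakpoint intervals, the same parallelism argument from identical orthogonal cross-sections, the same use of the collinear triples and continuity to pin $b_3$ to $\ell_7$ and $b_4$ to $\ell_{10}$, and the same use of $q$ to get $b_1\le 3$ and hence slope in $[\frac{3}{2},3]$. For the one case you flag, note that your stated contradiction via \enquote{$f(b_2)=6$ from the constant-$6$ segment} only covers $b_3\ge 8$; in the remaining sub-case $b_2\in(6,7)$ and $b_3\in(7,8)$ the middle piece contains only $\ell_6$, and one instead argues that the left piece (through $\ell_4,\ell_5$) gives $f(b_2)>6$, the right piece (through $\ell_7,\ell_8,\ell_9$) gives $f(b_3)=14-b_3>6$, and a linear piece that exceeds $6$ at both endpoints cannot equal $6$ at the interior point $\ell_6$.
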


Before we prove \cref{lem:variable_gadget}, let us describe the functionality of a variable gadget:
The slope~$s_X$ of a variable gadget for a variable~$X$ is in $\bigl[\frac{3}{2},3\bigr]$.
In order to represent values in $\bigl[\frac{1}{2}, 2\bigr]$, we say that a slope~$s_X$ encodes the value~$X = s_X - 1$.

\begin{proof}[Proof of \cref{lem:variable_gadget}]
    We first prove that four breaklines are indeed necessary to fit all data lines exactly.
    Every orthogonal cross-section contains four non-collinear triples of consecutive points: $(p_2, p_3, p_4)$, $(p_4, p_5, p_6)$, $(p_6, p_7, p_8)$ and $(p_9, p_{10}, p_{11})$.
    They pairwise share at most one point, so by \cref{obs:breakpoints}\ref{obs:non_collinear_breakpoints}, four breakpoints are indeed required.
    Since the data lines are parallel to each other, all orthogonal cross-sections look the same, and each breakpoint corresponds to a breakline that is parallel to the data lines.
    For the rest of the proof, we denote the breaklines by~$b_1$, $b_2$, $b_3$ and $b_4$, numbered from left to right.

    In the following, we consider each of the non-collinear triples individually, to further locate the positions of the breaklines.
    The following observations are all due to \cref{obs:breakpoints}:
    \begin{itemize}
        \item The non-collinear triple~$(p_2, p_3, p_4)$ implies that~$b_1$ must be between~$\ell_2$ and~$\ell_4$.
        The collinear triple $(p_1, p_2, p_3)$ enforces that~$b_1$ is to the right of~$\ell_3$ and that~$f$ is constantly~$0$ to the left of~$b_1$.

        \item The non-collinear triple~$(p_4, p_5, p_6)$ implies that~$b_2$ must be between~$p_4$ and~$p_6$.
        The collinear triple~$(p_5, p_6, p_7)$ enforces that~$b_2$ is to the left of~$\ell_5$ and that~$f$ is constantly~$6$ to the right of~$b_2$.

        \item The non-collinear triple~$(p_6, p_7, p_8)$ implies that~$b_3$ must be between~$\ell_6$ and~$\ell_8$.
        The collinear triples~$(p_5, p_6, p_7)$ and~$(p_7, p_8, p_9)$ leave~$\ell_7$ as the only remaining position for~$b_3$.
        The collinear triple $(p_5, p_6, p_7)$ further implies that~$f$ must be constantly~$6$ to the left of~$b_3$.

        \item The non-collinear triple~$(p_9, p_{10}, p_{11})$ implies that~$b_4$ must be between~$\ell_9$ and~$\ell_{11}$.
        The collinear triples~$(p_8, p_9, p_{10})$ and~$(p_{10}, p_{11}, p_{12})$ leave~$\ell_{10}$ as the only remaining position for~$b_4$.
        The collinear triple~$(p_{10}, p_{11}, p_{12})$ further implies that~$f$ must be constantly~$0$ to the right of~$b_4$.
    \end{itemize}

    As~$b_1$ must be on~$\ell_3$ or to its right and~$b_2$ must be on~$p_5$ or to its left, the slope is at least~$\frac{3}{2}$.
    Lastly, weak data point~$q$ enforces that the slope is at most~$3$.
\end{proof}

\subsubsection{Measuring a Value from a Variable Gadget}

Consider a variable gadget for a variable~$X$ with slope~$s_X$.
We call the two parallel lines with distance~$1$ to~$\ell_4$ its \emph{measuring lines}.
More precisely, we distinguish between the \emph{lower measuring line} (the one towards~$\ell_3$) and the \emph{upper measuring line} (the one towards~$\ell_5$).
Since the slope of the variable gadget is in~$\bigl[\frac{3}{2}, 3\bigr]$, both measuring lines are always inside or at the boundary of the sloped part (in other words, between breaklines~$b_1$ and~$b_2$).

Assume that the variable gadget is fit exactly.
Then, at any point~$p$ on~$\ell_4$, the variable gadget contributes~$3$ to~$f(p, \Theta)$.
It follows that a point~$p_u$ on the upper measuring line contributes~$3 + s_X$ to~$f(p_u, \Theta)$.
Similarly, a point~$p_l$ on the lower measuring line contributes~$3 - s_X$ to~$f(p_l,\Theta)$.
See \cref{fig:measuring} for a visualization.

\begin{figure}[htb]
    \centering
    \includegraphics{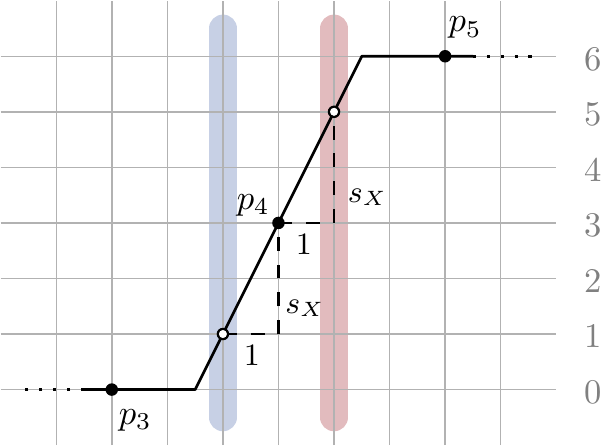}
    \caption{
        Partial cross-section of a variable gadget with slope~$s_X = 2$, so~$X = 1$.
        The lower and upper measuring lines are drawn in blue and red.
        This variable gadget contributes~$3 - s_X$ to the lower and~$3 + s_X$ to the upper measuring line.
    }
    \label{fig:measuring}
\end{figure}

\subsubsection{Linear Constraints: Addition and Copying}
\label{sec:linear_constraints}

Until this point, we considered individual gadgets separately from each other.
As soon as we have two or more gadgets, their corresponding stripes may intersect, leading to interference of the gadgets inside these intersections.
We exploit this to encode linear constraints.

Let~$\calA$ and~$\calB$ be disjoint subsets of the variables.
We can enforce a linear constraint of the form $\sum_{A \in \calA} A = \sum_{B \in \calB} B$ using just one additional data point~$p$.
In particular, we care about the following two special cases:
\begin{itemize}
    \item To copy a value from one variable~$X$ to another variable~$Y$, we model~$X = Y$ by $\calA = \{X\}$ and $\calB = \{Y\}$.
    \item To encode the addition~$X + Y = Z$, we set $\calA = \{X,Y\}$ and $\calB = \{Z\}$.
\end{itemize}

For all variables in~$\calA$, the data point~$p$ must be on the upper measuring line of their corresponding variable gadget.
Similarly, for variables in~$\calB$, the data point~$p$ must be on the lower measuring line.
This requires a placement of the gadgets such that all required measuring lines intersect in a single point, where we can place~$p$.
This is trivial for~$\abs{\calA} + \abs{\calB} = 2$, as it only requires the involved variable gadgets to be non-parallel, see \cref{fig:linear_constraint_2way}.
For~$\abs{\calA} + \abs{\calB} \geq 3$, this is more involved.
We can use the equality constraint~$X = Y$ to copy the value of a variable onto additional variable gadgets, which can then be positioned freely to obtain the required intersections, see \cref{fig:linear_constraint_3way}.
We discuss the global layout to achieve this in more detail in \cref{sec:global_layout} below.

\begin{figure}
    \centering
    \includegraphics[page=1]{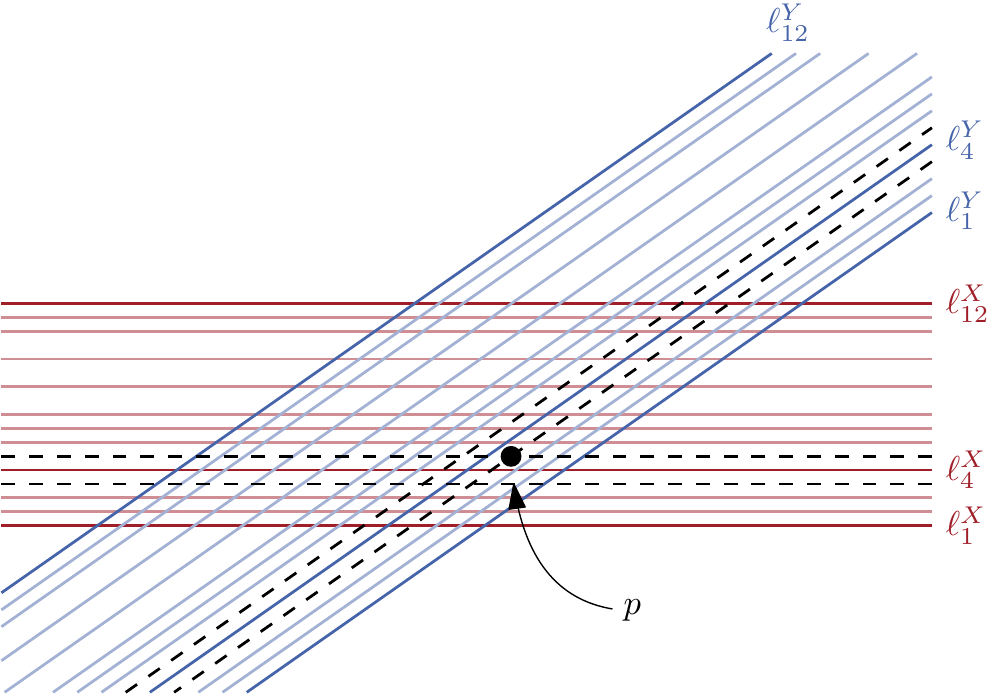}
    \caption{
        Top-down view on the intersection of two variable gadgets corresponding to two variables~$X$~(red) and~$Y$~(blue).
        The dashed lines are their measuring lines.
        The point~$p$ is placed at the intersection of the upper measuring line for~$X$ and lower measuring line for~$Y$, and receives label~$6$ to enforce the constraint~$X = Y$.
    }
    \label{fig:linear_constraint_2way}
\end{figure}

\begin{figure}
    \centering
    \includegraphics[page=2]{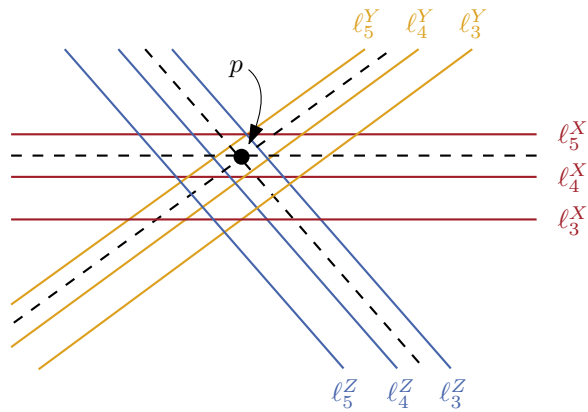}
    \caption{
        Top-down view of the intersection of three variable gadgets corresponding to variables~$X$~(red), $Y$~(orange), and~$Z$~(blue).
        The dashed lines are the upper measuring lines for~$X$ and~$Y$ and the lower measuring line for~$Z$, intersecting in a single point~$p$ with label~$10$.
        This realizes the constraint~$X + Y = Z$.
    }
    \label{fig:linear_constraint_3way}
\end{figure}

\begin{lemma}
    \label{lem:linear_constraints}
    Let~$\calA$ and~$\calB$ be disjoint subsets of the variables.
    A data point~$p$ with label $4\abs{\calA} + 2\abs{\calB}$ placed on the upper measuring line of each~$A \in \calA$ and the lower measuring line of each~$B \in \calB$ enforces the linear constraint $\sum_{A \in \calA} A = \sum_{B \in \calB} B$.
\end{lemma}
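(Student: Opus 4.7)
The plan is to exploit the additive structure of the function realized by a \fullyconnected \twolayer \ReLU network: each output dimension is a weighted sum of the contributions of individual hidden neurons, so contributions of disjoint variable gadgets superpose linearly wherever they do not interfere. First I would invoke \cref{lem:variable_gadget} for every variable gadget involved in the constraint, which pins down the contribution of each such gadget on the sloped region between its outermost breaklines~$b_1$ and~$b_2$. Concretely, if the slope is~$s_X \in [\tfrac{3}{2}, 3]$, then on this sloped region the contribution is affine with slope~$s_X$, takes the value~$0$ at~$b_1$, and interpolates through~$(\ell_4, 3)$; evaluating at the upper measuring line (at distance~$1$ from~$\ell_4$ towards~$\ell_5$) therefore yields~$3 + s_X$, and at the lower measuring line (at distance~$1$ from~$\ell_4$ towards~$\ell_3$) yields~$3 - s_X$, in both output dimensions.

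Next I would compute~$f(p, \Theta)$ directly. Since~$p$ lies on the upper measuring line of every~$A \in \calA$ and the lower measuring line of every~$B \in \calB$, and lies outside every other gadget's stripe (where the contribution is identically zero), additivity gives
\[
    f(p, \Theta) = \sum_{A \in \calA}(3 + s_A) + \sum_{B \in \calB}(3 - s_B) = 3(\abs{\calA} + \abs{\calB}) + \sum_{A \in \calA} s_A - \sum_{B \in \calB} s_B.
\]
Setting this equal to the prescribed label~$4\abs{\calA} + 2\abs{\calB}$ and substituting~$s_X = X + 1$ for every involved variable~$X$ immediately simplifies to~$\sum_{A \in \calA} A = \sum_{B \in \calB} B$, which is the claimed linear constraint. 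The converse direction (that any assignment~$(s_X)_X$ realizing the constraint can be fit with an exact configuration) is not required here, because \cref{lem:linear_constraints} only states the implication induced by exact fitting.

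The main obstacle will be justifying the additive decomposition at~$p$ rigorously: one must ensure that the breaklines enforced by \cref{lem:variable_gadget} for different gadgets come from distinct hidden neurons and do not silently combine or cancel, and that outside its own stripe a gadget contributes exactly zero. This is controlled by the design decision from \cref{sec:geometry_of_NN} that we avoid breaklines combining or canceling, together with \cref{lem:fittableCPWL}, which confirms that the globally decomposed \CPWL function is realizable by a \twolayer \ReLU network whose hidden neurons correspond one-to-one with the breaklines. The complementary geometric task of arranging the variable gadgets so that the required measuring lines actually meet in a common point~$p$ is deferred to the global layout in \cref{sec:global_layout}.
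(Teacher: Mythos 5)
Your proposal is correct and follows essentially the same route as the paper's proof: read off the contributions $3+s_A$ and $3-s_B$ from the measuring lines, sum them by additivity, and substitute $s_X = X+1$ to recover the label $4\abs{\calA}+2\abs{\calB}$ exactly when $\sum_{A\in\calA}A=\sum_{B\in\calB}B$. Your additional remarks on justifying the superposition and deferring the geometric placement to the global layout match how the paper handles those issues elsewhere.
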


\begin{proof}
    Let~$s_X$ be the slope of the variable gadget for variable~$X$.
    For each~$A \in \calA$, the data point~$p$ is placed on the upper measuring line of $A$'s variable gadget, so~$A$ contributes $3 + s_A$ to $f(p, \Theta)$.
    Similarly, for each variable~$B \in \calB$, the data point~$p$ is placed on the lower measuring line of~$B$'s variable gadget, so~$B$ contributes $3 - s_B$ to $f(p, \Theta)$.

    The overall contribution of all involved variables adds up to
    \begin{align*}
        f(p, \Theta)
        &= \sum_{A \in \calA} (3 + s_A) + \sum_{B \in \calB} (3 - s_B) \\
        &= \sum_{A \in \calA} (4 + A) + \sum_{B \in \calB} (2 - B) \\
        &= 4 \abs{\calA} + 2 \abs{\calB} + \sum_{A \in \calA} A - \sum_{B \in \calB} B
        \text{,}
    \end{align*}
    where we used that~$X = s_X - 1$ to get from the first to the second line.
    Setting the label of~$p$ to $4\abs{\calA} + 2\abs{\calB}$ yields that~$p$ is fit exactly if and only if $\sum_{A \in \calA} A = \sum_{B \in \calB} B$.
\end{proof}

As already mentioned above, we do not require \cref{lem:linear_constraints} in its full generality, but only two special cases:
The only linear constraint in an \ETRINV instance is the addition~$X + Y = Z$.
Additionally, we also need the ability to copy values in our reduction, i.e., constraints of the form~$X = Y$.

\begin{corollary}
    \label{cor:linear_constraints_labels}
    To encode the addition constraint $X + Y = Z$ of \ETRINV, data point~$p$ has label~$10$.
    For the copy constraint~$X = Y$, the data point has label~$6$.
\end{corollary}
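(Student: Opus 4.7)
The plan is to derive both labels by direct substitution into the general formula $4\abs{\calA} + 2\abs{\calB}$ from \cref{lem:linear_constraints}, since the corollary is the specialization of that lemma to the two cases actually arising in our reduction. No new geometric or combinatorial argument is needed; the work was already done in the proof of \cref{lem:linear_constraints}, and it only remains to instantiate $\calA$ and $\calB$ appropriately for each constraint type.

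For the copy constraint $X = Y$, I would set $\calA = \{X\}$ and $\calB = \{Y\}$, so that $\sum_{A \in \calA} A = \sum_{B \in \calB} B$ unfolds to $X = Y$. Then $\abs{\calA} = \abs{\calB} = 1$, and the label prescribed by \cref{lem:linear_constraints} equals $4 \cdot 1 + 2 \cdot 1 = 6$. For the addition constraint $X + Y = Z$, the natural choice is $\calA = \{X, Y\}$ and $\calB = \{Z\}$, which rewrites $\sum_{A \in \calA} A = \sum_{B \in \calB} B$ as $X + Y = Z$. Here $\abs{\calA} = 2$ and $\abs{\calB} = 1$, giving the label $4 \cdot 2 + 2 \cdot 1 = 10$.

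There is no real obstacle in the proof itself; it is a one-line corollary. The only thing worth checking is that the sets $\calA$ and $\calB$ are indeed disjoint (as required by the hypothesis of \cref{lem:linear_constraints}), which is trivially true because $X$, $Y$, $Z$ are distinct variables of the \ETRINV instance in both cases. I would therefore write the proof as two short sentences, one per case, each ending with the arithmetic $4\abs{\calA} + 2\abs{\calB}$.
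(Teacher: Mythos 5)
Your proof is correct and is exactly the paper's approach: the corollary is just the instantiation of \cref{lem:linear_constraints} with $\calA = \{X,Y\}$, $\calB = \{Z\}$ for addition (label $4\cdot 2 + 2\cdot 1 = 10$) and $\calA = \{X\}$, $\calB = \{Y\}$ for copying (label $4\cdot 1 + 2\cdot 1 = 6$), precisely the choices the paper names just before stating the corollary.
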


\subsubsection{Inversion Gadget}

In essence, an \emph{inversion gadget} is the superposition of two variable gadgets.
By using data lines with different labels in the two output dimensions, it can represent two real variables at once.
However, their values have a non-linear dependency.

Formally, the inversion gadget consists of~$13$ parallel data lines~$\ell_1, \ldots, \ell_{13}$, numbered from one side to the other, in the figures from left to right.
The following table lists their relative distance to~$\ell_1$ and their labels:
\begin{center}
    \begin{tabular}{lccccccccccccc}
        \toprule
        & $\ell_1$ & $\ell_2$ & $\ell_3$ & $\ell_4$ & $\ell_5$ & $\ell_6$ & $\ell_7$ & $\ell_8$ & $\ell_9$ & $\ell_{10}$ & $\ell_{11}$ & $\ell_{12}$ & $\ell_{13}$ \\
        \midrule
        distance to $\ell_1$ & $0$ & $1$ & $2$ & $4$ & $7$ & $9$ & $10$ & $11$ & $13$ & $15$ & $17$ & $18$ & $19$ \\
        label in dim. $1$ & $0$ & $0$ & $0$ & $3$ & $6$ & $6$ & $6$ & $6$ & $4$ & $2$ & $0$ & $0$ & $0$ \\
        label in dim. $2$ & $0$ & $0$ & $0$ & $0$ & $3$ & $6$ & $6$ & $6$ & $4$ & $2$ & $0$ & $0$ & $0$ \\
        \bottomrule
    \end{tabular}
\end{center}
See \cref{fig:inversion_gadget} for an orthogonal cross-section through an inversion gadget.

\begin{figure}[htb]
    \centering
    \includegraphics{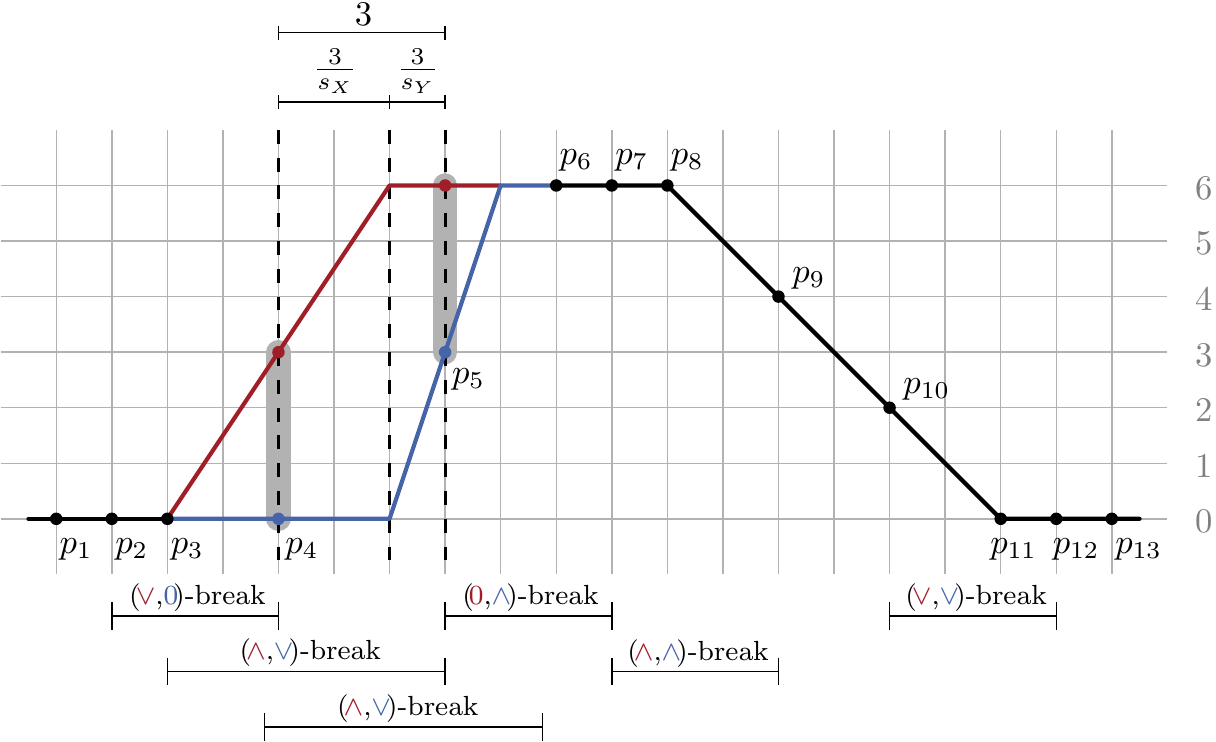}
    \caption{
        Cross-section of the inversion gadget.
        Data points~$p_4$ and~$p_5$ have different labels in the first (red) and second (blue) output dimension.
    }
    \label{fig:inversion_gadget}
\end{figure}

\begin{lemma}
    \label{lem:inversion_gadget}
    A \CPWL function that fits~$\ell_1, \ldots, \ell_{13}$ exactly must have at least five breaklines.
    If it has exactly five breaklines, then they must all be parallel to the data lines.
    In this case, let~$b_1$, $b_2$, $b_3$, $b_4$ and~$b_5$ be the breaklines, numbered from left to right.
    It holds that:
    
    \begin{itemize}
        \item $f$ is constantly~$(0, 0)$ to the left of~$b_1$ and to the right of~$b_5$.
        \item In output dimension~$1$, $f$ is constantly~$6$ between~$b_2$ and~$b_4$.
        \item In output dimension~$2$, $f$ is constantly~$0$ to the left of~$b_2$ and constantly~$6$ between~$b_3$ and~$b_4$.
        \item $b_4$ is on~$\ell_8$ and~$b_5$ is on~$\ell_{11}$.
        \item The inversion gadget has two \emph{slopes}~$s_X$ and~$s_Y$.
        \begin{itemize}[nosep]
            \item In dimension~$1$, slope~$s_X$ is the norm of the gradient between~$b_1$ and~$b_2$.
            \item In dimension~$2$, slope~$s_Y$ is the norm of the gradient between~$b_2$ and~$b_3$.
        \end{itemize}
        It holds that $s_Xs_Y = s_X + s_Y$.
    \end{itemize}
\end{lemma}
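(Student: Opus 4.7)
The plan is to mirror the approach taken in the proof of \cref{lem:variable_gadget}, using \cref{obs:breakpoints} on orthogonal cross-sections and treating each output dimension in turn before combining the constraints. First I would enumerate, in each output dimension, the non-collinear triples of consecutive points. In dimension~$1$ the triples $(p_2,p_3,p_4)$, $(p_3,p_4,p_5)$, $(p_4,p_5,p_6)$, $(p_7,p_8,p_9)$ and $(p_{10},p_{11},p_{12})$ force a convex breakpoint in $(1,4)$, a concave breakpoint in each of $(2,7)$ and $(4,9)$ (which may be served by a single breakpoint in the intersection $(4,7)$), a concave breakpoint in $(10,13)$, and a convex breakpoint in $(15,18)$. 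In dimension~$2$, a symmetric analysis yields a convex breakpoint in $(4,7)$ (from two triples with overlapping intervals), a concave breakpoint in $(7,10)$, a concave breakpoint in $(10,13)$, and a convex breakpoint in $(15,18)$.

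To obtain the lower bound of five breaklines, I would note that each breakline contributes at most one breakpoint per dimension, so merging a dimension-$1$ and a dimension-$2$ requirement into the same breakline is only possible when their intervals overlap (the position is shared across dimensions). On the right side, the two concave-in-$(10,13)$ and two convex-in-$(15,18)$ requirements merge into two breaklines $b_4$ and $b_5$ of types $(\wedge,\wedge)$ and $(\vee,\vee)$. On the left side, the dimension-$1$ concave-in-$(4,7)$ and dimension-$2$ convex-in-$(4,7)$ requirements merge into a single breakline $b_2$ of type $(\wedge,\vee)$, but the dimension-$1$ convex-in-$(1,4)$ and dimension-$2$ concave-in-$(7,10)$ requirements have no compatible partner in the other dimension and therefore demand their own breaklines $b_1$ and $b_3$, for a total of five. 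Parallelism to the data lines follows because every orthogonal cross-section produces the same constraints. The exact positions $b_4=\ell_8$ and $b_5=\ell_{11}$ can be pinned down exactly as in the variable gadget: the collinear triples $(p_6,p_7,p_8)$, $(p_8,p_9,p_{10})$ and $(p_{11},p_{12},p_{13})$ forbid a lone breakpoint inside $(9,11)$, $(11,15)$ and $(17,19)$, squeezing $b_4$ to distance~$11$ and $b_5$ to distance~$17$.

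Finally, to derive the inversion relation, I would read off the remaining types from the plateaus in the statement: in dimension~$2$ the function is zero to the left of $b_2$, so $b_1$ must be erased in dimension~$2$; and in dimension~$1$ the function is constantly~$6$ between $b_2$ and $b_4$, so $b_3$ must be erased in dimension~$1$. Writing $\beta_i$ for the position of $b_i$, fitting $\ell_4$ in dimension~$1$ (label~$3$) gives $(4-\beta_1)s_X=3$, and combined with the full-ramp condition $(\beta_2-\beta_1)s_X=6$ this yields $\beta_2=4+3/s_X$. Fitting $\ell_5$ in dimension~$2$ (label~$3$) gives $(7-\beta_2)s_Y=3$, i.e.\ $\beta_2=7-3/s_Y$. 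Equating the two expressions for $\beta_2$ yields $1/s_X+1/s_Y=1$, equivalently $s_Xs_Y=s_X+s_Y$. The main obstacle I anticipate is the merging analysis for the lower bound: with so many overlapping triples one must argue carefully that no further merges are possible, and in particular that the dimension-$1$ convex-in-$(1,4)$ requirement and the dimension-$2$ concave-in-$(7,10)$ requirement are geometrically isolated and cannot be absorbed into any other breakline.
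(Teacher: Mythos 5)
Your proposal follows essentially the same route as the paper's proof: orthogonal cross-sections, \cref{obs:breakpoints} applied to collinear and non-collinear triples, a merging/counting argument for the lower bound of five breaklines, collinear triples to pin $b_4$ and $b_5$, and equating the two expressions for the position of $b_2$ (from $f^1(\ell_4)=3$ and $f^2(\ell_5)=3$) to obtain $s_Xs_Y=s_X+s_Y$; that last computation matches the paper exactly. The one step you rightly flag as delicate is also the one where your stated justification does not yet go through: the claim that the dimension-$1$ convex requirement in $(1,4)$ and the dimension-$2$ concave requirement in $(7,10)$ \enquote{have no compatible partner} is false if you only compare the raw intervals and curvatures. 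The dimension-$2$ convex requirement in $(2,7)$ overlaps $(1,4)$ on $(2,4)$ with matching curvature, and the dimension-$1$ concave requirement in $(4,9)$ overlaps $(7,10)$ on $(7,9)$ with matching curvature, so prima facie both merges are available. To exclude them you must first shrink the intervals using the collinear triples: $(p_2,p_3,p_4)$ is collinear in dimension~$2$, forcing the dimension-$2$ convex breakpoint into $[4,7)$ away from $[2,4)$, and $(p_5,p_6,p_7)$ is collinear in dimension~$1$, forcing the dimension-$1$ concave breakpoint into $(4,7]$ away from $(7,9]$. The paper packages this more compactly by recording for each non-collinear triple its full two-dimensional type (e.g.\ $(\vee,0)$, $(\wedge,\vee)$, $(0,\wedge)$) and observing that overlapping triples of distinct types cannot share a breakline. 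Finally, a small omission: the collinear triples you list only narrow $b_5$ to $(15,17]$; you also need $(p_9,p_{10},p_{11})$, which forbids a lone breakpoint in $(13,17)$, to pin $b_5$ onto~$\ell_{11}$.
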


Before we prove \cref{lem:inversion_gadget}, let us describe the functionality of an inversion gadget:
As mentioned above, an inversion gadget is the superposition of two variable gadgets.
Only four of the five breaklines are \enquote{visible} in each output dimension, the fifth being erased.
It has a slope in each dimension: $s_X$ in the first, $s_Y$ in the second.
Therefore, it encodes two values $X = s_X - 1$ and~$Y = s_Y - 1$.
It holds that
\[
    XY
    = (s_X - 1)(s_Y - 1)
    = s_Xs_Y - s_X - s_Y + 1
    \stackrel{(*)}{=} s_X + s_Y - s_X - s_Y + 1
    = 1
    \text{,}
\]
where the equality labeled $(*)$ follows from the $s_Xs_Y = s_X + s_Y$ condition provided by \Cref{lem:inversion_gadget}.
We conclude that the non-linear relation between the two slopes exactly models the inversion constraint~$XY = 1$ of an \ETRINV instance.

\begin{proof}[Proof of \cref{lem:inversion_gadget}]
    As in the proof of \cref{lem:variable_gadget}, we use non-collinear triples to argue that each orthogonal cross-section requires exactly five breakpoints.
    Again, because all data lines are parallel to each other, all orthogonal cross-sections look the same, and all breakpoints correspond to a breakline that is parallel to the data lines.

    All the following observations rely on \cref{obs:breakpoints}:
    \begin{itemize}
        \item The non-collinear triple~$(p_2, p_3, p_4)$ enforces a breakline of type~$(\vee, 0)$ strictly between~$\ell_2$ and~$\ell_4$.
        We call this breakline~$b_1$.
        As~$(p_1, p_2, p_3)$ is collinear in both dimensions, $b_1$ must be on or to the right of~$\ell_3$ and~$f$ must be constantly~$(0,0)$ to the left of~$b_1$.

        \item The non-collinear triple~$(p_5, p_6, p_7)$ enforces a breakline of type~$(0, \wedge)$ strictly between~$\ell_5$ and~$\ell_7$.
        We call this breakline~$b_3$.
        As~$(p_6, p_7, p_8)$ is collinear in both dimensions, $b_3$ must be on or to the left of~$\ell_6$ and~$f$ must be constantly~$(6,6)$ to the right of~$b_3$.

        \item The non-collinear triple~$(p_7, p_8, p_9)$ enforces a breakline of type~$(\wedge, \wedge)$ strictly between~$\ell_7$ and~$\ell_9$.
        We call this breakline~$b_4$.
        As $(p_6, p_7, p_8)$ and $(p_8, p_9, p_{10})$ are collinear in both dimensions, $b_4$ must lie on~$\ell_8$.

        \item The non-collinear triple~$(p_{10}, p_{11}, p_{12})$ enforces a breakline of type~$(\vee, \vee)$ strictly between~$\ell_{10}$ and~$\ell_{12}$.
        We call this breakline~$b_5$.
        The triples $(p_9, p_{10}, p_{11})$ and $(p_{11}, p_{12}, p_{13})$ are collinear in both dimensions, so $b_5$ must lie on~$\ell_{11}$.
    \end{itemize}
    The triples enforcing the breaklines~$b_1$, $b_3$, $b_4$ and~$b_5$ are pairwise disjoint, so all of these breaklines are indeed necessary.

    An orthogonal cross-section of an inversion gadget contains two more non-collinear triples:
    $(p_3, p_4, p_5)$ and $(p_4, p_5, p_6)$.
    Both enforce a breakline of type~$(\wedge, \vee)$.
    Note, that all other non-collinear triples intersecting them have a different type.
    Therefore, none of the previous four breaklines is compatible, and at least one more breakline is indeed necessary. 
    Assuming that~$\ell_1, \ldots, \ell_{13}$ must be fit with just five breaklines, we call this breakline~$b_2$, and conclude that it must be on or between~$\ell_4$ and~$\ell_5$.
    Further, $f$ must be constantly~$0$ in dimension~$2$ to the left of~$b_2$.
    
    It remains to prove that~$s_Xs_Y = s_X + s_Y$.
    To this end, we derive the exact position of~$b_2$ between~$\ell_4$ and~$\ell_5$.
    In dimension~$1$, it holds that $f^1(\ell_4, \Theta) = 3$ and $f^1(b_2, \Theta) = 6$.
    The distance between~$\ell_4$ and~$b_2$ is~$\frac{6-3}{s_X}$.
    In dimension~$2$, it holds that~$f^2(\ell_5, \Theta) = 3$ and~$f^2(b_2, \Theta) = 0$.
    The distance between~$\ell_5$ and~$b_2$ is~$\frac{3-0}{s_Y}$.
    We conclude that~$3 = \frac{3}{s_X} + \frac{3}{s_Y}$, which is equivalent to~$s_X + s_Y = s_Xs_Y$ for all~$s_X, s_Y \neq 0$.
\end{proof}

\subsubsection{Applying the Inversion Gadget}

An inversion gadget has two pairs of measuring lines, one in each dimension.
The lower and upper measuring lines in dimension~$1$ have distance~$1$ to~$\ell_4$.
Similar they have distance~$1$ to~$\ell_5$ in dimension~$2$.

To encode an $XY = 1$ constraint of an \ETRINV instance, we first identify two normal variable gadgets carrying the variables~$X$ and~$Y$.
Then the inversion gadget is placed such that its measuring lines intersect the measuring lines of the variable gadgets.
We copy~$X$ to the first dimension of the inversion gadget at the intersection with the variable gadget for~$X$.
Similarly, we copy~$Y$ to the second dimension of the inversion gadget at the intersection with the variable gadget for~$Y$.
This copying can be achieved using weak data points that are only active in the respective dimension (having a \enquote{$\geq\! 0$} label in the other dimension).
See \cref{fig:applying_inversion} for a top-down view.
Technically, the inversion gadget enforces the inversion constraint only in one dimension of each involved variable gadget.
However, this is sufficient because variable gadgets always carry the same value in both output dimensions.

\begin{figure}[htb]
    \centering
    \includegraphics{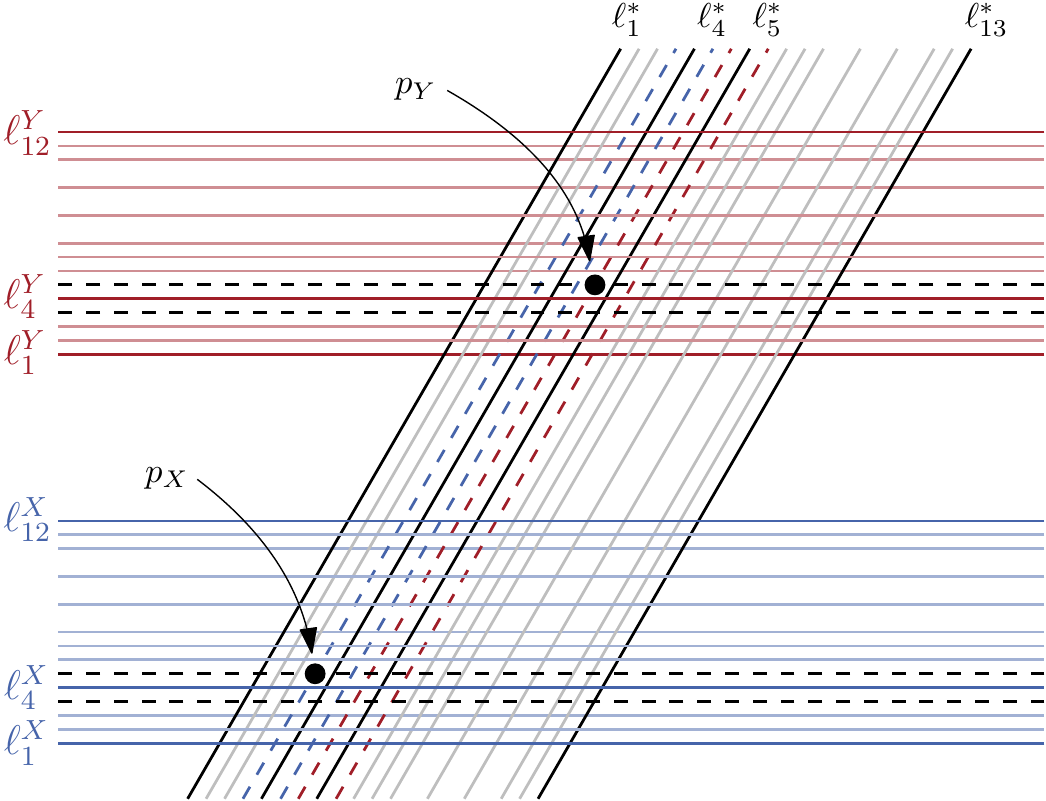}
    \caption{
        Top-down view on two variable gadgets (horizontal) for variables~$X$~(blue) and~$Y$~(red) (the data lines are solid, the measuring lines are dashed).
        The sloped gadget is an inversion gadget.
        Two weak data points~$p_X$ and~$p_Y$ copy~$X$ and~$Y$ to the first and second dimension of the inversion gadget, respectively.
    }
    \label{fig:applying_inversion}
\end{figure}

\subsubsection{Realizing Weak Data Points: Lower Bound Gadgets}
\label{sec:lower_bound_gadget}

So far, we used weak data points, i.e., data points whose label is just a lower bound on~$f(\cdot, \Theta)$.
Weak data points are just a concept meant to simplify the description of the gadgets; a \trainNN instance cannot contain weak data points.
For this reason, we introduce a \emph{lower bound gadget} that simulates a weak data point using only ordinary data points, i.e., data points with constant labels.

Recall that a weak data point can have a lower bound label in either one or in both dimensions.
For this reason, a lower bound gadget can be either \emph{active} or \emph{inactive} in each dimension.
If the lower bound gadget is active in some dimension, its breaklines form a $\vee$-shape of (almost) arbitrary depth in that dimension.
On the other hand, if the lower bound gadget is inactive in some output dimension, then it is constantly~$0$ in this dimension.

Formally, a lower bound gadget consists of eight data lines~$\ell_1, \ldots, \ell_8$, numbered from one side to the other, in the figures from left to right.
The following table lists their relative distance to~$\ell_1$ and their labels:

\begin{center}
    \begin{tabular}{lcccccccc}
        \toprule
        & $\ell_1$ & $\ell_2$ & $\ell_3$ & $\ell_4$ & $\ell_5$ & $\ell_6$ & $\ell_7$ & $\ell_8$ \\
        \midrule
        distance to $\ell_1$ & $0$ & $1$ & $2$ & $3$ & $5$ & $6$ & $7$ & $8$ \\
        label in active dimension(s) & $0$ & $0$ & $0$ & $-1$ & $-1$ & $0$ & $0$ & $0$ \\
        label in inactive dimension & $0$ & $0$ & $0$ & $0$ & $0$ & $0$ & $0$ & $0$ \\
        \bottomrule
    \end{tabular}
\end{center}
See \cref{fig:lower_bound_gadget} for orthogonal cross-sections through a lower bound gadget.
We always assume that a lower bound gadget has at least one active dimension, as it is otherwise unnecessary.

\begin{figure}
    \begin{subfigure}[t]{.47\textwidth}
	\centering
	\includegraphics[page=1]{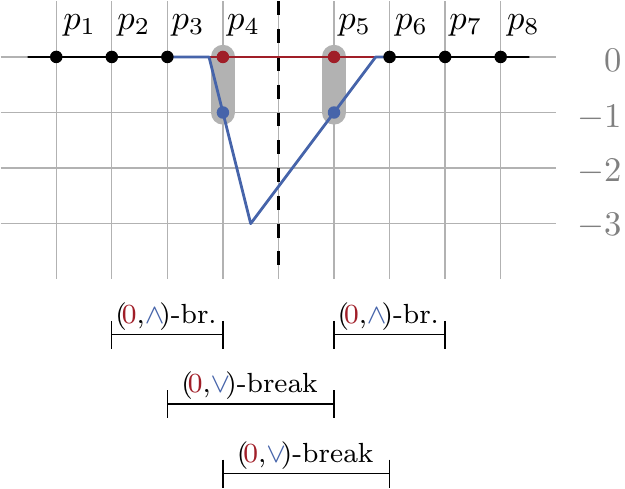}
        \caption{The lower bound gadget can be asymmetric.}
        \label{fig:lower_bound_gadget_asymmetric}
    \end{subfigure}
    \hfill
    \begin{subfigure}[t]{.47\textwidth}
        \centering
        \includegraphics[page=2]{figs/lower_bound_cross_section.pdf}
        \caption{A lower bound gadget has a maximum contribution to the weak data point of~$-2$.}
        \label{fig:lower_bound_gadget_maximum_contribution}
    \end{subfigure}
    \caption{
        Cross-sections of a lower bound gadget which is inactive in the first (red) dimension and active in the second (blue) dimension.
        It is used to simulate a weak data point in the active dimension~(blue) that lies on the dashed vertical line.
        It does not contribute to~$f(\cdot, \Theta)$ in the inactive dimension~(red), i.e., all breaklines are erased (type~$0$).
    }
    \label{fig:lower_bound_gadget}
\end{figure}

\begin{lemma}
    \label{lem:lower_bound_gadget}
    A \CPWL function~$f$ that fits~$\ell_1, \ldots, \ell_8$ exactly must have at least three breaklines.
    If it has exactly three breaklines, then they must all be parallel to the data lines.
    In this case, let~$b_1$, $b_2$ and~$b_3$ be the breaklines.
    In an inactive dimension,~$f$ is constantly~$0$.
    In an active dimension, it holds that:
    \begin{itemize}
        \item $f$ is constantly~$0$ to the left of~$b_1$ and to the right of~$b_3$.
        \item $f(p, \Theta) \leq -2$ for all points~$p$ with equal distance to~$\ell_4$ and~$\ell_5$.
    \end{itemize}
\end{lemma}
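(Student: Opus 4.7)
The plan is to follow the template of the proofs of \cref{lem:variable_gadget} and \cref{lem:inversion_gadget}: take an orthogonal cross-section, producing eight consecutive points $p_1,\ldots,p_8$, and apply \cref{obs:breakpoints} in each output dimension separately. In the active dimension, there are exactly four non-collinear consecutive triples: $(p_2,p_3,p_4)$ and $(p_5,p_6,p_7)$ force concave breakpoints strictly in the intervals $(1,3)$ and $(5,7)$ respectively, while $(p_3,p_4,p_5)$ and $(p_4,p_5,p_6)$ force convex breakpoints strictly in $(2,5)$ and $(3,6)$. The two concave constraints lie in disjoint intervals and a convex breakpoint cannot coincide with a concave one, yet both convex constraints can be simultaneously satisfied by a single breakpoint in $(3,5)$. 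Hence at least three breakpoints are required: $b_1$ concave, $b_2$ convex, and $b_3$ concave from left to right. Assuming exactly three breaklines, the standard argument that all orthogonal cross-sections look identical forces them to be parallel to the data lines. Adding the collinear triples $(p_1,p_2,p_3)$ and $(p_6,p_7,p_8)$ and invoking \cref{obs:breakpoints}\ref{obs:collinear_breakpoints} tightens the positions to $b_1\in[2,3)$ and $b_3\in(5,6]$, and shows that $f$ vanishes to the left of $b_1$ and to the right of $b_3$.

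For an inactive dimension, all eight data points have value $0$. The leftmost linear piece $(-\infty,b_1]$ contains the three collinear points $p_1,p_2,p_3$ at $y=0$, so $f$ must be identically $0$ on it; symmetrically, $f\equiv 0$ on $[b_3,\infty)$. Each of the two middle pieces $[b_1,b_2]$ and $[b_2,b_3]$ contains at least one interior data point (one of $x=3,4$ and one of $x=4,5$) together with an endpoint where $f=0$ has already been established, so linearity forces $f\equiv 0$ on them as well. Thus $f$ is identically zero in the inactive dimension.

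For the lower bound, let $p$ sit on the midline between $\ell_4$ and $\ell_5$, so in the cross-section it has coordinate $x=4$. A short case analysis on whether $b_2\geq 4$ or $b_2\leq 4$ finishes the proof. In the first case $4\in[b_1,b_2]$, so linearity of $f$ on this interval together with $f(b_1)=0$ and $f(3)=-1$ yields $f(4)=-(4-b_1)/(3-b_1)$, which evaluates to at most $-2$ since $b_1\in[2,3)$. The second case is symmetric: using $f(b_3)=0$ and $f(5)=-1$ together with $b_3\in(5,6]$ gives $f(4)=-(b_3-4)/(b_3-5)\leq -2$. The only nontrivial obstacle is organizing this last case distinction cleanly; the remainder of the proof is combinatorial bookkeeping of breakline types and positions, essentially identical to the analogous analyses for the variable and inversion gadgets.
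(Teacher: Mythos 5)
Your proposal is correct and follows essentially the same route as the paper: non-collinear consecutive triples plus \cref{obs:breakpoints} force two concave breaklines in the disjoint intervals around $\ell_3$ and $\ell_6$ and one convex breakline strictly between $\ell_4$ and $\ell_5$, with the collinear triples pinning $b_1\in[\ell_3,\ell_4)$ and $b_3\in(\ell_5,\ell_6]$. Your final two-case computation of $f$ at the midpoint (depending on which side of $x=4$ the convex breakline $b_2$ falls) is in fact a cleaner and more rigorous justification of the $\leq -2$ bound than the paper's brief "extreme case" evaluation at $b_2$, but it is the same argument in substance.
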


\begin{proof}
    As in the proof of \cref{lem:variable_gadget}, we use non-collinear triples to argue that at least three breaklines are necessary for a lower bound gadget with at least one active dimension.
    Again, because all data lines are parallel to each other, all orthogonal cross-sections look the same, and all breakpoints correspond to a breakline that is parallel to the data lines.

    The following observations rely on \cref{obs:breakpoints}.
    In an inactive dimension, all triples are collinear, so $f$ must be constantly~$0$ everywhere.
    From now on, we focus purely on the active dimension(s).
    \begin{itemize}
        \item The non-collinear triple~$(p_2, p_3, p_4)$ enforces a $\wedge$-type breakline strictly between~$\ell_2$ and~$\ell_4$.
        We call this breakline~$b_1$.
        By the collinear triple~$(p_1, p_2, p_3)$, $b_1$ must be on or to the right of~$\ell_3$.
        Further, $f$ must be constantly~$0$ to the left of~$b_1$.

        \item The non-collinear triple~$(p_5, p_6, p_7)$ enforces a $\wedge$-type breakline strictly between~$\ell_5$ and~$\ell_7$.
        We call this breakline~$b_3$.
        By the collinear triple~$(p_6, p_7, p_8)$, $b_3$ must be on or to the left of~$\ell_6$.
        Further, $f$ must be constantly~$0$ to the right of~$b_3$.
    \end{itemize}
    Breaklines~$b_1$ and~$b_3$ are both necessary, as the non-collinear triples enforcing them are disjoint.
    The lower bound gadget contains two more non-collinear triples:
    $(p_3, p_4, p_5)$ and $(p_4, p_5, p_6)$, both of type~$\vee$.
    Assuming that all data lines are fit with just three breaklines, a third breakline called~$b_2$ must lie on or between~$p_4$ and~$p_5$.
    The value of~$f(b_2, \Theta)$ can be arbitrarily small.
    However, in the extreme case, it holds that~$b_1 = \ell_4$ and~$b_3 = \ell_5$, in which case~$f(b_2, \Theta) = -2$.
    No larger values are possible.
\end{proof}

For a lower bound gadget with two active dimensions, we can make the following \lcnamecref{obs:lower_bound_gadget_both_dimensions}.
It holds because the breaklines must be at the exact same positions in both output dimensions.

\begin{observation}
    \label{obs:lower_bound_gadget_both_dimensions}
    A lower bound gadget that is active in both dimensions contributes the same amount in both dimensions.
\end{observation}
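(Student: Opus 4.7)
The plan is to combine two facts established earlier in the paper. First, by the observations in Section~\ref{sec:geometry_of_NN}, the position of the breakline contributed by a hidden neuron depends only on its first-layer weights and bias, and is therefore independent of the second-layer weights $c_{i,j}$. Consequently, if both output dimensions of the lower bound gadget are active, i.e., none of the three breaklines from \Cref{lem:lower_bound_gadget} is erased in either dimension, then the breaklines $b_1,b_2,b_3$ occur at exactly the same positions in $f^1(\cdot,\Theta)$ and $f^2(\cdot,\Theta)$. Second, \Cref{lem:lower_bound_gadget} already forces these breaklines to be parallel to the data lines, so that orthogonal cross-sections in either output dimension look identical up to the parameters $b_1,b_2,b_3$ and the segment slopes.

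I would then argue that, once the common positions $b_1 \in [\ell_3,\ell_4)$, $b_2 \in [\ell_4,\ell_5]$, $b_3 \in (\ell_5,\ell_6]$ are fixed, the piecewise linear function on an orthogonal cross-section in an active dimension is uniquely determined by the exact-fit requirements on $\ell_1,\ldots,\ell_8$. Concretely, the function is constantly~$0$ on $(-\infty, b_1]$ by fitting $p_1,p_2,p_3$ and continuity, linear from $(b_1,0)$ through $(\ell_4,-1)$ up to $b_2$, linear from $b_2$ through $(\ell_5,-1)$ to $(b_3,0)$, and constantly~$0$ on $[b_3,\infty)$. Each of the four segments is pinned down by $b_1,b_2,b_3$ together with the two endpoint constraints~$-1$ at $\ell_4$ and $\ell_5$, and these inputs are identical in both active dimensions.

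Since the resulting piecewise linear descriptions in both output dimensions are therefore the same function on the cross-section, and the gadget is translation-invariant along its data lines, the functions $f^1(\cdot,\Theta)$ and $f^2(\cdot,\Theta)$ agree pointwise on the entire stripe covered by the gadget. In particular, at the point where the lower bound gadget is used to simulate the weak data point, its contribution coincides in both output dimensions, which is exactly the claim of the observation.

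The only mild obstacle is the bookkeeping that bridges \enquote{same breakline positions across dimensions} (a first-layer statement) with \enquote{same slopes between consecutive breaklines} (which a priori depend on $c_{i,j}$); the key point is that once the breakline positions are common and the data labels in both active dimensions coincide, the slopes are forced by the fit-exactly constraints rather than being free parameters, so the two output dimensions must realize the same \CPWL function.
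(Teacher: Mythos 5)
Your proof is correct and follows essentially the same route as the paper, which justifies the observation in a single sentence by noting that the breaklines must lie at identical positions in both output dimensions. You simply spell out the remaining detail (that the common breakline positions together with the identical labels on $\ell_1,\ldots,\ell_8$ force identical slopes, hence identical functions), which the paper leaves implicit.
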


We need one lower bound gadget per weak data point.
It gets placed such that the weak data point is equidistant to~$\ell_4$ and~$\ell_5$.
The weak data point with label~$\geq y$ is converted into an ordinary data point with label~$y-2$.

By \cref{lem:lower_bound_gadget}, the lower bound gadget can contribute any value $c \in (-\infty,-2]$ to the new data point.
Thus, the data point can be fit perfectly if and only if the other gadgets contribute at least a value of~$y$ to the data point, that is, the intended lower bound constraint is met.

\subsubsection{Realizing Data Lines using Data Points}
\label{sec:data_lines_to_data_points}

We previously assumed that our gadgets are defined by data \emph{lines}, while in reality, we are only allowed to use data \emph{points}.
In this \lcnamecref{sec:data_lines_to_data_points}, we argue that a set of data lines can be simulated by replacing each data line by three data points.
This allows us to define the gadgets described throughout previous sections solely using data points.

This \lcnamecref{sec:data_lines_to_data_points} is devoted to showing the following \lcnamecref{lem:data_lines_to_data_points}, which captures this transformation formally.
Note that our replacement of data lines by data points does not work in full generality, but we show it for all the gadgets that we constructed.

\begin{lemma}
    \label{lem:data_lines_to_data_points}
    Assume we are given a set of variable, inversion and lower bound gadgets that in total requires at least~$m$ breaklines (four, five, and three per variable, inversion and lower bound gadget, respectively).
    Further, let the gadgets be placed in~$\R^2$ such that no two parallel gadgets overlap.
    Then each data line can be replaced by three data points, such that a \CPWL function with at most~$m$ breaklines fits the data points if and only if it fits the data lines.
\end{lemma}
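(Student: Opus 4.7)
The plan is to replace each data line $(\ell;y)$ by three data points with label $y$ lying on $\ell$, arranged so that they jointly reproduce the orthogonal cross-sections used in the proofs of \cref{lem:variable_gadget,lem:inversion_gadget,lem:lower_bound_gadget}. Concretely, for each gadget $G$ I would pick three lines $c_1,c_2,c_3$ perpendicular to the common direction of $G$'s data lines, all contained in $G$'s stripe and avoiding the finitely many regions where $G$'s stripe meets another gadget's stripe. For each data line $\ell_i$ of $G$, the three replacement points are placed at $\ell_i\cap c_j$ for $j=1,2,3$, inheriting the label of $\ell_i$. The restriction of the data to any $c_j$ is then exactly the configuration analyzed in the corresponding gadget lemma's one-dimensional cross-section. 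The backward implication ``lines fit $\Rightarrow$ points fit'' is immediate.

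For the forward implication, let $f$ be \CPWL with at most $m$ breaklines and fitting all replacement points. I would apply the proof of the appropriate gadget lemma to each cross-section $c_j$ separately: the non-collinear triples of \cref{obs:breakpoints} certify that $c_j$ carries at least the required number of breakpoints of the prescribed types and at the prescribed positions (some pinned to specific data lines, others only confined to short intervals). Because two distinct gadgets either have different data-line directions or lie in disjoint, generically placed parallel stripes, no single breakline in $\R^2$ can play the role of a required breakline for two gadgets. Summing the per-gadget lower bounds therefore yields at least $m$ breaklines globally, and the hypothesis ``at most $m$'' forces every gadget to attain its minimum.

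Next, I would upgrade these per-cross-section statements to the $\R^2$ claim that each gadget's breaklines are parallel to its data lines. For breaklines that the gadget lemma already pins to a specific data line (for example $b_3 = \ell_7$ and $b_4 = \ell_{10}$ in a variable gadget), the three cross-sections furnish three collinear points on that data line, which uniquely determine the breakline to coincide with it. For the remaining breaklines, whose position along each $c_j$ is only constrained to an interval, the key observation is that the gradient of $f$ on each cell is a constant vector in $\R^2$, so each of the parallel cross-sections sees the same one-dimensional slope. Combined with the fact that the gadget lemma pins $f$ to fixed values on the neighbouring data lines, this determines the breakpoint offset along every $c_j$ to the same value, forcing the breakline to be parallel to $G$'s data lines as well.

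Once every gadget's breaklines are parallel to its data lines, any data line $\ell$ lies inside a single cell of its gadget's arrangement outside the intersection regions with other gadgets, and those intersection regions are explicitly exempted from the definition of data-line fitting. On such a cell $f$ is affine, and since it agrees with $y$ at the three replacement points it must be constantly $y$ on the relevant portion of $\ell$, so $\ell$ is fit. The main obstacle, I expect, is the second upgrade step: ruling out slight tilts of the unpinned breaklines requires careful bookkeeping of gradients across multiple parallel cross-sections, and it relies crucially on the tightness of the global breakline budget, so that no gadget can ``overspend'' breaklines to compensate for a misaligned neighbour.
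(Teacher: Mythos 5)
Your overall strategy (replace each data line by three points on three cross-sections, force breakpoints via \cref{obs:breakpoints}, then upgrade to parallel breaklines) mirrors the paper's, but there is a genuine gap in the counting step, and it stems from your choice of \emph{per-gadget} cross-sections. You take $c_1,c_2,c_3$ perpendicular to each gadget's own data lines, so the required breakpoint intervals of different gadgets live on different, mutually crossing line segments. The assertion that ``no single breakline in $\R^2$ can play the role of a required breakline for two gadgets'' is not justified at that stage: a breakline is only known to stab some required intervals, nothing yet constrains its direction, and a suitably skewed line can pass through a required interval on a cross-section of gadget $g$ and also through a required interval on a cross-section of a differently oriented gadget $g'$ (whose stripes generally intersect). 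Consequently a single breakline can a priori stab far more than three required intervals, the pigeonhole ``$3m$ intervals versus $m$ breaklines'' does not force each breakline to serve exactly one gadget, and summing the per-gadget lower bounds to $m$ is not licensed. The paper avoids this by placing the three data points of \emph{every} data line on three \emph{common} vertical lines $v_1,v_2,v_3$, positioned to the right of all intersections with unit spacing and with inter-gadget separation $\alpha$ exceeding every gadget's width $w$. Then each breakline meets each $v_i$ exactly once, hence stabs at most three of the $3m$ pairwise disjoint intervals, so it must stab exactly three; the $\alpha>w$ spacing (\cref{claim:multiple_gadgets}) forces these three to belong to one gadget, and a type analysis per gadget (\cref{claim:single_gadget_variable_gadget,claim:single_gadget_inversion_gadget,claim:single_gadget_lower_bound_gadget}) forces them to be \emph{matching} intervals. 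Without an analogue of this global alignment, your argument does not close.

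Two smaller points. First, your final parallelism step is essentially recoverable, but the clean route is the paper's parallelogram argument: the three equally labelled points on a data line $d$ all lie in one piece, forcing that piece's gradient to be orthogonal to $d$; the same holds across the breakline, so the gradient difference, and hence the breakline, is parallel to $d$. Note that for the unpinned breaklines the breakpoint offset is \emph{not} determined (it encodes the variable's value), so your phrase ``determines the breakpoint offset along every $c_j$ to the same value'' needs to be replaced by this gradient-orthogonality argument; moreover it presupposes knowing which cell each point lies in, i.e.\ the matching claim you have not yet secured. Second, you identify the gradient bookkeeping as the main obstacle, but that step is handled cleanly once matching is established; the real difficulty is the cross-gadget disjointness of the breakline budget discussed above.
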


For the proof, consider the line arrangement induced by the data lines.
We introduce three vertical lines~$v_1, v_2, v_3$ to the right of all intersections.
The vertical lines are placed at unit distance to one another.
In our construction in \cref{sec:global_layout}, we make sure that no data line is vertical.
Thus, each data line intersects each of the vertical lines exactly once.
We place one data point on each intersection between a vertical line and a data line.
The new data point inherits the label of the underlying data line.
Furthermore, on each vertical line, we ensure that the minimum distance~$\alpha$ between any two data points belonging to different gadgets is larger than the maximum distance~$w$ between data points belonging to the same gadget.
This can be achieved by placing the~$v_1$,~$v_2$ and~$v_3$ far enough to the right and by ensuring a minimum distance between parallel gadgets.
See \cref{fig:data_lines_to_data_points} for an illustration.

\begin{figure}[htb]
    \centering
    \includegraphics{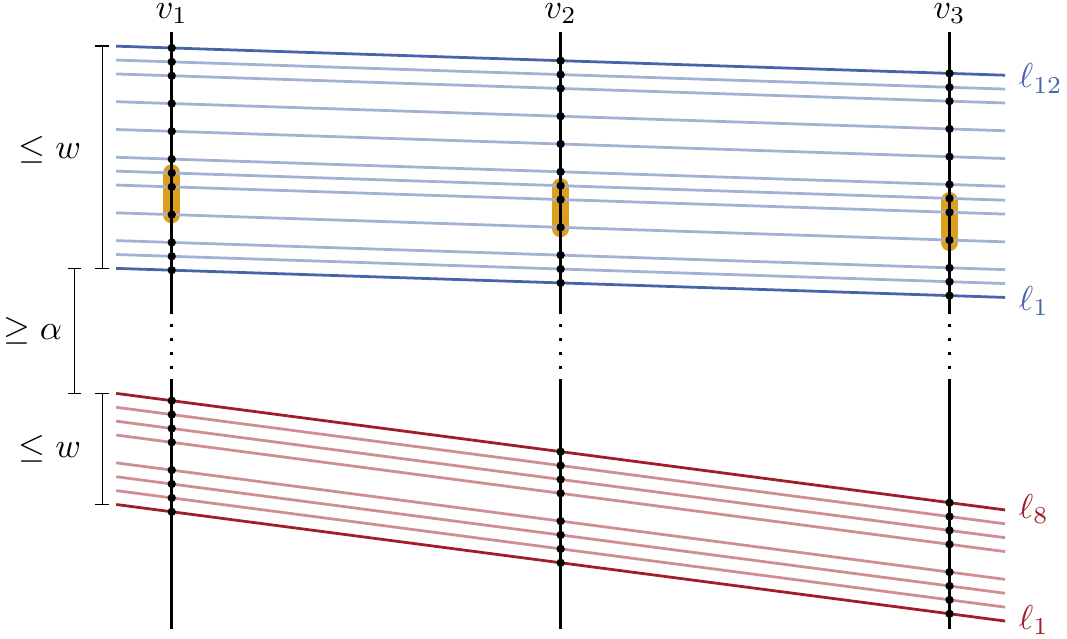}
    \caption{
        Data lines defining a variable gadget~(blue) and a lower bound gadget~(red), and their intersections with the vertical lines~$v_1,v_2,v_3$.
        We add a data point at each intersection.
        The values~$\alpha$ and~$w$ describe the minimal distance between data lines of different gadgets, and the maximal distance between data lines of the same gadget, respectively.
        In orange, we highlighted three matching breakpoint intervals (forcing a $\wedge$-breakpoint between~$\ell_4$ and~$\ell_6$ of the variable gadget).
    }
    \label{fig:data_lines_to_data_points}
\end{figure}

Along each of the three vertical lines, the data points form cross-sections of all the gadgets, similar to the cross-sections shown in \cref{fig:variable_gadget,fig:inversion_gadget,fig:lower_bound_gadget} (but here, the cross-sections are not orthogonal).
We have previously analyzed cross-sections of individual gadgets in the proofs of \cref{lem:variable_gadget,lem:inversion_gadget,lem:lower_bound_gadget}.
There, we identified certain intervals between some of the data lines that need to contain a \emph{breakpoint} (the intersection of a breakline and the cross section).
We refer to these intervals as \emph{breakpoint intervals} along the vertical lines.
Note that a breakpoint interval may degenerate to just one point.
By our placement of the vertical lines, the cross-sections (and thus also the breakpoint intervals) of different gadgets do not overlap.

Any two data lines bounding a breakpoint interval on~$v_1$ also bound a breakpoint interval on~$v_2$ and~$v_3$.
We call the three breakpoint intervals on~$v_1$, $v_2$ and~$v_3$ which are bounded by the same data lines \emph{matching} breakpoint intervals.

In total, there are~$3m$ breakpoint intervals.
We show that the only way to \emph{stab} each of them exactly once using~$m$ breaklines is if each breakline stabs exactly three matching breakpoint intervals.
The first observation towards this is that each breakline can only stab a single breakline interval per vertical line because all breakline intervals are pairwise disjoint.
Thus, having~$m$ breakpoint intervals on each vertical line, each of the~$m$ breaklines has to stab exactly three intervals, one per vertical line.
In a first step, we show that each breakline has to stab three breakpoint intervals belonging to the same gadget.

\begin{claim}
    \label{claim:multiple_gadgets}
    Each breakline has to stab three breakpoint intervals of the same gadget.
\end{claim}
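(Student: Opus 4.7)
The plan is to argue by contradiction. Assume some breakline~$b$ stabs breakpoint intervals belonging to gadget clusters~$G_1$, $G_2$, $G_3$ on~$v_1$, $v_2$, $v_3$, respectively, with $(G_1, G_2, G_3)$ not a constant triple. Let $h_i$ be the $y$-coordinate of $b \cap v_i$, and let $H_G(i) \subseteq \R$ denote the convex hull of $G$'s breakpoint intervals on~$v_i$. The argument rests on two geometric facts: since $b$ is a single straight line and $v_1, v_2, v_3$ are equidistant, the midpoint identity $h_2 = \tfrac{1}{2}(h_1 + h_3)$ holds; and since all data lines of a single gadget~$G$ are parallel, $G$'s cluster shifts rigidly between consecutive vertical lines, i.e., $H_G(i+1) = H_G(i) + \delta_G$ for a fixed $\delta_G \in \R$ depending only on $G$'s orientation.

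The case $G_1 = G_3 = G$ collapses immediately: combining $h_1 \in H_G(1)$ and $h_3 \in H_G(3) = H_G(1) + 2\delta_G$ via the midpoint identity forces $h_2 \in H_G(1) + \delta_G = H_G(2)$, and the $\alpha$-separation of clusters on~$v_2$ then pins $G_2 = G$, contradicting non-constancy. For $G_1 = G_2 \neq G_3$ (and symmetrically $G_2 = G_3 \neq G_1$), the slope $d^b = h_2 - h_1$ satisfies $|d^b - \delta_{G_1}| \leq w$ because $h_1 \in H_{G_1}(1)$ and $h_2 \in H_{G_1}(2) = H_{G_1}(1) + \delta_{G_1}$; therefore $h_3 = h_2 + d^b$ lies within distance~$w$ of $H_{G_1}(3)$, which is incompatible with $h_3 \in H_{G_3}(3)$ since distinct clusters on~$v_3$ are separated by at least $\alpha > w$.

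The remaining \emph{all-distinct} case, with $G_1, G_2, G_3$ pairwise different, is the main obstacle. Combining the slope constraints from both halves of~$b$ yields the condition $|2\mu_{G_2}(2) - \mu_{G_1}(2) - \mu_{G_3}(2) + \delta_{G_1} - \delta_{G_3}| \leq 2w$, where $\mu_G(i)$ denotes the center of $H_G(i)$; this forces $\mu_{G_2}(2)$ to lie close to the midpoint of $\mu_{G_1}(2)$ and $\mu_{G_3}(2)$ up to a slope-dependent offset. I would rule this case out by appealing to the global arrangement constructed in \cref{sec:global_layout}: pushing $v_1, v_2, v_3$ sufficiently far to the right keeps~$\alpha$ arbitrarily large compared to~$w$ and the $|\delta_G|$, and a generic perturbation of the gadget positions ensures that no triple of pairwise distinct clusters satisfies this coincidental midpoint condition.
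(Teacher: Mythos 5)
Your first two cases are correct and, in fact, sharper than what the paper writes down: the midpoint identity $h_2=\tfrac12(h_1+h_3)$ together with the rigid shift $H_G(i+1)=H_G(i)+\delta_G$ and the separation $\alpha>w$ cleanly kills both the $G_1=G_3$ and the $G_1=G_2\neq G_3$ configurations. The genuine gap is exactly where you flag it: the all-distinct case. Your proposed fix --- push the vertical lines far right and apply a ``generic perturbation'' so that no three distinct clusters satisfy the approximate midpoint condition --- does not work as stated. The construction of \cref{sec:global_layout} is deliberately \emph{non-generic}: the canonical variable gadgets are all parallel, as are the lower bound gadgets and the inversion gadgets, and nothing prevents three parallel clusters from being (nearly) evenly spaced, in which case $2\mu_{G_2}(2)-\mu_{G_1}(2)-\mu_{G_3}(2)=0$ and $\delta_{G_1}=\delta_{G_3}$, so your necessary condition is satisfied and a single breakline really can stab intervals of three distinct gadgets. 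Perturbing the layout to destroy all such coincidences is not free either: the placement must preserve exact incidences (triple intersections of measuring lines for addition, intersections with measuring lines for copies and inversions) and keep all coordinates rational with polynomial bit-length, none of which your sketch addresses. In short, the all-distinct case cannot be excluded by local geometry of a single breakline, and the global-position argument you gesture at is unproven and would require modifying the construction.

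The paper avoids this case entirely by a \emph{global} argument: induction on the lowest gadget $g$ in the vertical order (which is the same on $v_1,v_2,v_3$ since the vertical lines lie to the right of all crossings). Any breakline stabbing a $g$-interval on $v_2$ cannot reach a higher gadget on $v_1$ or $v_3$ without, by linearity, exiting below \emph{all} intervals on the opposite vertical line and hence stabbing only two intervals --- contradicting the counting argument that forces every breakline to stab exactly one interval per vertical line. Hence the breaklines serving $g$ on $v_2$ are dedicated to $g$, account for all of $g$'s intervals, and can be removed together with $g$ before applying the induction hypothesis. If you want to keep your local case analysis, you must replace the genericity appeal in the all-distinct case by such a counting/induction step (or an equivalent global matching argument); as written, the proof is incomplete.
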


\begin{claimproof}
    The proof is by induction on the number of gadgets.
    For a single gadget, the claim holds trivially.
    For the inductive step, we consider the lowest gadget~$g$ (on~$v_1$, $v_2$ and~$v_3$) and assume for the sake of contradiction that there is a breakline~$b$ stabbing a breakpoint interval of~$g$ on~$v_2$ and a breakpoint interval of a different gadget~$g'$ above~$g$ on~$v_1$.
    By construction, the minimum distance~$\alpha$ between different gadgets is larger than the maximum width~$w$ of any gadget on all three vertical lines.
    Thus, the distance of any breakpoint interval of~$g'$ to any breakpoint interval of~$g$ on~$v_1$ is larger than the width of~$g$ on~$v_3$.
    Therefore, we know that the breakline~$b$ intersects~$v_3$ below any breakpoint intervals of~$g$, which is the lowest gadget on~$v_3$.
    Thus, it stabs at most two breakpoint intervals in total, and therefore not all intervals can be stabbed.
    The same reasoning holds if the roles of~$v_1$ and~$v_3$ are flipped.
    All breaklines stabbing breakpoint intervals of~$g$ on~$v_2$ must therefore also stab breakpoint intervals of~$g$ on~$v_1$ and~$v_3$.
    Applying the induction hypothesis on the remaining gadgets, it follows that each breakline only stabs breakpoint intervals of the same gadget.
\end{claimproof}

We can therefore analyze the situation for each gadget in isolation.
The main idea is to distinguish by the type of the required breakline.
Each breakline must stab three breakpoint intervals of the same type.
Let us summarize the findings about required breakline locations and types from the proofs of \cref{lem:variable_gadget,lem:inversion_gadget,lem:lower_bound_gadget} in \cref{tab:breakpoint_types}.

\begin{table}[htb]
    \centering
    \caption{Location and type of the breaklines in variable gadgets, inversion gadgets, and lower bound gadgets.}
    \begin{subtable}[b]{0.3\textwidth}
        \centering
        \begin{tabular}{llr}
            \toprule
            & Location & Type \\
            \midrule
            $b_1$ & $[\ell_3,\ell_4)$ & $(\vee,\vee)$ \\
            $b_2$ & $(\ell_4,\ell_5]$ & $(\wedge,\wedge)$ \\
            $b_3$ & on $\ell_7$ & $(\wedge,\wedge)$ \\
            $b_4$ & on $\ell_{10}$ & $(\vee,\vee)$ \\
            \bottomrule
        \end{tabular}
        \vspace{4.8mm}
        \caption{Variable gadget.}
        \label{tab:breakpoint_types_variable_gadget}
    \end{subtable}
    \hfill
    \begin{subtable}[b]{0.3\textwidth}
        \centering
        \begin{tabular}{llc}
            \toprule
            & Location & Type \\
            \midrule
            $b_1$ & $[\ell_3,\ell_4)$ & $(\vee,0)$ \\
            $b_2$ & $(\ell_4,\ell_5)$ & $(\wedge,\vee)$ \\
            $b_3$ & $(\ell_5,\ell_6]$ & $(0,\wedge)$ \\
            $b_4$ & on $\ell_8$ & $(\wedge,\wedge)$ \\
            $b_5$ & on $\ell_{11}$ & $(\vee,\vee)$ \\
            \bottomrule
        \end{tabular}
        \caption{Inversion gadget.}
        \label{tab:breakpoint_types_inversion_gadget}
    \end{subtable}
    \hfill
    \begin{subtable}[b]{0.3\textwidth}
        \centering
        \begin{tabular}{llc}
            \toprule
            & Location & Type \\
            \midrule
            $b_1$ & $[\ell_3,\ell_4)$ & $(0,\wedge)$ \\
            $b_2$ & $(\ell_4,\ell_5)$ & $(0,\vee)$ \\
            $b_3$ & $(\ell_5,\ell_6]$ & $(0,\wedge)$ \\
            \bottomrule
        \end{tabular}
        \vspace{9.9mm}
        \caption{Lower bound gadget.}
        \label{tab:breakpoint_types_lower_bound_gadget}
    \end{subtable}
    \label{tab:breakpoint_types}
\end{table}

\begin{claim}
    \label{claim:single_gadget_variable_gadget}
    To stab all breakpoint intervals of a variable gadget with only four breaklines, each of them has to stab three matching breakpoint intervals.
\end{claim}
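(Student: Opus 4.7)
The plan is to build on \Cref{claim:multiple_gadgets}, which already guarantees that each of the four breaklines stabs exactly one breakpoint interval of this variable gadget on each of $v_1, v_2, v_3$. What remains is to show that the three intervals stabbed by a given breakline all correspond to the same $b_i$, i.e., are matching. I will do this by combining a type argument with a rigidity argument coming from the ``degenerate'' breakpoint intervals.

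First I would exploit types. The type of a breakline (an element of $\{\wedge,0,\vee\}^2$) is a global invariant of that line, so an interval of type $t$ can only be stabbed by a breakline of type $t$. By \Cref{tab:breakpoint_types_variable_gadget}, all twelve breakpoint intervals have type $(\vee,\vee)$ (for $b_1$ and $b_4$) or $(\wedge,\wedge)$ (for $b_2$ and $b_3$). A simple count then forces exactly two $(\vee,\vee)$-breaklines and two $(\wedge,\wedge)$-breaklines, reducing the claim to two analogous sub-problems: showing that the two $(\vee,\vee)$-breaklines cleanly split the $b_1$- and $b_4$-intervals, and symmetrically that the two $(\wedge,\wedge)$-breaklines cleanly split the $b_2$- and $b_3$-intervals.

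The heart of the argument is a rigidity observation: the $b_3$- and $b_4$-breakpoint ``intervals'' degenerate to single points lying on $\ell_7$ and $\ell_{10}$, respectively, whereas the $b_1$- and $b_2$-intervals have positive length. A breakline that stabs two of the three $b_3$-points on distinct vertical lines passes through two points of $\ell_7$ and therefore must coincide with $\ell_7$, and in particular stabs the third $b_3$-point as well. Focusing on the $(\wedge,\wedge)$-pair, I would then do a short case analysis on the number $k \in \{0,1,2,3\}$ of $b_3$-points one of these breaklines stabs: the cases $k=1$ and $k=2$ force the \emph{other} $(\wedge,\wedge)$-breakline to stab at least two $b_3$-points, making it equal to $\ell_7$ and hence stabbing all three $b_3$-points, a contradiction. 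Only $k=0$ and $k=3$ survive, and in both situations the two breaklines cleanly split into one covering the three matching $b_3$-intervals and one covering the three matching $b_2$-intervals. The same argument, with $\ell_{10}$ in place of $\ell_7$, handles the $(\vee,\vee)$-pair.

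The main obstacle is that at this stage we do not yet know the breaklines are parallel to the data lines---\Cref{lem:variable_gadget} established this only for \emph{orthogonal} cross-sections, while $v_1,v_2,v_3$ are arbitrary (non-orthogonal) vertical lines. Without the rigidity provided by the degenerate intervals on $\ell_7$ and $\ell_{10}$, one could in principle imagine a breakline ``swapping'' between a $b_1$- and a $b_4$-slot across the three vertical lines, since a straight line can change its relative height with respect to another line once. Ruling this out cleanly is exactly what the case analysis in the previous paragraph accomplishes, and I expect this to be the only delicate point of the proof.
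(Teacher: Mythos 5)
Your proposal is correct and follows essentially the same route as the paper's proof: split the twelve intervals by breakline type into the $(\vee,\vee)$-pair and the $(\wedge,\wedge)$-pair, then use the fact that the three degenerate single-point intervals of $b_3$ (resp.\ $b_4$) are collinear on $\ell_7$ (resp.\ $\ell_{10}$), so any breakline hitting two of them is that data line and hits all three, forcing a clean matching split. Your case analysis on $k$ is just a slightly more explicit phrasing of the paper's pigeonhole step.
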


\begin{claimproof}
    See \cref{tab:breakpoint_types_variable_gadget}.
    On the three vertical lines, there are six breakpoint intervals for breaklines of type $(\vee,\vee)$ in total.
    If only two breaklines should stab these six breakpoint intervals, one breakline needs to stab at least two of the single-point intervals.
    If a breakline goes through two of the single points, it also goes through the third point, and can thus not go through the proper intervals.
    Therefore, one breakline must stab the single-point intervals, and the other one stabs the proper breakpoint intervals.

    The same argument can be made for the breakpoint intervals of type $(\wedge,\wedge)$, and thus each breakline stabs three matching breakpoint intervals.
\end{claimproof}

\begin{claim}
    \label{claim:single_gadget_inversion_gadget}
    To stab all breakpoint intervals of an inversion gadget with only five breaklines, each of them has to stab three matching breakpoint intervals.
\end{claim}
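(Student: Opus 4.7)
The plan is to combine a counting argument with the observation that the five breakline types listed in Table~\ref{tab:breakpoint_types_inversion_gadget} are pairwise distinct. By Claim~\ref{claim:multiple_gadgets}, every breakline stabs intervals of only one gadget, so I restrict attention to a single inversion gadget. It contributes five breakpoint intervals on each of the three vertical lines $v_1, v_2, v_3$, and these intervals are pairwise disjoint on each vertical line. Since only five breaklines are available and each can stab at most one interval per vertical line, every breakline must stab exactly three intervals, one per vertical line.

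Next, each breakline has a globally well-defined type $(t_1,t_2) \in \{\wedge,0,\vee\}^2$, determined by the signs of the outgoing weights $c_{i,1}, c_{i,2}$ of the corresponding hidden neuron together with its normal direction. A breakline of type $(t_1,t_2)$ can only stab a breakpoint interval whose required type agrees with $(t_1,t_2)$, because the non-collinear triples used in the proof of Lemma~\ref{lem:inversion_gadget} independently dictate the sign of the gradient change in each output dimension: a mismatch in either coordinate would leave the corresponding triple unfixed. Since the five required types $(\vee,0)$, $(\wedge,\vee)$, $(0,\wedge)$, $(\wedge,\wedge)$, $(\vee,\vee)$ are pairwise distinct, every breakline is compatible with exactly one of $b_1, \ldots, b_5$, so the three intervals it stabs must all be of the same type and therefore form a matching triple.

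The main obstacle is handling the two single-point intervals corresponding to $b_4$ (on $\ell_8$) and $b_5$ (on $\ell_{11}$) with the required care. For these, the type-matching argument alone forces a breakline of type $(\wedge,\wedge)$ through three collinear single points on $\ell_8$, which by collinearity must coincide with $\ell_8$ itself, and analogously for $\ell_{11}$. This in turn frees up the three remaining breaklines, each of a distinct mixed or erased type, to take care of the proper intervals of $b_1, b_2, b_3$. Once this is written out, it confirms that every breakline stabs three matching breakpoint intervals, completing the claim.
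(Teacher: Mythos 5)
Your argument is correct and matches the paper's proof, which rests on exactly the observation in your second paragraph: the five required breakline types $(\vee,0)$, $(\wedge,\vee)$, $(0,\wedge)$, $(\wedge,\wedge)$, $(\vee,\vee)$ are pairwise distinct, so each breakline can only stab intervals of its own type and hence must stab a matching triple. Your third paragraph is superfluous here --- the single-point intervals pose no special difficulty for the inversion gadget precisely because no two of its breaklines share a type (that complication only arises for the variable gadget, where two pairs of breaklines do share types).
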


\begin{claimproof}
    See \cref{tab:breakpoint_types_inversion_gadget}.
    All five sets of three matching breakpoint intervals have a different type of required breakline, thus each breakline stabs three matching breakpoint intervals.
\end{claimproof}

\begin{claim}
    \label{claim:single_gadget_lower_bound_gadget}
    To stab all breakpoint intervals of a lower bound gadget with only three breaklines, each of them has to stab three matching breakpoint intervals.
\end{claim}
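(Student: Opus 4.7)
My plan mirrors the proof of Claim~\ref{claim:single_gadget_variable_gadget}, exploiting that among the three required breakline types $(0,\wedge)$, $(0,\vee)$, $(0,\wedge)$ for $b_1$, $b_2$, $b_3$ in Table~\ref{tab:breakpoint_types_lower_bound_gadget}, exactly one is of type $(0,\vee)$. Since each of the three vertical lines carries three pairwise disjoint breakpoint intervals and only three breaklines are available, each breakline must stab exactly one interval per vertical line (otherwise some interval would be left unstabbed). The unique $(0,\vee)$-type breakline is therefore forced to stab all three matching $b_2$-intervals, handling one of the three breaklines immediately.

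For the two remaining $(0,\wedge)$-type breaklines, I would show that one stabs the three matching $b_1$-intervals and the other the three matching $b_3$-intervals. The main tool is a linearity observation: since $v_1, v_2, v_3$ are parallel and equidistant, any straight breakline crossing them at cross-section coordinates $\xi_1, \xi_2, \xi_3$ satisfies $\xi_2 = (\xi_1 + \xi_3)/2$. Recall that in cross-section coordinates, the $b_1$-, $b_2$-, and $b_3$-intervals lie in $[2, 3)$, $(3, 5)$, and $(5, 6]$, respectively, on every vertical line.

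Assume for contradiction that a $(0,\wedge)$-breakline $\beta$ stabs a $b_1$-interval on one vertical line and a $b_3$-interval on another. I would split into three cases by which pair of vertical lines is involved. In the two ``extremal'' cases (stabs on $v_1, v_2$ or on $v_2, v_3$), the interpolation identity forces the third coordinate into $(7, 10]$ or $(-2, 1)$, outside every breakpoint interval of the gadget; combined with Claim~\ref{claim:multiple_gadgets}, this contradicts the fact that $\beta$ must still stab some interval of this gadget on the third vertical line. In the remaining case (stabs on $v_1$ and $v_3$), the middle coordinate lands in $(3.5, 4.5) \subset (3, 5)$, so $\beta$ crosses $v_2$ inside the $b_2$-interval but provides only a $(0,\wedge)$-breakpoint there, which cannot fill the $(0,\vee)$-requirement; $\beta$'s crossing of $v_2$ therefore contributes no valid stab, leaving only two breaklines to stab three intervals on $v_2$, again a contradiction. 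The symmetric sub-cases (swap $b_1 \leftrightarrow b_3$) are identical.

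The main obstacle I anticipate is purely bookkeeping: organizing the case analysis by pair of vertical lines and keeping track of the coordinate ranges produced by the interpolation identity. Once that is done, the claim follows at once: each of the three breaklines stabs three matching intervals — the $(0,\vee)$-breakline the three $b_2$-intervals, one $(0,\wedge)$-breakline the three $b_1$-intervals, and the other $(0,\wedge)$-breakline the three $b_3$-intervals.
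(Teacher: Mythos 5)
Your proposal is correct and follows essentially the same route as the paper: the unique $(0,\vee)$ triple is matched trivially, and the two $(0,\wedge)$ breaklines are forced to match because the equidistant vertical lines make the crossing coordinates linear while the $b_1$- and $b_3$-intervals (width $1$) are separated by a gap of $2$. Your version merely makes the paper's terse distance-versus-width argument explicit via the interpolation identity and a case split on which pair of vertical lines carries the mismatch; both are sound.
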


\begin{claimproof}
    See \cref{tab:breakpoint_types_lower_bound_gadget}.
    There is only one set of three breakpoint intervals for a breakline of type $(0,\vee)$, so it is trivially matched correctly.

    We can see that the breakpoint intervals for a breakline of type~$(0,\wedge)$ have distance~$2$ from each other, each having width~$1$.
    If the two breaklines of this type would not stab three matching breakpoint intervals, one of them would need to stab two matching intervals and one non-matching interval.
    As the distance between the vertical lines is equal, and the breakpoint intervals are further apart from each other than their width, there is no way for a breakline to lie in this way.
    We conclude that all breaklines stab three matching breakpoint intervals.
\end{claimproof}

It also follows from \cref{claim:single_gadget_variable_gadget,claim:single_gadget_inversion_gadget,claim:single_gadget_lower_bound_gadget} that no two breaklines can cross each other between the vertical lines.
Neither can a breakline cross a data line.
Together with \cref{claim:multiple_gadgets}, we can finally prove \cref{lem:data_lines_to_data_points}.

\begin{proof}[Proof of \cref{lem:data_lines_to_data_points}]
    By \cref{claim:multiple_gadgets}, every breakline must stab three breakpoint intervals of the same gadget.
    By \cref{claim:single_gadget_variable_gadget,claim:single_gadget_inversion_gadget,claim:single_gadget_lower_bound_gadget}, each breakline must stab three matching breakpoint intervals, and therefore the breaklines do not cross any data lines between the three vertical lines.

    It remains to show that the data points already ensure that each breakline~$b$ is parallel to the two parallel data lines~$d$ and~$d'$ enclosing it.
    To this end, consider the parallelogram defined by $d, d', v_1, v_3$ (see \cref{fig:breakline_parallel}) and let~$j$ be an output dimension in which~$b$ is not erased.
    Since no other breakline intersects this parallelogram, we obtain that~$f^j$ has exactly two linear pieces within the parallelogram, which are separated by~$b$.
    Moreover, since~$b$ stabs matching breakpoint intervals, the three data points on~$d$ must belong to one of the pieces.
    Since these points have the same label, it follows that the gradient of this piece in output dimension~$j$ must be orthogonal to~$d$ (and, thus, to~$d'$ as well).
    Applying the same argument on the data points on~$d'$, we obtain that the gradient of the other piece must be orthogonal to~$d$ and~$d'$ as well.
    This implies that also the difference of the gradients of the two pieces is orthogonal to~$d$ and~$d'$.
    Finally, since~$b$ must be orthogonal to this difference of gradients, we obtain that it is parallel to~$d$ and~$d'$.
\end{proof}

\begin{figure}[htb]
    \centering
    \includegraphics{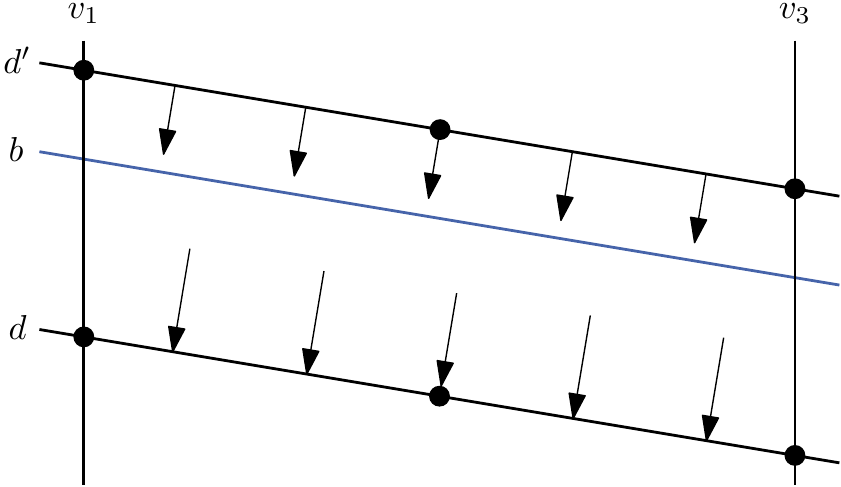}
    \caption{
        The parallelogram enclosed by the two data lines~$d$,~$d'$ and the vertical lines~$v_1$,~$v_3$.
        The three data points (black) on each data line enforce the gradient in both cells to be orthogonal to the data lines.
        As a consequence, the breakline~$b$~(blue) separating the cells has to be parallel to the data lines.
    }
    \label{fig:breakline_parallel}
\end{figure}

\subsection{Global Construction}
\label{sec:global_layout}

As a last step in our \ER-hardness proof of \trainNN, we describe the global arrangement of the different gadgets.
To this end, fix an arbitrary \ETRINV instance.
See \cref{fig:global_arrangement} for a visualization.

\begin{figure}[tb]
    \centering
    \includegraphics[page=2]{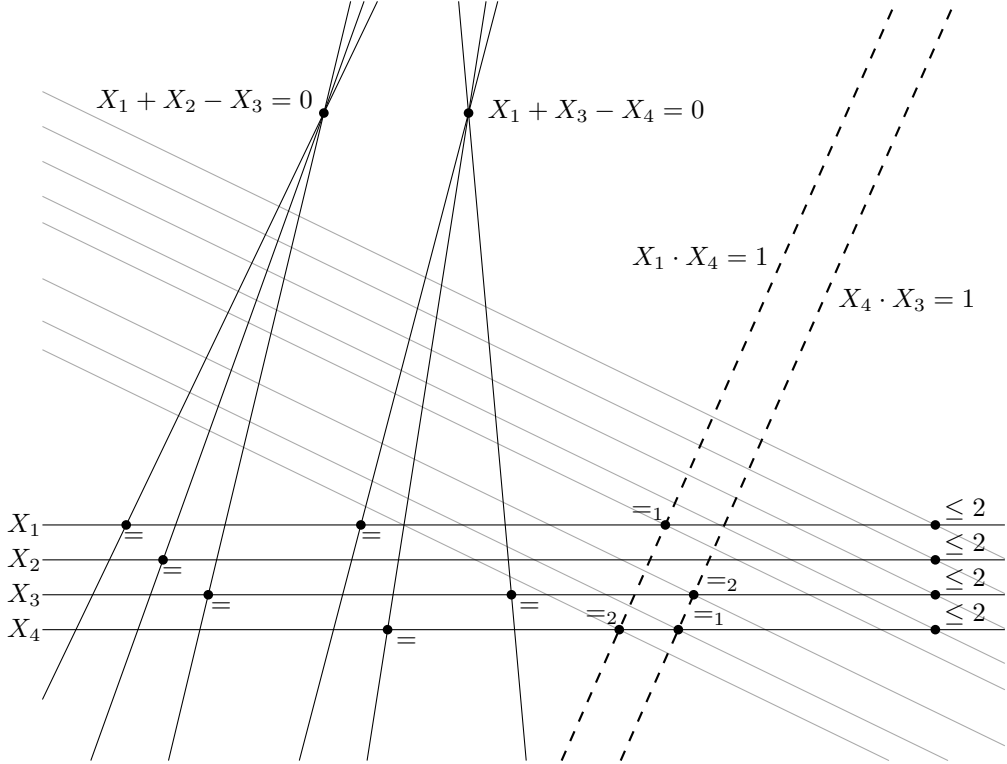}
    \caption{
        The layout of all gadgets and additional data points for the complete reduction.
        Each gadget is simplified to a single line for clarity.
        Solid:~Variable gadgets.
        Dashed:~Inversion gadgets.
        Gray:~Lower bound Gadgets.
        A point with label~$=_i$ indicates a copy that is only active in output dimension~$i$.
    }
    \label{fig:global_arrangement}
\end{figure}

\begin{description}
    \item[Variables] For each variable~$X$, we build a horizontal variable gadget carrying the value of this variable.
    We say that this is the \emph{canonical} variable gadget for~$X$.
    \Cref{lem:variable_gadget} already ensures that $X \in \bigl[\frac{1}{2}, 2\bigr]$.

    In order to realize the weak data points of the variable gadgets, we add one lower bound gadget each.
    They are placed parallel to each other, but not parallel to the variable gadgets.
    Further, their stripes must not intersect any other data points.

    \item[Addition] The following is done for each addition constraint~$X + Y = Z$, next to each other:
    For each involved variable, we copy the value from its canonical variable gadget to a new variable gadget.
    To this end, a data point with label~$6$ is placed on the intersection of the upper measuring line of the canonical variable gadget and the lower measuring line of the new variable gadget (\cref{cor:linear_constraints_labels}).

    Further, the three new variable gadgets are positioned such that the correct measuring lines intersect in a common intersection above all horizontal variable gadgets (upper for~$X,Y$ and lower for~$Z$).
    A data point with label~$10$ at the intersection enforces the addition constraint (\cref{cor:linear_constraints_labels}).

    \item[Inversion] The following is done for each inversion constraint~$XY = 1$, next to each other:
    We add an inversion gadget that intersects all canonical variable gadgets.
    Using weak data points with label~$6$ in the respective dimensions, we copy $X$ to the first dimension of the inversion gadget, and~$Y$ to its second dimension (\cref{cor:linear_constraints_labels}).
    The inversion constraint itself is then enforced by the non-linear relation of the two slopes of the inversion gadget (\cref{lem:inversion_gadget}).

    All inversion gadgets are placed parallel to each other.

    The involved weak data points are realized by two lower bound gadgets per inversion gadget.
    We place them parallel to the lower bound gadgets for the variables, again such that their stripes do not intersect any other data points.
\end{description}

Based on the global arrangement of the gadgets, we can finally prove our main \lcnamecref{thm:er_complete}, i.e., the \ER-completeness of \trainNN:

\begin{proof}[Proof of \cref{thm:er_complete}]
    For \ER-membership we refer to~\cite{Abrahamsen2021_NeuralNetworks} and to \cref{sec:membership}.
    
    For \ER-hardness, we reduce the \ER-complete problem \ETRINV to \trainNN.
    Given an instance of \ETRINV, we construct an instance of \trainNN as described in the previous paragraphs.
    We set the target error to~$\gamma = 0$.

    Let~$m$ be the minimum number of breaklines needed to realize all gadgets of the above construction:
    We need four breaklines per variable gadget (\cref{lem:variable_gadget}), five breaklines per inversion gadget (\cref{lem:inversion_gadget}) and three breaklines per lower bound gadget (\cref{lem:lower_bound_gadget}).
    By \cref{lem:data_lines_to_data_points}, no breakline can contribute to different gadgets, so we need exactly that many.

    In the remainder of the proof, we show equivalence of the following statements.
    \begin{enumerate}[label=(\arabic*)]
        \item\label{itm:ETRINV_yes}
        The \ETRINV instance is a yes-instance, i.e., there exists a satisfying assignment of the variables.

        \item\label{itm:fittable}
        There exists a \CPWL function with~$m$ breaklines that fits all data points of the constructed \trainNN instance.
        Further, it fulfills the conditions of \cref{lem:fittableCPWL}.

        \item\label{itm:trainNN_yes}
        The \trainNN instance is a yes-instance, i.e., there exists a \fullyconnected \twolayer neural network with~$m$ hidden \ReLU neurons exactly fitting all the data points.
    \end{enumerate}

    To see that~\ref{itm:ETRINV_yes} implies~\ref{itm:fittable}, assume that there is a satisfying assignment of the \ETRINV instance with all variables in~$\bigl[\frac{1}{2}, 2\bigr]$.
    For each variable~$X$, we use~$s_X = X + 1$ as the slope of all corresponding variable gadgets and inversion gadgets.
    The superposition of all these gadgets yields the desired \CPWL function.
    It satisfies \cref{lem:fittableCPWL} because, first, the gadgets are built in such a way that functions fitting all data points are constantly zero everywhere except for within the gadgets, and second, the gradient condition is satisfied for each gadget separately and, hence, also for the whole function.

    For the other direction, i.e., that \ref{itm:fittable} implies~\ref{itm:ETRINV_yes}, assume that such a \CPWL function exists.
    By \cref{lem:data_lines_to_data_points,lem:lower_bound_gadget}, the data points enforce exactly the same \CPWL function as the conceptual data lines and weak data points would.
    Then, by \cref{lem:variable_gadget,lem:inversion_gadget}, this \CPWL function has the shape of the gadgets.
    The fact that all data points are fit implies that the slopes of the variable and inversion gadgets indeed correspond to a satisfying assignment of \ETRINV.

    \Cref{lem:fittableCPWL} yields that~\ref{itm:fittable} implies~\ref{itm:trainNN_yes}.

    It remains to prove that~\ref{itm:trainNN_yes} implies~\ref{itm:fittable}.
    To this end, first note that the function realized by a \fullyconnected \twolayer neural network with~$m$ hidden \ReLU neurons is always a \CPWL function with at most~$m$ breaklines that additionally satisfies the gradient condition by \cref{lem:fittableCPWL}.
    The existence of a~$(0,0)$-cell follows from the fact, that all gadgets are constantly~$0$ outside their stripes.

    \medskip
    \noindent
    The \trainNN instance can be constructed in polynomial time, as the gadgets can be arranged in such a way that all data points (residing on intersections of lines) have coordinates which can be encoded in polynomial length.

    The number of hidden neurons~$m$ is linear in the number of variables and the number of constraints of the \ETRINV instance.
    The number of data points can be bounded by~$10m$, thus the number of hidden neurons is linear in the number of data points.

    As can be gathered from \cref{lem:variable_gadget,lem:inversion_gadget,lem:lower_bound_gadget,cor:linear_constraints_labels}, the set of used labels has cardinality~$13$ as claimed.
\end{proof}

\begin{remark}
    \label{rem:lipschitz}
    Note that if the \ETRINV instance is satisfiable, then each variable gadget and inversion gadget in a corresponding solution~$\Theta$ to the constructed \trainNN instance has a maximum slope of~$3$ in each dimension.
    Furthermore, no lower bound gadget needs to contribute less than~$-12$ to satisfy its corresponding weak data point.
    Thus, there must also be a solution~$\Theta'$, where each lower bound gadget is symmetric, and thus the function~$f(\cdot,\Theta')$ is Lipschitz continuous with a low Lipschitz constant~$L$, which in particular does not depend on the given \ETRINV instance.
    Checking all the different ways how our gadgets intersect, one can verify that~$L = 25$ is sufficient.
\end{remark}

\section{Algebraic Universality}
\label{sec:universality}

It remains to prove algebraic universality of \trainNN.
Intuitively, it suffices to show that a solution of an \ETRINV instance can be transformed into a solution of the corresponding \trainNN instance (and vice versa) using only basic field arithmetic, that is, addition, subtraction, multiplication, and division.

For the following \lcnamecref{lem:algebraic_translation}, let~$\Phi$ be an instance of~$\ETRINV$ with~$k$ variables, and let~$N$ be an instance of \trainNN with a total of~$\ell$ weights and biases.
Further, $N$ was constructed from~$\Phi$ via our reduction.
We denote by~$V(\Phi) \subseteq \R^k$ the set of all satisfying variable assignments.
Similarly,~$V(N) \subseteq \R^\ell$ contains all weight-bias-combinations that fit all data points.

\begin{lemma}
    \label{lem:algebraic_translation}
    For any field extension~$\F$ of~$\Q$ it holds that
    \[
        V(\Phi) \cap \F^k \neq \emptyset
        \iff
        V(N) \cap \F^\ell \neq \emptyset
        \text{.}
    \]
\end{lemma}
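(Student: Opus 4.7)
The plan is to exploit that the reduction of \cref{sec:hardness} uses only rational data and that the translation between satisfying assignments of~$\Phi$ and fitting weight-bias vectors of~$N$ requires only the field operations $+, -, \cdot, \div$. Since~$\F$ is a field extension of~$\Q$, it is closed under all of these, so both implications will follow.

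For the forward direction, I would start from $a \in V(\Phi) \cap \F^k$ and set the slope of each variable gadget (and of each inversion gadget in the appropriate dimension) to $s_X := a_X + 1 \in \F$. The position of every breakline in the resulting superposed \CPWL function $f \colon \R^2 \to \R^2$ is then either at a rationally fixed data line (such as $\ell_7, \ell_{10}$ of a variable gadget or $\ell_8, \ell_{11}$ of an inversion gadget), or at a position of the form $q_1 + q_2 / s_X$ with $q_1, q_2 \in \Q$ (for the movable breaklines of the variable and inversion gadgets). Since the promise $X \in [1/2, 2]$ guarantees that $s_X \in [3/2, 3]$ is bounded away from zero, all these positions lie in~$\F$. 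The function~$f$ satisfies the hypotheses of \cref{lem:fittableCPWL}, whose construction assigns each breakline first-layer weights and a bias encoding its linear equation (field operations on rational direction vectors and positions in~$\F$) and assigns each second-layer weight the change of gradient across that breakline (a rational combination of the slopes~$s_X$). Hence the resulting $\Theta$ lies in $V(N) \cap \F^\ell$.

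For the backward direction, I would take $\Theta \in V(N) \cap \F^\ell$ and recover $a \in V(\Phi) \cap \F^k$ by reading off the slopes of the canonical variable gadgets from~$f(\cdot, \Theta)$. Since every operation of the network (weighted sum, bias addition, \ReLU) stays in~$\F$ when applied to rational inputs and weights in~$\F$, evaluating~$f$ at any rational point produces a value in~$\F$. Choosing two rational points in the sloped cell of the canonical variable gadget for~$X$ (for instance, one on~$\ell_4$ and one on the upper measuring line at a rational location where no other gadget contributes) and forming a difference quotient over their rational distance yields $s_X \in \F$, from which $a_X := s_X - 1 \in \F$. By correctness of the gadgets (\cref{lem:variable_gadget,lem:inversion_gadget,lem:linear_constraints}), the vector $a$ satisfies every constraint of~$\Phi$.

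The main obstacle is bookkeeping: verifying that no step secretly invokes a root extraction, a limit, or any other operation that leaves~$\F$. This holds because the only non-linear aspect of the reduction is the inversion constraint, which corresponds algebraically to the polynomial identity $s_X s_Y = s_X + s_Y$ of \cref{lem:inversion_gadget}, and the promise on the variable range rules out division by zero when solving for breakline positions and slopes. Everything else reduces to solving linear systems with rational coefficients in the parameters $s_X$ (in the forward direction) or $\Theta$ (in the backward direction), and therefore stays inside~$\F$.
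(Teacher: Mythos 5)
Your proposal is correct and follows essentially the same route as the paper: both directions rest on the observation that the translation between a satisfying assignment of~$\Phi$ and a fitting parameter vector of~$N$ uses only field operations applied to rational data and quantities already in~$\F$ (with division only by slopes bounded away from zero). The only cosmetic difference is in the backward direction, where the paper reads off the slope directly as the product $a_{2,i} \cdot c_{i,1}$ of two weights of the hidden neuron realizing the first breakline of the horizontal canonical variable gadget, whereas you recover it as a difference quotient of network values at two rational points; both computations stay in~$\F$.
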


\begin{proof}
    \begin{description}
        \item[\enquote{$\Longrightarrow:$}]
        Let $X_1, \ldots, X_n \in V(\Phi) \cap \F^k$ be a satisfying variable assignment for~$\Phi$.
        In our reduction, we place our data points on rational coordinates, and thus all implied data lines can be described by equations with rational coefficients.
        There exists a unique \CPWL function~$f$ which fits these data points, corresponds to the solution~$X_1, \ldots, X_n$, and in which all lower bound gadgets are symmetric.
        This function can be realized by a \fullyconnected \twolayer neural network by \cref{lem:fittableCPWL}.
        The gradients of all cells in this function can be obtained through elementary operations from the values~$X_1, \ldots, X_n$ and rational numbers.
        Furthermore, all breaklines can be described by equations with coefficients derivable from these same numbers.
        Thus, there exist weights and biases $\Theta \in \F^\ell$ for the neural network which realize function~$f$, showing that $\Theta \in V(N) \cap \F^\ell \neq \emptyset$.
        
        \item[\enquote{$\Longleftarrow:$}]
        Let $\Theta \in V(N) \cap \F^\ell$ be a set of weights and biases fitting all data points of the \trainNN instance~$N$.
        For each variable~$X$ of~$\Phi$, there is a canonical variable gadget corresponding to~$X$ whose slope~$s_X$ satisfies~$X = s_X - 1$.
        There is a unique hidden neuron~$v_i$ contributing the first breakline of that variable gadget.
        Using the notation from \cref{sec:geometry_of_NN}, the slope of this variable gadget is $a_{2,i} \cdot c_{i,1}$, because the variable gadget is horizontal (implying that $a_{1,i} = 0$) and its output is equal in both output dimensions (implying $c_{i,1} = c_{i,2}$).
        Thus, $X = a_{2,i} \cdot c_{i,1} - 1$, which is clearly in~$\F$.
        The same holds for all other variables, thus $V(\Phi) \cap \F^k \neq \emptyset$.
        \qedhere
    \end{description}
\end{proof}

Now \cref{thm:universality}, i.e., the algebraic universality of \trainNN, follows directly from the algebraic universality of \ETRINV (\cref{thm:ETRINV_universality}) combined with \cref{lem:algebraic_translation}.

\printbibliography

\end{document}